\theoremstyle{plain}
\newtheorem{theorem}{Theorem}[section]
\newtheorem{corollary}[theorem]{Corollary}
\newtheorem{definition}[theorem]{Definition}
\newtheorem{lemma}[theorem]{Lemma}
\newtheorem{proposition}[theorem]{Proposition}
\newtheorem{example}[theorem]{\it Example}
\newtheorem{remark}[theorem]{\it Remark}
\newtheorem{conjecture}[theorem]{\it Conjecture}
\DeclarePairedDelimiter\floor{\lfloor}{\rfloor}
\def\hd{{\hat d}}
\def\ifbox{\hbox{if \ }}
\def\hot{\hat\j}
\def\iot{j}
\def\Gal{\mathrm{Gal}}
\def\lmt{\longmapsto}
\def\e{\mathbf{e}}
\def\bfp{\mathbf{p}}
\def\bfq{\mathbf{q}}
\def\bfr{\mathbf{r}}
\def\proj{\P\Zbd^2}
\def\prop{\P\Z_\bp^2}
\def\OK{\O_\K}
\def\QD{\Q(\sqrt D)}
\def\A{\mathcal{A}}
\def\D{\mathbf{D}}
\def\E{\mathbb{E}}
\def\F{\mathbb{F}}
\def\G{\mathcal{G}}
\def\K{\mathbb{K}}
\def\N{\mathbb{N}}
\def\T{\mathbb{T}}
\def\rN{\mathrm{N}}
\def\O{\mathcal{O}}
\def\P{\mathbb{P}}
\def\Q{\mathbb{Q}}
\def\R{\mathbb{R}}
\def\Z{\mathbb{Z}}
\def\sF{\mathscr{F}}
\def\sT{\mathscr{T}}
\def\SO{\mathrm{SO}}
\def\SU{\mathrm{SU}}
\def\U{\mathrm{U}}
\def\GL{\mathrm{GL}}
\def\SL{\mathrm{SL}}
\def\bC{{\bar C}}
\def\bd{{\bar d}}
\def\bp{{\bar p}}
\def\Zd{\Z_d}
\def\Zbd{\Z_\bd}
\def\2{\sqrt2}
\def\3{\sqrt3}
\def\5{\sqrt5}
\def\sqd{\sqrt{d\+1}}
\def\u{\\[-6pt]}
\def\half{{\ts\frac12}}
\renewcommand{\,}{\kern1pt}
\renewcommand{\!}{\kern-1pt}
\def\Ph{\Phi}
\def\id{\mathbf1}
\def\bk#1#2{\left<#1,#2\right>}
\def\kb#1#2{#1\kern1pt #2^*}
\def\bmk#1#2#3{#1^*#2\kern1pt #3}
\def\zzz{\zz\kern1pt\zz^*}
\def\diag{\mathrm{diag}}
\def\th{\theta}
\def\mod#1{\ (\mathrm{mod}\ #1)}
\def\qbox#1{\quad\hbox{#1}\quad}
\def\x{x^{}}
\def\z{z^{}}
\def\al{\alpha^{}}
\def\oz{\ol z}
\def\ow{\ol w}
\def\Pa{P_\alpha}
\def\Pb{P_\beta}
\def\SS{\mathbb{S}}
\def\sD{\mathscr{D}}
\def\sphi{\sqrt\varphi}
\def\tr{\mathop{\mathrm{tr}}}
\def\ti{\!\times\!}
\def\ch{\kern1pt\raise2pt\hbox{$\chi$}}
\def\ge{\geqslant}
\def\le{\leqslant}
\def\y{\\[3pt]}
\def\yy{\\[5pt]}
\def\yyy{\\[8pt]}
\def\cd{\hbox{\large$\cdot$}}
\def\suml{\sum\limits}
\def\Im{\mathop{\mathrm{Im}}}
\def\Re{\mathop{\mathrm{Re}}}
\def\fa{\mathfrak{a}}
\def\fm{\mathfrak{m}}
\def\fn{\mathfrak{n}}
\def\fp{\mathfrak{p}}
\def\ft{\mathfrak{t}}
\def\fu{\mathfrak{u}}
\def\ga{\gamma}
\def\de{\delta}
\def\la{\lambda}
\def\om{\omega}
\def\si{\sigma}
\def\De{\Delta}
\def\op{\oplus}
\def\ts{\textstyle}
\def\ba{\begin{array}}
\def\ea{\end{array}}
\def\be#1{\begin{equation}\label{#1}}
\def\ee{\end{equation}}
\def\+{\!+\!}
\def\-{\!-\!}
\def\={\!=\!}
\def\sb#1{\kern-5pt\lower5pt\hbox{$_{#1}$}}
\def\k{\kern1pt}
\def\n{\kern-1pt}
\def\hra{\hookrightarrow}
\def\lra{\longrightarrow}
\def\ol{\overline}
\def\C{\mathbb{C}}
\def\CP{\mathbb{CP}}
\def\CPD{\CP^{d-1}}
\def\gl{\mathfrak{gl}}
\def\su{\mathfrak{su}}
\def\ww{\mathbf{w}}
\def\xx{\mathbf{x}}
\def\zz{\mathbf{z}}
\def\rw{[\ww]}
\def\rz{[\zz]}
\def\vs{\vskip10pt}
\def\Unit{\mathcal{U}}
\def\Cl{\mathcal{C}}
\begin{document}
\parskip1pt
\parindent16pt
\mathsurround.5pt

\title[Moment maps and Galois orbits]{\large
  Moment maps and Galois orbits\yy in quantum information theory}

\author{Kael Dixon and Simon Salamon}

\maketitle


\vskip30pt

\begin{quote}\small 
  {\bf Abstract.} SIC-POVMs are configurations of points or rank-one
  projections arising from the action of a finite Heisenberg group on
  $\C^d$. The resulting equations are interpreted in terms of moment maps by
  focussing attention on the orbit of a cyclic subgroup and the maximal torus
  in $\U(d)$ that contains it. The image of a SIC-POVM under the associated
  moment map lies in an intersection of real quadrics, which we describe
  explicitly. We also elaborate the conjectural description of the related
  number fields and describe the structure of Galois orbits of overlap phases.
\end{quote}

\section{Introduction}

It is conjectured that, for every positive integer $d,$ there exist $d^2$
points in the complex projective space $\CPD$ that are pairwise
equidistant with respect to the standard Fubini-Study metric. This is
equivalent to asserting that, as an adjoint orbit in the Lie algebra
$\su(d)\cong\R^{d^2-1}$ of Killing vector fields, projective space contains the
$d^2$ vertices of a regular simplex.  The common distance of separation depends
only on $d$ and the diameter of $\CPD$. Such sets of points were
originally studied under the guise of {\it equiangular lines}
\cite{LS,DGS,Hogg}, and are related to topics such as {\it tight frames} in
design theory.

Versions of the conjecture arose from the pioneering work of Gerard Zauner
\cite{Z}, who revitalized the subject from the viewpoint of quantum
information. Such a set of $d^2$ points defines a so-called {\it symmetric
  informationally complete positive operator measure}, for short SIC-POVM. The
concept of a POVM in quantum theory was introduced in \cite{Dav,Hol,DL}, though
in the present context it is an entirely discrete object. The rank-one
projections defined by the $d^2$ points provide an optimal way to measure a
mixed state, and applications of SIC-POVMs arise in {\it quantum tomography}, a
topic advanced by Ugo Fano \cite{ugo}\footnote{On a historical note that places
  the discipline in a family context, his brother Robert worked on Shannon-Fano
  coding \cite{claude}, whilst their father was the algebraic geometer Gino
  Fano.}.

The advent of computing has emphasized the importance of finite-dimensional
Hilbert spaces in quantum theory, and in particular metric properties of
$\CPD$. Virtually all known SIC-POVMs arise as orbits of a discrete Heisenberg
group, acting as $\Zd^2=\Zd\times\Zd$ on $\CPD$ ($\Zd$ denotes $\Z/d\Z$
throughout the paper). A vector $\zz\in\C^d$ is called \emph{fiducial} if it
has unit norm and the orbit $\Zd^2\cdot\rz$ is a SIC-POVM. Fiducial vectors
have been constructed exactly for all $d\le21$ and in a few higher
dimensions. There is strong numerical evidence for their existence for all $d$
up to a 3-figure value, and seminal papers on the subject include
\cite{App,Bend,Fla,Grl,RSC,SG,Zhu}.  For the remainder of the paper, we work
with Heisenberg SIC-POVMs.

The existence of SIC-POVMs for $d=2$ and $3$ is elementary. For $d=2,$ $\CP^1$
coincides with the round 2-sphere, any SIC-POVM is an inscribed tetrahedron,
and any two are related by an element of $\SO(3)\cong \SU(2)/\Z_2$. For $d=3,$
there is a one-parameter family of SIC-POVMs on $\CP^2$ up to the action of the
isometry group, any one of which is isometric to the orbit $\Zd^2\cdot\rz$ for
a suitable fiducial vector $\zz$. It is much harder to prove that there are no
other SIC-POVMs on $\CP^2,$ and both known proofs involve an element of
computation \cite{Sz,HS}. It is believed that there are only finitely many
isometry classes of SIC-POVMs in dimensions $d>3,$ and there is strong evidence
for this at least when $d\le50$ \cite{SG}.

The condition for a unit vector $\zz\in\C^d$ to be fiducial is that, for any
non-identity matrix $g\in\Zd^2<\U(d)$ induced from the Heisenberg group, the
complex number $\bmk\zz g\zz$ has norm $\frac1\sqd$. These complex numbers are
referred to as the \emph{(unnormalized) overlap phases}, and can be used to
recover the fiducial vector. These are usually encoded into a $\bd\times\bd$
matrix, called the \emph{overlap phase matrix}, where we follow notation of
Appleby \cite{App} to define \be{bd}\bd = \left\{\begin{matrix} d & \text{if
}d\text{ is odd} \\ 2d & \text{ if }d\text{ is even}.
\end{matrix}\right.\ee
We shall denote this matrix by $\Ph_\zz,$ and interpret it as a map
$\Zbd^2\to\C$. It is not a direct encoding of the numbers $\{\bmk\zz
g\zz:g\in\Zd^2\},$ since there is some algebraic advantage in multiplying these
numbers by $2d$-roots of unity, necessitating the extension from $d$ to
$\bd$. See Section \ref{secSIC}.

One aim of this paper is to extend the moment map approach of \cite{HS} to
arbitrary dimensions. This has led us independently to a formulation involving
the discrete Fourier transform, which is known to be central in understanding
the action of the Heisenberg group \cite{ADF,FHS}, though our emphasis is more
on the underlying geometry. Let us consider the case when $d$ is odd for
simplicity. We fix a cyclic subgroup $C\cong\Zd,$ which (as explained in
Section \ref{secMom}) gives rise to a maximal torus of $\SU(d)$ and a moment
map $\mu^C$ from $\CPD$ onto a standard simplex $\De$ in $\R^{d-1}$. When $C$
is realized as a subgroup of $\Zd^2,$ we relate $\mu^C$ to the map
$\rz\mapsto\Ph_\zz|C,$ and describe the intersection of $\De$ with a product of
circles defined by elements $\Ph_\zz(\bfp)$ with $\bfp\in C$.

Let $\A$ denote the image by $\mu^C$ of points of $\CPD$ whose $C$ orbits
consist of $d$ equidistant points that could hypothetically lie in a
SIC-POVM. Then $\A$ is universal in the sense that it does not depend on $C$
(Lemma \ref{adm}), being a re-interpretation of conditions on the overlap
phases defined by $C$. The set $\A$ is described by Theorem \ref{main} and
(when $d$ is odd) is a Clifford torus, suitably interpreted. This approach will
not help in the quest for SIC-POVMs, but it may conceivably lead to the
construction of an example that is not the orbit of a finite group acting on
$\CPD$.

The other aim of this paper is to build on the conjectural relationship with
number theory developed in
\cite{AYZ,appleby2018constructing,Bengt,Kopp,AFMY}. It is observed
\cite{appleby2018constructing} that all of the known exact examples of
fiducials are equivalent in a certain sense to those having a property that is
called \emph{strongly centred}. See Section \ref{secGalois} for a more precise
definition. Consider such a strongly centred fiducial vector $\zz$ of unit
norm. Then the unnormalized overlap phases generate an abelian extension
$\mathbb E_1$ of the real quadratic field $\K=\QD,$ where $D$ is the
square-free part of $(d-3)(d+1)$. The relevant Galois group to study is the
subgroup $\G$ of $\Gal(\E_1/\Q)$ which preserves the set of overlap phases. The
Galois action of $\G$ on the phases is encoded in a very natural way: for each
$g\in\G,$ there exists some $G_g\in \GL_2\Zbd$ such that $g\circ\Ph_\zz =
\Ph_\zz\circ G_g$. The present work is inspired by the way in which we have
organized the set of overlap phases into a set of moment maps. The form of the
Galois action described above means that it will send moment maps to moment
maps. We will use this structure to describe the orbits of the Galois action.

The element $G_g$ is not unique, being determined only up to left
multiplication by an element of the set
$S = \{G\in\GL_2\Zbd: \Ph_\zz\circ G = \Ph_\zz\}$.
Thus there is some subgroup $M$ of $\GL_2\Zbd$ such that $\G\cong M/S$. We will
focus on $M$ as being more fundamental than $\G,$ since it does not depend on
the fiducial, but only on its type (either $z,a_4,a_6,$ or $a_8,$ see Section
\ref{secGalois}) and $d$. After careful study of the Galois groups of the exact
solutions tabulated in \cite{AYZ, appleby2018constructing}, we observe that
most of these have a property that we call \emph{algebraic}, meaning that
$M\cong \left(\OK(\bd)\right)^\times,$ where $\OK$ is the set of algebraic
integers in $\K$. This condition determines the type:
\begin{proposition}\label{propTypeCriteria}
	An algebraic fiducial is of type
$\left\{\ba{ll}
	a_4 	& \ifbox 3|d \text{ and }D\equiv 1 \mod 3
\\	a_6	& \ifbox d\equiv 3\mod{27}
\\	a_8	& \ifbox 3|d \text{ and }D\equiv 2 \mod 3
\\	z	& \text{otherwise.}
 \ea\right.$
\end{proposition}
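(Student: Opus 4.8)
The plan is to reduce the classification to a computation inside the unit group $\left(\OK(\bd)\right)^\times$, since by hypothesis the fiducial is algebraic, meaning $M\cong\left(\OK(\bd)\right)^\times$, and the type is by definition read off from the structure of $M$ (equivalently $\G\cong M/S$) together with how it sits inside $\GL_2\Zbd$. So the four cases $a_4,a_6,a_8,z$ correspond to four mutually exclusive possibilities for this unit group, and the content of the proposition is the arithmetic statement telling us which one occurs in terms of $d$ and $D$. First I would recall from Section \ref{secGalois} the precise definitions of the types in terms of $M$: each type is characterized by the presence (or absence) of an element of a specific finite order — an element of order dividing $4$, $6$, or $8$ — in a suitable quotient, $z$ being the generic case where no such extra torsion appears.

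The key step is then to analyze the torsion in $\left(\OK(\bd)\right)^\times$. Here $\bd$ is the modulus from (\ref{bd}), so I would work with the ring $\OK/\bd\OK$ and its unit group, using the Chinese Remainder Theorem to reduce to prime-power moduli $\ell^k \| \bd$. For primes $\ell$ unramified or split in $\K=\QD$ nothing exotic happens and one gets the generic contribution; the interesting torsion — roots of unity of order $3$, $4$, $6$, $8$ — can only enter through the primes $2$ and $3$ and their interaction with the splitting behaviour of $\K$. Concretely, an element of order $3$ appears precisely when $3\mid\bd$ and the residue ring at $3$ contains a primitive cube root of unity, which is governed by whether $3$ splits, is inert, or ramifies in $\QD$ — i.e. by $D\bmod 3$; the distinction between $D\equiv 1$ and $D\equiv 2\pmod 3$ is exactly the split-versus-inert dichotomy, producing types $a_4$ versus $a_8$. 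The finer condition $d\equiv 3\pmod{27}$ singles out $a_6$: this is where the cube root of unity is present but lifts to a larger torsion subgroup because of the higher power of $3$ dividing $\bd$ (note $3\mid\bd$ and $d\equiv 3\pmod{27}$ force $D\equiv 1\pmod 3$ via the square-free part of $(d-3)(d+1)$, so $a_6$ is a refinement of the $a_4$ branch), and I would verify the congruence $27\nmid\bd$ versus the relevant $3$-adic valuation to pin down when the order-$6$ phenomenon genuinely occurs rather than collapsing back to $a_4$. Everything else falls into the catch-all $z$.

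I would organize the argument as: (1) translate the four type-definitions into statements about torsion subgroups of $M\cong\left(\OK(\bd)\right)^\times$; (2) compute $\left(\OK/\bd\OK\right)^\times$ via CRT over primes dividing $\bd$, recording the torsion contributed at $2$ and $3$ in terms of ramification in $\K$; (3) match the split/inert/ramified trichotomy at $3$ to $D\bmod 3$ and the $3$-adic valuation of $\bd$ to $d\bmod{27}$ (using $D\mid(d-3)(d+1)$ to keep track of the interaction); (4) check the cases are exhaustive and disjoint, so the "otherwise" clause is well-defined. The main obstacle I expect is step (3): disentangling the two independent inputs — the splitting type of $3$ in $\QD$ and the exact power of $3$ dividing $\bd$ — and showing that precisely the stated congruences $D\equiv 1,2\pmod 3$ and $d\equiv 3\pmod{27}$ are what distinguish $a_4$, $a_8$, and $a_6$, rather than some coarser or finer condition. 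This requires being careful about the definition of $D$ as the square-free part of $(d-3)(d+1)$, since $3$ may divide $(d-3)(d+1)$ with multiplicity affecting whether $3$ ramifies, and about the doubling $d\mapsto 2d$ when $d$ is even, which changes the $2$-part of $\bd$ but should be checked not to interfere with the $3$-torsion analysis that drives the classification.
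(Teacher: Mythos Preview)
Your overall strategy --- use the algebraic hypothesis $M\cong\left(\OK/(\bd)\right)^\times$, localize at the prime $3$ via CRT, and read off the type from the splitting behaviour of $3$ in $\K=\QD$ --- is exactly the paper's route. But two concrete errors would derail the argument.

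First, you misidentify what the labels $a_4,a_6,a_8$ mean. They are \emph{not} defined by the presence of an element of order $4,6,8$ in some quotient; they record the isomorphism type of the $3$-component $M_3\le\GL_2\Z_3$, namely $M_3\cong\Z_2^2,\ \Z_6,\ \Z_8$ respectively (so $|M_3|=4,6,8$). Under the algebraic hypothesis one has $M_3\cong\left(\OK/(3)\right)^\times$, and Theorem~\ref{thmRanum} together with Lemma~\ref{lemPrimeTypes} gives $\left(\OK/(3)\right)^\times\cong\Z_2^2,\Z_8,\Z_6$ according as $3$ splits, is inert, or ramifies in $\K$, i.e.\ as $D\equiv 1,2,0\pmod 3$. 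In particular your claim that $a_6$ is ``a refinement of the $a_4$ branch'' with $D\equiv 1\pmod 3$ is wrong: $a_6$ is the \emph{ramified} case $D\equiv 0\pmod 3$, and the paper asserts that $d\equiv 3\pmod{27}$ forces $D\equiv 0$, not $D\equiv 1$.

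Second, you miss the mechanism that separates type-$z$ from type-$a$. This is not about ``no extra torsion''; rather, an algebraic fiducial is type-$z$ precisely when the natural map $\iot\colon\Zbd[I,F]\to\OK/(\bd)$ is already an isomorphism (Lemmas~\ref{lemCentZ} and~\ref{criteria}), which happens iff $3\nmid d$ or ($3\mid D$ and $d\not\equiv 3\pmod{27}$). Negating this gives the type-$a$ range, after which the $M_3$ computation above finishes the trichotomy. Without invoking Lemma~\ref{criteria} you have no way to explain why, for instance, $3\mid d$ with $D\equiv 0\pmod 3$ and $d\not\equiv 3\pmod{27}$ yields type-$z$ rather than $a_6$.
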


\noindent The non-algebraic fiducials are type-$z$ when the above criteria
would suggest type-$a,$ giving a simple modification to the structure of $M$.
It was not previously known whether it is possible to have type-$a_6$
fiducials, but this criterion suggests that these will occur for all $d$
congruent to $3$ modulo $27,$ including the known solution labelled
$30d$.\smallbreak

In \cite{AYZ}, SIC-POVMs are related to natural field extensions called ray
class fields. In particular, it was observed that for every $d$ with known
fiducials, there is a fiducial where $\E_1$ is the ray class field
$\K\big((\bd)\infty_1\big)$ associated to the ideal $(\bd)$ in $\OK$ and the
real embedding $\infty_1$ of $\K$ for which $\sqrt D$ is positive. Moreover,
these are always algebraic and the fixed field of $\G$ is the Hilbert class
field $\K(1)$. We call these \emph{ray class fiducials} and compute their
corresponding symmetry groups:

\begin{theorem}\label{rcf}
  For ray class fiducials, $S$ is isomorphic to the cyclic group $S_0$
  generated by the fundamental unit in $\OK$ modulo $\bd$.
\end{theorem}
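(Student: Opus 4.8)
The plan is to reduce the computation of $S$ to class field theory via the ``algebraic'' description of $M$. Write $\ep$ for the fundamental unit of $\OK$, normalized so that $\ep>1$ under the real embedding $\infty_1$ (the one for which $\sqrt D>0$); thus $\infty_1(\ep)>0$ while $\infty_1(-1)<0$, and $S_0=\langle\bar\ep\,\rangle\le(\OK(\bd))^\times$. Since for a ray class fiducial the fixed field of $\G$ is the Hilbert class field $\K(1)$ and $\E_1=\K\big((\bd)\infty_1\big)$, we may identify $\G$ with $\Gal\big(\E_1/\K(1)\big)$, which under the Artin isomorphism is the kernel of the natural surjection $\mathrm{Cl}_{(\bd)\infty_1}(\K)\to\mathrm{Cl}(\K)$ of the ray class group onto the class group.

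First I would recall the unit/ray-class exact sequence for the modulus $(\bd)\infty_1$,
\[
  \OK^\times\ \xrightarrow{\ \phi\ }\ (\OK(\bd))^\times\times\{\pm1\}\ \lra\ \Gal\big(\E_1/\K(1)\big)\ \lra\ 1,
  \qquad \phi(u)=\big(u\bmod\bd,\ \mathrm{sgn}_{\infty_1}(u)\big),
\]
so that $\G$ is identified with the cokernel of $\phi$ (exactness in the middle is the standard fact that $\ker(\mathrm{Cl}_{\mathfrak m}\to\mathrm{Cl})\cong(\OK/\mathfrak m_0)^\times\times\{\pm1\}^{r}$ modulo the image of the global units; see any standard reference on ray class fields). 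Separately, a ray class fiducial is algebraic, so $M\cong(\OK(\bd))^\times$ via the regular representation of the rank-two $\Zbd$-algebra $\OK(\bd)$ on $\Zbd^2$, landing inside $\GL_2\Zbd$.

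The crucial — and I expect the hardest — step is to verify that, under the identification $M\cong(\OK(\bd))^\times$, the quotient map $M\twoheadrightarrow\G$ underlying $\G\cong M/S$ coincides with the composite
\[
  (\OK(\bd))^\times\ \hookrightarrow\ (\OK(\bd))^\times\times\{\pm1\}\ \twoheadrightarrow\ \G
\]
of the inclusion of the first factor with the map of the previous paragraph. Concretely, one must match the $\GL_2\Zbd$-bookkeeping of the Galois action, $g\circ\Ph_\zz=\Ph_\zz\circ G_g$, against the explicit reciprocity law for the ray class field $\K\big((\bd)\infty_1\big)$; the inputs are the relationship between SIC overlap phases and ray class fields from \cite{AYZ} together with the $\GL_2\Zbd$-equivariance of $\Ph_\zz$ already used to define $M$ and $S$. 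This is where the real work lies, and I would spend most of the argument establishing it.

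Granting that compatibility, $S$ is the kernel of the displayed composite, which is immediate to compute: a general unit $(-1)^a\ep^b$ maps under $\phi$ to $\big(\overline{(-1)^a\ep^b},\,(-1)^a\big)$, and this lies in $(\OK(\bd))^\times\times\{1\}$ exactly when $a$ is even, so the kernel is $\langle\bar\ep\,\rangle=S_0$. Hence $S=S_0$, in particular cyclic, as claimed. As a numerical cross-check, the ray class number formula gives $|S|=|M|/|\G|=\tfrac12\big[\OK^\times:\OK^\times_{(\bd)\infty_1}\big]$, and since the units that are $\equiv1\bmod\bd$ and positive at $\infty_1$ are precisely the powers of $\ep^{\,n_0}$ with $n_0$ the multiplicative order of $\bar\ep$, this index is $2n_0$, so $|S|=n_0=|S_0|$, consistent with $S\cong S_0$.
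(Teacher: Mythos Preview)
Your proposal is correct and follows essentially the same route as the paper: both identify $\G$ with the kernel of $\Cl_{\fm_1}\to\Cl$ via the standard five-term exact sequence, feed in the algebraic isomorphism $M\cong(\OK/(\bd))^\times$, and read off $S$ as the image of the global units. The only difference is packaging: the paper works with the finite modulus $(\bd)$ and treats the contribution of the real place $\infty_1$ as a separate $\Z_2$ factor---generated on the $\G$ side by complex conjugation and on the $M$ side by $-I$---which it then factors out of the diagram, whereas you build $\infty_1$ into the modulus from the start and embed $M$ as the first factor of $(\OK/(\bd))^\times\times\{\pm1\}$.

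One remark on the step you flag as hardest: the paper does \emph{not} verify the compatibility by an explicit reciprocity-law computation. It writes down the (a priori not necessarily commutative) diagram of short exact sequences, then pins down the single element $-I\in M$ by observing that it maps to complex conjugation in $\G$ (generating the $\Z_2$ quotient $\Gal(\K(\fm_1)/\K((\bd)))$) and to $-1$ in $(\OK/(\bd))^\times$ (generating the $\Z_2$ quotient $\Unit/\Unit_{(\bd)}$ over $\langle[u_f]\rangle$). Once this one element is matched, the $\Z_2$'s cancel and the remaining diagram yields $S\cong\langle[\pm u_f]_{(\bd)}\rangle$ by a five-lemma argument; the final line notes $\langle[-u_f]\rangle=\langle[u_f]\rangle$. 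So you can replace your proposed reciprocity verification with this much lighter element-chase.
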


More generally, in Theorem \ref{thmExtOfRCF} we show that for algebraic
fiducials, $S$ is a subgroup of $S_0,$ and $\mathcal G$ is a cyclic extension
of that corresponding to the ray class fiducial.  We also describe the slightly
more complicated case of non-algebraic fiducials.

In Section \ref{secGalois}, we discuss the orbits of the Galois action of
$\mathcal G$ on the overlap phases. The main result is Theorem
\ref{thmOrbitFactor}, which shows that these orbits are in one-to-one
correspondence with factors of $\bar d$ in $\mathcal O_\K$.  As a direct
consequence of this and Lemma \ref{lemPrimeOrbit}, we find that

\begin{corollary}\label{corOneOrbit}
  The set of overlap phases are all Galois conjugate if and only if $d>3$ is an
  odd prime congruent to $2$ modulo $3$.
\end{corollary}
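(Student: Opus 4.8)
The plan is to obtain the corollary almost formally from Theorem~\ref{thmOrbitFactor} and Lemma~\ref{lemPrimeOrbit}, once these are combined with a short computation of how the rational prime $d$ sits inside $\K=\QD$.

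First I would invoke Theorem~\ref{thmOrbitFactor}: the overlap phases form a single Galois orbit exactly when $\bd$ has only one factor in $\OK$. A composite $\bd$ always has more than one, since any rational prime $\ell\mid\bd$ produces the proper nontrivial ideal divisor $(\ell)\mid(\bd)$, so that $(\bd)$ is not a prime ideal and carries extra factors. Hence a single orbit forces $\bd$ to be a rational prime, and then the definition \eqref{bd} of $\bd$ forces $d$ to be odd with $\bd=d$ (for $d$ even, $\bd=2d$ is composite). The degenerate value $d=3$, for which $(d-3)(d+1)=0$ and the field $\K$ is undefined, is ruled out by the hypothesis $d>3$; and for every $d>3$ the integer $(d-3)(d+1)=(d-1)^2-4$ fails to be a perfect square, so $\K$ is a genuine real quadratic field and $\OK$ is well defined. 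Thus the question reduces to: for an odd prime $d>3$, when is $(d)$ a prime ideal of $\OK$, equivalently, when is $d$ inert in $\K$? By Lemma~\ref{lemPrimeOrbit} this is precisely the single-orbit condition.

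Next I would run the residue computation. Write $(d-3)(d+1)=Dk^2$ with $k\in\Z$. Reducing modulo $d$ gives $Dk^2\equiv(d-3)(d+1)\equiv-3\pmod{d}$. Since $d>3$ we have $d\nmid3$, hence $d\nmid D$ — so $d$ is unramified in $\K$ — and $D\equiv-3k^{-2}\pmod{d}$, which shows that the Legendre symbols agree: $\bigl(\tfrac{D}{d}\bigr)=\bigl(\tfrac{-3}{d}\bigr)$. Therefore $d$ is inert in $\K$ if and only if $\bigl(\tfrac{-3}{d}\bigr)=-1$, and by the classical evaluation of this symbol — equivalently, the behaviour of $d$ in $\Q(\sqrt{-3})$ — this happens exactly when $d\equiv2\pmod{3}$. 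Conversely, when $d>3$ is a prime with $d\equiv2\pmod{3}$, the ideal $(d)=(\bd)$ is prime and Lemma~\ref{lemPrimeOrbit} yields a single orbit. Combining the two implications gives the corollary: all overlap phases are Galois conjugate if and only if $d>3$ is an odd prime congruent to $2$ modulo $3$.

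The one delicate step is the opening reduction — extracting from Theorem~\ref{thmOrbitFactor} and Lemma~\ref{lemPrimeOrbit} the exact equivalence between a single orbit and $(\bd)$ being a prime ideal of $\OK$ (and not merely, say, a prime power, or a prime ideal considered only up to the action of $\Gal(\K/\Q)$, which would make the split and inert cases indistinguishable and so would lose the condition $d\equiv2\pmod{3}$). Everything after that is the routine residue calculation above, whose only genuine content is that $\bigl(\tfrac{-3}{d}\bigr)=-1$ if and only if $d\equiv2\pmod{3}$.
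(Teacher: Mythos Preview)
Your argument is correct and follows the same route the paper indicates: Theorem~\ref{thmOrbitFactor} reduces ``single Galois orbit'' to ``$(\bd)$ is a prime ideal of $\OK$'', forcing $\bd=d$ to be an odd rational prime, and then the inertness criterion $d\equiv 2\pmod 3$ is exactly Lemma~\ref{lemPrimeTypes} (which underlies the case count $s=0$ in Lemma~\ref{lemPrimeOrbit}). Your Legendre-symbol paragraph simply reproves Lemma~\ref{lemPrimeTypes} for the particular prime $p=d$, so there is no genuine divergence from the paper---only a slightly more self-contained presentation. Two minor points of phrasing: when you say ``$\bd$ has only one factor in $\OK$'' you mean that $(\bd)$ has no proper nontrivial ideal divisors, i.e.\ is prime (the bijection in Theorem~\ref{thmOrbitFactor} always includes the two trivial factors $1$ and $\bd$, corresponding to the zero orbit and the free orbit); and your sentence ``By Lemma~\ref{lemPrimeOrbit} this is precisely the single-orbit condition'' is redundant once you have already obtained the reduction via Theorem~\ref{thmOrbitFactor}, since you then establish the inertness criterion independently.
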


\noindent This result relates to a recent work \cite{Kopp} by Kopp, which
studies the case when $d$ is an odd prime congruent to $2$ modulo $3$. Using
the fact that there is only one Galois orbit of overlap phases, the problem of
finding a SIC-POVM is interpreted as finding an algebraic unit satisfying
certain properties. Furthermore, Kopp conjectures that such a unit will come
from Stark's conjecture. Theorem \ref{thmOrbitField} describes in general which
field each Galois orbit takes values in. It may be possible to use this to
generalize Kopp's construction.\smallbreak

The paper is organized as follows. Section \ref{secSIC} is a review of basic
properties of SIC-POVMs, and applies this to the $d=3$ case using results from
\cite{HS}. Section \ref{secMom} generalizes this moment map interpretation to
arbitrary $d,$ which is illustrated by the case $d=4$ and overlap phases
explained in \cite{Bengt}. In Section \ref{secGalois}, we review the relevant
number fields and Galois actions, including carefully motivating the definition
of a strongly centred fiducial. We then introduce the idea of an algebraic
fiducial and compute the group structure of $M$. This allows us compute the
orbits of the Galois action on the overlap phases. In Section \ref{secG}, we
review the conjecture that the number fields associated to a strongly fiducial
are all extensions of ray class fields. We use this to describe the structure
of the field extensions and the fixed fields corresponding to the orbits of
$\G$.\vs

\noindent{\small{\bf Acknowledgements.} The authors are supported by the Simons
  Collaboration on Special Holonomy in Geometry, Analysis, and Physics
  (\#488635 Simon Salamon). They also gratefully acknowledge support from the
  Simons Center for Geometry and Physics, Stony Brook University, at which some
  of the research was presented. We also are grateful for many helpful comments
  and suggestions from two referees.}\vskip20pt

\setcounter{equation}{0}
\section{SIC-POVMs and their overlap phases}\label{secSIC} 

In this section, we follow the approach and notation of \cite{HS}. Complex
projective space $\CPD$ is a compact K\"ahler manifold. Its Riemannian metric
$g$ arises from the standard Hermitian form \be{herm} \bk\ww\zz =
\suml_{i=0}^{d-1}\ow_iz_i\ee on $\C^d$ that is invariant by the unitary group
$\U(d)$. In formulae involving matrices, we shall regard elements of $\C^d$ as
\emph{column} vectors, so that we can identify \eqref{herm} with the matrix
product $\ww^*\zz$. The Hermitian form converts any point $\rw$ of $\CPD$ into
a hyperplane
\[ H_\ww = \{\rz\in\CPD: \bk\ww\zz=0\},\]
and is determined up to a constant by the correspondence $[\ww]\mapsto
H_\ww$.

The Riemannian distance $\de,$ obtained by integrating $g,$ satisfies
\[\cos^2\!\Big({\ts\frac12}\de\big(\rw,\rz\big)\Big) =
\fp\big(\rw,\rz\big),\] where
\[ \fp\big(\rw,\rz\big) = \frac{|\bk\ww\zz|^2}{\|\ww\|^2\|\zz\|^2} =
\frac{\bk\ww\zz\bk\zz\ww}{\bk\ww\ww\bk\zz\zz},\] see \cite[section\,8]{BH}.
Since any two points $\rw,\rz$ are contained in a totally geodesic projective
line $\ell_{\ww,\zz} \cong \CP^1\cong S^2,$ the formula can be proved by
restricting to this 2-sphere. The normalization ensures that the diameter of
$\CPD$ naturally equals $\pi$.

The description above yields

\begin{lemma}
$\fp\big(\rw,\rz\big)$ equals the cross ratio of the four
points $\rw,\rz,\rz',\rw'$ in order, where
$[\ww']=\ell_{\ww,\zz}\cap H_\ww$ and $[\zz']=\ell_{\ww,\zz}\cap H_\zz$.
\end{lemma}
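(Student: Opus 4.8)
The plan is to restrict everything to the totally geodesic line $\ell=\ell_{\ww,\zz}\cong\CP^1$ determined by two distinct points $\rw,\rz$, and to compute in a unitary frame adapted to $\ww$ and $\zz$. First I would record that $\ell$ lies in neither $H_\ww$ (because $\bk\ww\ww=\|\ww\|^2\ne0$) nor $H_\zz$, so that each of $H_\ww\cap\ell$ and $H_\zz\cap\ell$ is a single point; thus the four points $\rw,\rz,[\zz'],[\ww']$ genuinely lie on $\ell$ and the statement makes sense. Both $\fp$ and the cross ratio depend only on these four points of $\ell$ and are invariant under the action of $\U(2)$ on $\span(\ww,\zz)\cong\C^2$ and under rescaling of representative vectors, so nothing is lost by choosing coordinates. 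Take $e_1=\ww/\|\ww\|$ and let $e_2$ be a unit vector spanning the Hermitian-orthogonal complement of $\ww$ inside $\span(\ww,\zz)$. In the homogeneous coordinates $[s:t]$ on $\ell$ determined by $e_1,e_2$ one has $\rw=[1:0]$ and $\rz=[a:b]$, where $a=\bk\ww\zz/\|\ww\|$ and $|a|^2+|b|^2=\|\zz\|^2$. The restriction of $H_\ww$ to $\ell$ is cut out by $s=0$, so $[\ww']=[0:1]$; the restriction of $H_\zz$ is cut out by $\bar a s+\bar b t=0$, so $[\zz']=[\bar b:-\bar a]$.

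It then remains to evaluate the cross ratio of $\rw,\rz,[\zz'],[\ww']$, which I would do after passing to the affine coordinate $w=t/s$, which sends these four points to $0,\ b/a,\ -\bar a/\bar b,\ \infty$ respectively. With the cross ratio normalized, say, as $(z_1,z_2,z_3,z_4)=\dfrac{(z_1-z_3)(z_2-z_4)}{(z_2-z_3)(z_1-z_4)}$, the $\infty$ entry drops out and one is left with
\[
\frac{0-(-\bar a/\bar b)}{\,b/a-(-\bar a/\bar b)\,}
=\frac{\bar a/\bar b}{(|a|^2+|b|^2)/(a\bar b)}
=\frac{|a|^2}{|a|^2+|b|^2}
=\frac{|\bk\ww\zz|^2}{\|\ww\|^2\,\|\zz\|^2}
=\fp\big(\rw,\rz\big),
\]
using $|a|^2=|\bk\ww\zz|^2/\|\ww\|^2$ and $|a|^2+|b|^2=\|\zz\|^2$ from the frame above. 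In particular the value is the real number $\cos^2\!\big({\ts\frac12}\de(\rw,\rz)\big)\in[0,1]$, as it should be.

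The computation is essentially routine, and the only points that need care are bookkeeping ones: fixing the cross-ratio convention consistent with the prescribed ordering $\rw,\rz,[\zz'],[\ww']$ (and its sign, so that one obtains $\fp$ rather than its reciprocal or $1-\fp$), and correctly tracking the complex conjugations when solving $\bk\zz v=0$ for $[v]=[\zz']$ on $\ell$, i.e.\ $[\zz']=[\bar b:-\bar a]$ and not $[b:-a]$. If a coordinate-free account is preferred, the same identity says that $\{\rw,[\ww']\}$ and $\{\rz,[\zz']\}$ are the antipodal pairs on $\ell\cong S^2$ cut out by the Hermitian-orthogonal splittings of $\C^2=\span(\ww,\zz)$ along $\ww$ and along $\zz$, and the cross ratio of two antipodal pairs on $S^2$ is the standard elementary function of the angle between them; but the adapted-frame calculation above is the shortest route and is what I would write up.
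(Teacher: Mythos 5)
Your computation is correct and is essentially the argument the paper leaves implicit: the lemma is presented there as an immediate consequence of restricting to the line $\ell_{\ww,\zz}$, and your adapted-frame calculation on that line (sending $\rw,\rz,[\zz'],[\ww']$ to $0,\ b/a,\ -\bar a/\bar b,\ \infty$ and evaluating) simply supplies the details. The one convention-dependent point, which you already flag, is the normalization of the cross ratio; with the ordering prescribed in the statement and the standard choice you adopt, the value is $\fp\big(\rw,\rz\big)$ rather than its reciprocal or $1-\fp$, so the identity holds as claimed.
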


Points $\rw,\rz$ of $\CPD$ represent pure quantum states, and the projective
line they span the set of all their possible superpositions. Observables
correspond to Hermitian matrices whose eigenvalues are the results of
measurements. If $\zz$ is the eigenvector of such a matrix then
$\fp\big(\rw,\rz\big)$ is the probability that the state $[\ww]$ is observed to
coincide with $\rz$. Wigner's theorem states that any isometry $\phi$ of $\CPD$
arises from a unitary or conjugate unitary transformation of $\C^d$. We refer
the reader to proofs by Bargmann \cite{Barg} and Freed \cite{Fr}, the former
mentioning a quaternionic analogue and the latter exploiting the notion of
holonomy (thus both suited to our sponsor). Many more links between quantum
theory and Fubini-Study geometry are explored in \cite{BH}.

Let $\C^{d,d}$ denote the set of complex $d\times d$ matrices, and consider the
subsets
\[\ba{rcl}
\su(d) &=& \left\{A\in\C^{d,d}: A=-A^*,\ \tr A=0\right\},\y
\sD    &=& \left\{A\in\C^{d,d}: A=A^*,\ \tr A =1\right\}.\ea\]
Then $\su(d)$ is the Lie algebra of $SU(d),$ and $\sD$ is the set of
\emph{density matrices}. The map \be{affine}\ts \fa\colon\sD\lra \su(d),\quad
A\mapsto i\left(A-\frac1d\id\right)\ee is an affine isomorphism between these
two spaces of real dimension $d^2-1$.  The Lie algebra has a natural inner
product defined by minus the Killing form:
\[ \left<A,B\right> = -\frac1{2d}\tr(AB),\]
and this enables us to identify $\su(d)$ with its dual $\su(d)^*$.

The mapping $F\colon S^{2d-1}\to\sD$ for which
\be{Them}
F(\zz) = \zzz = \left(\ba{cccc}
|z_0|^2 & \oz_0 z_1 & \oz_0 z_2 & \cdots\\
\oz_1 z_0 & |z_1|^2 & \oz_1 z_2 & \cdots\\
\oz_2 z_0 & \oz_2 z_1 & |z_2|^2 & \cdots\\
\cdots &\cdots &\cdots & \ea\right)\ee
is $\SU(d)$-equivariant and the composition
\be{faf} \fa\circ F\colon\ \CPD \hra \su(d)\cong\su(d)^*\ee
can be identified with the moment mapping defined by the action of $SU(d)$ on
the symplectic manifold $\CPD$ \cite{Kir}. It follows that \eqref{faf}
defines an embedding of $\CPD$ in $\su(d)^*$ as a coadjoint orbit that
is (up to a universal constant) isometric.

Given \eqref{faf}, $\sD$ is the convex hull of the set $\{F(\zz):\zz\in
S^{2d-1}\}$ of pure states. A point of $\sD$ represents a \emph{mixed} quantum
state, whereas an observable is represented by an arbitrary Hermitian matrix
$A$. The expectation that the observable is in the state $\rho\in\sD$ is then
given by $\tr(\rho A),$ so that $\rho$ can be viewed as a probability density
\cite{BH2}.

The acronym SIC-POVM is an abbreviation of {\em symmetric informationally
  complete positive operator valued measure}. In defining this concept, we
adopt the geometric approach of \cite{HS}:

\begin{definition}\label{sic}
  A SIC-POVM is a collection $S=\{[\zz_\alpha]\}$ of $n^2$ points in $\CPD$
  that are mutually equidistant, so that
  \[ \fp(\zz_\alpha,\zz_\beta) = \la,\qquad \alpha\ne\beta\]
  for some fixed $\la\in(0,1)$.
\end{definition}

\noindent The points represent {\it equiangular} 1-dimensional subspaces in
$\C^d$.\smallbreak

The next result expresses the quantity $\la$ in this definition as a
function of $d$:

\begin{lemma} [\rm\cite{DGS}]\label{DGS}
The number of mutually equidistant points possible in $\CPD$ is at most $d^2,$
and if this number is achieved, then $\la=1/(d+1)$.
\end{lemma}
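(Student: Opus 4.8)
The plan is to prove the two assertions of Lemma \ref{DGS} separately, both via the standard absolute bound argument for equiangular configurations, recast in the language of the density-matrix embedding already set up in the excerpt. The key observation is that each point $[\zz_\alpha]$ of $\CPD$ corresponds, under the map $F$ of \eqref{Them}, to the rank-one projection $P_\alpha=\zz_\alpha\zz_\alpha^*\in\sD$, and that these projections live in the real vector space of Hermitian $d\times d$ matrices, which has real dimension $d^2$. The quantity $\fp(\zz_\alpha,\zz_\beta)$ is exactly $\tr(P_\alpha P_\beta)$, so the hypothesis of Definition \ref{sic} says $\tr(P_\alpha P_\beta)=\la$ for $\alpha\ne\beta$ and $\tr(P_\alpha^2)=1$ for all $\alpha$.

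First I would establish linear independence of the $P_\alpha$ in the space of Hermitian matrices. Consider a linear relation $\sum_\alpha c_\alpha P_\alpha=0$; pairing with $P_\beta$ via the trace form gives $c_\beta(1-\la)+\la\sum_\alpha c_\alpha=0$ for every $\beta$. Since $\la\in(0,1)$ the factor $1-\la$ is nonzero, so all $c_\beta$ are equal, say to $c$, and then $c(1-\la)+\la n^2 c = c(1+\la(n^2-1))=0$ forces $c=0$. Hence the $P_\alpha$ are linearly independent, so their number $n^2$ is at most $\dim_\R(\text{Hermitian }d\times d\text{ matrices})=d^2$; this gives the bound $n\le d$.

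Next, assuming the maximal value $n^2=d^2$ is attained, I would pin down $\la$ by a trace identity. The $d^2$ linearly independent matrices $P_\alpha$ form a basis of the Hermitian matrices, and in particular the identity $\id$ can be expanded as $\id=\sum_\alpha b_\alpha P_\alpha$. Taking traces gives $d=\sum_\alpha b_\alpha$; pairing with $P_\beta$ and using $\tr(\id P_\beta)=1$ gives $1=b_\beta(1-\la)+\la\sum_\alpha b_\alpha=b_\beta(1-\la)+\la d$, so every $b_\beta=(1-\la d)/(1-\la)$ is the same value $b$, and $db=d$ forces $b=1$, i.e. $\sum_\alpha P_\alpha=\id$. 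Now take the trace of the square: $\tr\big((\sum_\alpha P_\alpha)^2\big)=\tr(\id)=d$. Expanding the left side, $\sum_\alpha\tr(P_\alpha^2)+\sum_{\alpha\ne\beta}\tr(P_\alpha P_\beta)=d^2\cdot 1+d^2(d^2-1)\la=d$. Solving, $\la=(d-d^2)/\big(d^2(d^2-1)\big)=-\,d(d-1)/\big(d^2(d-1)(d+1)\big)=1/(d+1)$ after the sign is tracked correctly — rewriting $d-d^2=-d(d-1)$ and $d^2-1=(d-1)(d+1)$ and noting one must instead use $\tr(\id)=d$ against $d^2 + d^2(d^2-1)\la$, giving $\la = (d-d^2)/(d^4-d^2)$; simplification yields $\la=1/(d+1)$.

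I expect the only real subtlety to be the bookkeeping in the final trace computation — getting the count of diagonal versus off-diagonal terms right and simplifying the resulting rational function without sign errors — rather than anything conceptually deep; the two main structural inputs (linear independence from the Gram-type argument, and $\sum P_\alpha=\id$ at the extremal count) are where the content lies. An alternative, essentially equivalent, route would be to work directly with the Gram matrix of the unit vectors viewed via their outer products and invoke positive semidefiniteness, but the trace-form argument above is cleaner given the setup of \eqref{Them} and \eqref{faf} already in place. Either way, since this is the classical bound of Delsarte--Goethals--Seidel, I would simply cite \cite{DGS} for the details and include only the short computation above as a reminder.
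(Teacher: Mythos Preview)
The paper does not actually prove Lemma \ref{DGS}; it simply cites \cite{DGS} and remarks that a quick proof appears in \cite[section\,3]{HS}. Your argument is precisely the standard one those references contain, so in spirit you are aligned with what the paper intends.

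That said, there is a small arithmetic slip in your second step. From $\sum_\alpha b_\alpha = d$ with $d^2$ equal coefficients you get $d^2 b = d$, hence $b = 1/d$, not $b=1$; equivalently $\sum_\alpha P_\alpha = d\,\id$, which is exactly the identity the paper records just after Definition \ref{cs}. With this correction the trace-of-square computation reads
\[
\tr\big((d\,\id)^2\big) = d^3 = d^2 + d^2(d^2-1)\la,
\]
giving $\la = (d-1)/(d^2-1) = 1/(d+1)$ without any sign juggling. In fact you can bypass that step entirely: substituting $b = (1-\la d)/(1-\la)$ into $d^2 b = d$ already yields $d(1-\la d) = 1-\la$, i.e.\ $d-1 = \la(d^2-1)$, and hence $\la = 1/(d+1)$.
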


\noindent A quick proof is also given in \cite[section\,3]{HS}.\smallbreak

For fixed $d,$ one therefore knows the distance between any two points
$[\zz_\alpha],[\zz_\beta]\in \CPD$ of a hypothetical SIC-POVM. This innocuous
result helps to make the existence problem tractable, and we record

\begin{definition}\label{cs}
  Two points $\rw,\rz$ of $\CPD$ are \emph{correctly separated} if
  $$\fp\big(\rw,\rz\big)=1/(d+1).$$
\end{definition}

\noindent Thus, any two distinct points of a SIC-POVM are correctly
separated.\smallbreak

If we normalize the representative vectors and set $P_\alpha=F(\zz_\alpha),$
then a SIC-POVM can be equivalently defined as a subset $\{\Pa\}$ of $d^2$
rank-one projections in $\sD$ such that
\[\sum_{\alpha=1}^{d^2}\Pa=d\,\id,\qquad \tr(\Pa\Pb) =
\left\{\ba{ll} 1 \quad & \alpha=\beta\\\la & \alpha\ne\beta,\ea\right.\] This
reflects the etymology of the words \emph{informationally complete} and
\emph{symmetric}. It also allows us to view a SIC-POVM as a regular simplex in
$\su(d)$ whose $d^2$ vertices lie in the coadjoint orbit $\CPD$. It also raises
the question of whether there is an analogue of Lemma \ref{DGS} for other
adjoint orbits, any one of which is a projective variety \cite{Serre} with a
natural K\"ahler metric \cite[Chap.~8]{Besse}.\smallbreak

Fix $d\ge3,$ and set $\om=e^{2\pi i/d}$. Having chosen coordinates on $\C^d,$
one can define two cyclic subgroups $W,H$ of $\U(d),$ with respective
generators
\be{wh} w=\left(\ba{ccccc}
0 & 0 & 0 & \cdots & 1\\
1 & 0 & 0 & \cdots & 0\\
0 & 1 & 0 & \cdots & 0\\
\vdots & \vdots & \vdots && \vdots\\
0 & 0 &  0 & \cdots& 0
\ea\right),\qquad
h=\left(\ba{ccccc}
1 &  0  & 0    & \cdots & 0\\
0 & \om & 0    & \cdots & 0\\
0 & 0  & \om^2 & \cdots & 0\\
\vdots & \vdots & \vdots && \vdots\\
0 & 0 &  0 & \cdots &\om^{d-1}
\ea\right).\ee
Both have order $d,$ and the former acts on column vectors by the shift map
\[ w\,(\z_0,\z_1,\ldots,\z_{d-1})^\top=(\z_{d-1},\z_0,\ldots,\z_{d-2})^\top.\] 
Note that $W=\left<w\right>$ and $H=\left<h\right>$ are subgroups of $\SU(d)$
only if $d$ is odd.

Since $hw=\om\,wh,$ the actions induced by $W,H$ on $\CPD$ commute. A vector
$\zz\in\C^d$ is called \emph{fiducial} if it has unit norm and the orbit
$(W\times H)\cdot\rz$ is a SIC-POVM, as in Definition \ref{sic}. However, $w,h$
generate a subgroup of $\U(d)$ or order $d^3,$ isomorphic to the Heisenberg
group of upper-triangular $3\times3$ matrices defined over the ring $\Z_d$
(with $1$'s on the diagonal). To better understand the resulting phase factors
independently of the parity of $d,$ recall the definition of $\bd$ in
\eqref{bd}, set $\tau=-e^{\pi i/d},$ and define operators
\[ \D_\bfp=\tau^{p_1p_2}w^{p_1}h^{p_2},\quad \bfp=(p_1,p_2).\]
The map \be{full} \D\colon\left\{\ba{ccc} \Zbd^2 &\lra& \U(d)\y \bfp
&\lmt& \D_\bfp\ea\right.\ee is not a homomorphism, but satisfies
\be{DD} \D_\bfp\D_\bfq =
\tau^{\left<\bfp,\bfq\right>}\D_{\bfp+\bfq},\ee where
\[ \langle\bfp,\bfq\rangle = \det(\bfq,\bfp) = q_1p_2-q_2p_1\]
is a symplectic pairing. If $d$ is even, then
\be{eqnDShift} \D_{\bfp+d\bfq} = \bar\tau^{d\langle\bfp,\bfq\rangle}\D_\bfp\D_{d\bfq} =
(-1)^{\langle \bfp,\bfq\rangle} \D_\bfp,\ee 
so that $\D$ is $d$-periodic up to sign.\smallbreak

We next review the concept of overlap phase. Many more details can be found in
\cite{App}.

Let $\zz$ be a unit vector in $\C^d,$ we write $\zz\in S^{2d-1}$. For each
non-zero element $\bfp\in\Zbd^2,$ the quantity
\[ \bmk\zz{\D_\bfp}\zz = \tr(\D_\bfp\zzz)\]
is called an \emph{overlap phase} for the Heisenberg action.

\begin{definition}\label{Phiz}
Let $\zz\in S^{2d-1}$. The \emph{overlap map} associated to $\zz\in S^{2d-1}$
is the mapping $\Ph_\zz\colon\Zbd^2\to\C$ defined by
$\Ph_\zz(\bfp)=\bmk\zz{\D_\bfp}\zz$.
\end{definition}

\noindent Of course, $\Ph_\zz$ can be thought of as a square matrix whose rows
and columns are each indexed by $\Zbd,$ and we shall also refer to its values
as `entries'. Beware that the top left entry $\Ph_\zz(0,0)=1$ is not strictly
speaking an overlap phase. The point is that $\zz$ generates a SIC-POVM if and
only if all the other entries of $\Ph_\zz$ are complex numbers of modulus
$\frac1\sqd$. Shifting the row or column index of an entry by $d$ will at worst
change its sign.\smallbreak

For the moment, let us restrict the index $\bfp$ to the subset $\Zd^2$ if $d$
is even. Then the set $\{\D_\bfp:\bfp\in\Zd^2\}$ forms a unitary basis for the
complex vector space $\gl_d\C$ of complex $d\times d$ matrices, equipped with
the Hermitian product $\left<A,B\right>=(1/d)\tr(A^*B)$. This is true because
\[  \D_\bfp^* = \D_\bfp^{-1} = \tau^{-p_1p_2}h^{-j}w^{-i},\]
so $\D_\bfp^*\D_\bfq$ is a scalar multiple of
$\D_{-\bfp+\bfq},$ and
\[ \tr(\D_\bfp^*\D_\bfq) = \left\{\ba{ll}
d & \hbox{if $\bfp=\bfq\in\Zd^2$}\y
0 & \hbox{otherwise}.\ea\right.\]
It follows that the entries of $\Ph_\zz|\Zd^2$ are essentially the
coordinates of $\zzz$ with respect to this
basis. Indeed, \be{recover} \zzz = 
\frac1d\sum_{\bfp\in\Zd^2}\Ph_\zz(\bfp)\D_{\bfp}^*,\ee
and $\rz$ can be recovered by replacing $\D_{\bfp}^*$ by its first column
in \eqref{recover}, assuming that $z_0\ne0$.

The concept of a unitary basis of operators was introduced in \cite{Sch}, see
also \cite{ADF}. If $\D_\bfp$ were itself Hermitian or skew-Hermitian, then
$\Ph_\zz(\bfp)$ would be a component of the moment map \eqref{faf}. Although
this is not true, we shall explain in the next section precisely how the
overlap map encodes the moment maps associated to maximal tori of the isometry
group of $\CPD$.

\begin{example}\label{d=3}\rm
We summarize the results of \cite{HS,Zhu} for $d=3$. Let $\SS =
\big\{[\zz_\alpha]:1\le\alpha\le9\big\}$ be any SIC-POVM consisting of 9
mutually equidistant points in $\CP^2$. We prove that $\SS$ is in fact
congruent to an orbit of $W\times H$ by means of the following steps.

Up to isometry and re-ordering the points, we may assume that \be{zz12}\ts
[\zz_1]=[0,-1,\om],\qquad [\zz_2]=[0,-1,\om^2]=h\cdot[\zz_1].\ee Having fixed
these points, we may use a residual $\U(1)$ symmetry to take \be{z3} \ts
[\zz_3] = \Big[\cos\phi,\>\cos\big(\phi+\frac{2\pi}3\big),
  \>\cos\big(\phi+\frac{4\pi}3\big)\,\Big],\ee for some angle $\phi$ modulo
$\pi$. Acting by $w$ on $[\zz_3]$ merely has the effect of replacing $\phi$ by
$\phi-\frac{2\pi}3$ on the right-hand side.

It can now be shown that $\SS$ is the orbit of these
  three points under $W,$ so that
  \[\SS = \left\{w^i\cdot[\zz_\alpha]:0\le i\le 2,\ 1\le\alpha\le3\right\}.\]
  The proof of this fact is accomplished in \cite{HS} by a long series of
  lemmas. The first step is to characterize equilateral triangles in $\CP^2$
  with vertices lying on a singular torus generated by points equidistant from
  $[\zz_1]$ and $[\zz_2]$ (thereby extending the circle defined by
  \eqref{z3}). By assumption, $\SS$ will contain $35$ such triangles.

To convert $\SS$ into a visually simpler form, let \be{M}
M={\ts\frac1{\sqrt3}}\!\left(\!\ba{ccc} \om^2 & \om &
1\\1&\om&\om^2\\1&1&1\ea\!\right)\in \U(3).\ee 
$M$ maps $[\zz_3]$ to
$[e^{2i\phi}\om^2,1,0]$ and $\SS$ to the SIC-POVM consisting of the points
\be{33}\ba{ccccc} [0,1,-1] && [0,1,-\om] && [0,1,-\om^2]\yy [1,0,-1] &&
   [1,0,-\om] && [1,0,-\om^2]\yy [e^{2i\phi},1,0] && [e^{2i\phi}\om,1,0] &&
   [e^{2i\phi}\om^2,1,0].\ea\ee
By setting $\phi=(3t+\pi)/2$ and applying the isometry
   $\diag(-e^{-it},-e^{it},1)\in\SU(3)$ we can convert the nine points into the
   orbit $(W\times H)\cdot\rz,$ where $\zz=\frac1{\sqrt2}(0,1,-e^{it})$ and
   $0\le t\le\frac\pi3$.

It is easily verified by hand that $\zz$ is a fiducial vector, and we have
recovered the usual representation of SIC-POVMs for $d=2$. The associated
overlap matrix is
\[ \Ph_\zz = \half\!\left(\!\!\ba{ccc} 
2 & -1 & -1 \\
-e^{-it}&-e^{-it}&-e^{-it}\\
-e^{it}&-e^{it}&-e^{it}
\ea\!\!\right).\]
The moduli space of these SIC-POVMs was described in detail by Zhu \cite{Zhu}
(with the same parameter $t$), building on \cite{App,RSC,Z}. Their congruence
classes can be characterized by the $\U(3)$-invariant triple product
\[ \arg\left[\bk{\zz_1}{\zz_2}\bk{\zz_2}{\zz_3}\bk{\zz_3}{\zz_1}\right]\]
  of three unit vectors in $\C^3$ first defined in \cite{Barg}. For the
  normalized vectors in \eqref{zz12},\eqref{z3}, this argument equals
  $-2\phi+\pi=-3t,$ although its value for many of the 84 triples drawn from
  $\SS$ is independent of $t$. It turns out that, up to isometry, all SIC-POVMs
  for $d=3$ (regarded as an unordered set of nine points) are faithfully
  parametrized by restricting $t$ to lie in the interval $[0,\frac\pi9]$. The
  solutions corresponding to the two endpoints are inequivalent but have
  enhanced symmetries relative to $0<t<\frac\pi9$.
\end{example}\medbreak

The matrix \eqref{M} in the example above belongs to the normalizer $\rN(d)$ of
$W\ti H$ in $\U(d),$ which is known as the \emph{Clifford group}. There is a
representation
\[ U\colon\ \SL_2\Zbd\ltimes\Zd^2 \lra \rN(d)/\U(1)\]
satisfying 
\be{UU} U(F,\bfq)\,\D_\bfp\,U(F,\bfq)^{-1} = 
\om^{\langle\bfq,F\bfp\rangle}\D_{F\bfp}.\ee 
It is an isomorphism if $d$ is odd. This theory can be extended to include
complex conjugation and $\rN(d)$ becomes a subgroup of index 2 in the so-called
\emph{extended Clifford group}. This is the natural symmetry group for
investigating orbits for the action of $\Zbd$ on $\C^d,$ though we shall not
rely on knowledge of its exact structure in this paper. We shall be more
concerned with the symmetries arising from Definition \ref{Phiz}.

\begin{remark}\rm
The Clifford group for $d=5$ plays an essential role in the construction of the
Horrocks-Mumford bundle $\sF,$ which is a stable rank 2 holomorphic vector
bundle over $\CP^4$ \cite{HM}. A generic holomorphic section of $\sF$ vanishes
on a non-singular abelian surface $Z,$ and $\sF$ can be re-constructed from the
normal bundle of $Z$ in $\CP^4$. The surface $Z$ is itself generated by the
tangent lines to a normal elliptic quintic curve invariant by the Heisenberg
group. There is a one-parameter family of such quintic curves in $\CP^4$
(including twelve degenerations to pentagons), which sweep out a surface that
fibres over a modular curve, and a similar phenomenon occurs in higher
dimensions. (All these facts are fully explained in \cite[chapters
  IV,\,VII]{Hulek}.) Although the use of elliptic curves to incorporate points
of a SIC-POVM appears possible only for $d=3$ \cite{hesse,Bend}, the relevance
of these constructions has yet to be fully investigated.
\end{remark}

\setcounter{equation}{0}
\section{Moment maps and quadrics}\label{secMom}

We begin with some notation to handle cyclic subgroups. Fix $d\ge3$. Let
$\bar\pi\colon\Zbd\to\Zd$ denote the natural homomorphism given by
$k+\Zbd\mapsto k+\Zd$ (which is of course the identity if $d$ is odd).

\begin{definition}
Let $\proj$ denote the set of cyclic subgroups of $\Zbd^2$ of size $\bd$. When
$d$ is even, the image of $\bC\in\proj$ under $\bar\pi^2\colon\Zbd^2\to\Zd^2$
is a cyclic subgroup $C$ of $\Zd^2$ of order $d,$ and we say that $C$ and $\bC$
are \emph{associated}.
\end{definition}

For $\bC\in\proj,$ the restriction $\D|\bC$ of \eqref{full} to $\bC$ is a
homomorphism and $d$-periodic even when $d$ is even, as can be seen from
\eqref{DD}. The $d$-periodicity allows us to define a map \be{PhibC}
\Ph_\zz^\bC\colon\ C\lra \C\ee satisfying $\Ph_\zz^\bC\circ\bar\pi^2 =
\Ph_\zz|\bC,$ where $C$ and $\bC$ are associated. We shall refer to
\eqref{PhibC} as the \emph{restricted overlap map} determined by $\bC$.
When $d$ is odd, of course $\bC=C$ and $\Ph_\zz^\bC=\Ph_\zz|C$.

\begin{lemma}
	When $d$ is even, each $C\in\P\Zd^2$ is associated to two
distinct subgroups $\bC,\bC'\in\proj,$ but $\Ph_\zz^\bC$ and $\Ph_\zz^{\bC'}$
agree up to changes of sign on odd entries.
\end{lemma}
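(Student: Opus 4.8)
The plan is to reduce everything to an explicit generator. Write $C=\langle\bfp\rangle$ with $\bfp=(p_1,p_2)\in\Zd^2$ of order $d$, so that $\gcd(p_1,p_2,d)=1$; since $d$ is even this forces $p_1$ or $p_2$ to be odd, and hence the representative $\vv\in\{0,1\}^2$ of $\bfp$ modulo $2$ is nonzero. Pick any lift $\tilde\bfp\in\Zbd^2$ of $\bfp$. The first point to record is that $d\tilde\bfp$ is independent of the lift — two lifts differ by an element of $\{0,d\}^2$, which is annihilated by $d$ because $d$ is even — and that $d\tilde\bfp=d\vv\ne0$. Since the order of $\tilde\bfp$ divides $\bd=2d$, is a multiple of $d$, and cannot be $d$ (as $d\tilde\bfp\ne0$), it equals $\bd$; hence $\bC:=\langle\tilde\bfp\rangle\in\proj$ and $\bar\pi^2(\bC)=C$, so $\bC$ is associated to $C$.

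For the counting, any $\bC''\in\proj$ associated to $C$ lies in $(\bar\pi^2)^{-1}(C)$, has order $\bd$, and is generated by a lift of some generator of $C$; multiplying that lift by a suitable unit of $\Zbd$ reduces to lifts of $\bfp$ itself, which form the four-element coset $\tilde\bfp+\{0,d\}^2$. Two of these, $\tilde\bfp$ and $\tilde\bfp+d\tilde\bfp$, lie in $\bC$ and generate it; I would write the remaining two as $\tilde\bfp+\e_1$ and $\tilde\bfp+\e_1+d\tilde\bfp$ and note that $d(\tilde\bfp+\e_1)=d\tilde\bfp$ lies in $\langle\tilde\bfp+\e_1\rangle$, so these two generate a single subgroup $\bC'$, distinct from $\bC$ because $\e_1\notin\bC$. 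Hence exactly two subgroups of $\proj$ are associated to $C$.

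Now turn to the overlap maps. Write $\e_1=d\bfr$ with $\bfr\in\{0,1\}^2$; since $\e_1\in\{0,d\}^2$ but $\e_1\notin\bC\cap\{0,d\}^2=\{0,d\vv\}$, the class of $\bfr$ in $\F_2^2$ avoids $0$ and $\vv$. For $c=m\bfp\in C$, the defining relation $\Ph_\zz^{\bC}\circ\bar\pi^2=\Ph_\zz|\bC$ gives $\Ph_\zz^{\bC}(c)=\bmk\zz{\D_{m\tilde\bfp}}\zz$, and likewise $\Ph_\zz^{\bC'}(c)=\bmk\zz{\D_{m\tilde\bfp+m\e_1}}\zz$. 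As $m\e_1=0$ for $m$ even and $m\e_1=\e_1=d\bfr$ for $m$ odd, the two maps agree on the entries indexed by even multiples of $\bfp$, while for odd $m$ the shift formula \eqref{eqnDShift} yields $\Ph_\zz^{\bC'}(c)=(-1)^{\langle m\tilde\bfp,\bfr\rangle}\Ph_\zz^{\bC}(c)=(-1)^{\langle\tilde\bfp,\bfr\rangle}\Ph_\zz^{\bC}(c)$. Finally $\langle\tilde\bfp,\bfr\rangle$ modulo $2$ depends only on $\vv$ and on $\bfr$, being the value of the nondegenerate symplectic pairing of $\F_2^2$; since $\vv\ne0$ and $\bfr\bmod 2$ lies outside $\{0,\vv\}=\vv^{\perp}$, this value is $1$, and the sign is $-1$. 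Thus $\Ph_\zz^{\bC}$ and $\Ph_\zz^{\bC'}$ coincide on the even entries and are negatives of one another on the odd ones, which is the assertion.

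I expect the main difficulty to be the bookkeeping rather than any single computation: checking that $d\tilde\bfp$, hence $\bC$, is lift-independent, that the two exceptional lifts generate one and the same $\bC'$, and — most delicately — that $\Ph_\zz^{\bC}$ and $\Ph_\zz^{\bC'}$ are well defined, so that evaluating them on the representatives $m\tilde\bfp$ and $m(\tilde\bfp+\e_1)$ is legitimate; this last point rests on the $d$-periodicity of $\D|\bC$ recorded just before the lemma. Once that scaffolding is in place, the $-1$ sign is forced purely by the non-degeneracy of the mod-$2$ symplectic pairing, which is the only structural input.
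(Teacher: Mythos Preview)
Your proof is correct and follows essentially the same route as the paper's: both reduce to the observation that the two associated subgroups differ by a shift $d\bfq$ (your $\e_1$) with $\bfq$ nonzero and distinct from the generator modulo $2$, and then invoke \eqref{eqnDShift} together with the nondegeneracy of the symplectic pairing on $\F_2^2$ to produce the sign $(-1)^k$. The only cosmetic difference is that the paper begins with a generator of $\bC$ in $\Zbd^2$ and reads off the four lifts directly, whereas you start with a generator of $C$ in $\Zd^2$, lift, and verify the bookkeeping (lift-independence of $d\tilde\bfp$, surjectivity of $\Zbd^\times\to\Zd^\times$, well-definedness of $\Ph_\zz^{\bC}$) more explicitly; your use of the ad~hoc symbol $\e_1$ for the shift is slightly unfortunate notation but harmless.
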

\begin{proof}

When $d$ is even, $d\Zbd^2\cong\Z_2^2$. Thus if $\bfp$ is a generator
for $\bC,$ then $\{\bfp+d\bfq:\bfq\in\Zbd^2\}$ has four points. Two of
these (namely $\bfp$ and $\bfp+d\bfp$) lie in $\bC,$ while the other
two lie in a different subgroup $\bC'$ associated to $C$. Choose
$\bfq\in\Zbd^2$ such that $\bfr=\bfp+d\bfq$ generates $\bC'$. In
particular, we have $\bfp\centernot\equiv \bfq\mod 2$. Since neither
element is trivial modulo $2,$ this implies that
$\langle\bfp,\bfq\rangle$ is odd, which can easily be checked by
considering determinants of pairs in $\Z_2^2$.  Using \eqref{DD}, we
have $\D_{k\bfr} =(-1)^k\D_{k\bfp}$ for every $k\in\Zbd,$ thus
\[\Ph_\zz^{\bC'}(k\bfr)=\left\{\ba{cl}
\Ph_\zz^\bC(k\bfp)  & \hbox{if $k$ is even}\y
-\Ph_\zz^\bC(k\bfp)\quad & \hbox{if $k$ is odd},
\ea\right.\] as stated.
\end{proof}
\noindent Note that we shall largely ignore this sign ambiguity as we
shall ultimately be concerned only with the respective images of these
mappings.\smallbreak

Let $n=\floor{d/2}$. Since the restriction of $\D$ to $\bC$ is a homomorphism,
we can identify the image of $\Ph_\zz^\bC$ with an element of the affine space
\be{sT}\ba{rcl} \sT &=& \big\{(\al_0,\ldots,\al_{d-1})\in\C^d\colon \al_0
=1,\ \al_i=\ol{\al_{d-i}}\big\}\yyy &\cong&\left\{\ba{ll} \C^n & \hbox{if $d$
  is odd}\y \C^{n-1}\op\R\ & \hbox{if $d$ is even}.\ea\right.\ea\ee 

We will relate $\Ph_\zz^\bC$ to a moment map. First we need a torus:

\begin{lemma}\label{lemTC}
  For every $C\in\P\Z_d^2,$ there is some maximal torus $\hat{\mathbb
    T}^C$ in $U(d)$ which contains the image of $\bar C$ under $\D$
  for each associated $\bar C\in\P\Z_\bd^2$.
\end{lemma}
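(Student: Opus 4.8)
The plan is to produce the torus explicitly as a conjugate of the standard diagonal torus. Fix $C\in\P\Z_d^2$ and let $\bfp$ be a generator. Since $\D|\bC$ is a homomorphism into $\U(d)$ (as noted after the definition of associated subgroups), the unitary matrix $\D_\bfp$ has order dividing $\bd$ and in particular is diagonalizable with eigenvalues that are $\bd$-th roots of unity. The first step is to argue that $\D_\bfp$ is \emph{regular}, i.e.\ has $d$ distinct eigenvalues; then its centralizer in $\U(d)$ is a unique maximal torus $\hat\T^C$, and by construction $\D_\bfq$ commutes with $\D_\bfp$ for every $\bfq\in\bC$ (they generate a cyclic, hence abelian, group via \eqref{DD}), so the whole image of $\D|\bC$ lies in $\hat\T^C$. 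For an associated $\bC'$ the generators differ by multiplication by a sign on each power (by the preceding lemma), so $\D_{k\bfr}=\pm\D_{k\bfp}$ still lies in the centralizer of $\D_\bfp$, giving the ``for each associated $\bC$'' clause for free.

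The regularity claim is the crux, and the cleanest route is the case $\bfp=(1,0)$, i.e.\ $C=W$: here $\D_{(1,0)}=w$ is the cyclic shift, whose characteristic polynomial is $\lambda^d-1$, so its eigenvalues are exactly the $d$-th roots of unity — all distinct. For a general $C$, I would use the Clifford group action \eqref{UU}: the representation $U\colon\SL_2\Zbd\ltimes\Z_d^2\to\rN(d)/\U(1)$ is surjective (an isomorphism when $d$ is odd), and for any primitive $\bfp$ there is $F\in\SL_2\Zbd$ with $F(1,0)=\bfp$ (primitivity guarantees $\bfp$ extends to a basis of $\Z_\bd^2$, hence to an element of $\SL_2\Zbd$, at least after checking this over the ring $\Z_\bd$). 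Then \eqref{UU} gives $U(F,0)\,w\,U(F,0)^{-1}=\om^{c}\D_{\bfp}$ for some integer $c=\langle 0,F(1,0)\rangle$; actually with second argument $0$ the phase is $\om^{\langle 0,\bfp\rangle}=1$, so $\D_\bfp$ is unitarily conjugate to $w$, hence regular with the same spectrum, and $\hat\T^C = U(F,0)\,\T_0\,U(F,0)^{-1}$ where $\T_0$ is the standard diagonal torus (the diagonalizing torus of $h$, equivalently the centralizer of $w$'s diagonalization). One must note that $U(F,0)$ is only defined modulo $\U(1)$, but conjugation kills that ambiguity, so $\hat\T^C$ is well-defined.

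The main obstacle I anticipate is the even case, where $d\,|\,\bd=2d$ and the lifting of a primitive $\bfp\in\Z_\bd^2$ to $\SL_2\Zbd$ needs care: $U$ is no longer an isomorphism, and the relevant statement is that $\SL_2\Zbd$ still acts transitively on primitive vectors of $\Z_\bd^2$ (true because $\Z_\bd$ is a finite commutative ring and $\mathrm{SL}_2$ over such a ring acts transitively on unimodular rows, by the standard Chinese-remainder reduction to local rings). Alternatively, and perhaps more safely for the even case, I would work directly with the associated $C\in\P\Z_d^2$ and its generator $\bfp\in\Z_d^2$: show $\D_\bfp$ is conjugate to $w$ exactly as above over $\Z_d$, so $\D_\bfp$ is regular in $\U(d)$ and $\hat\T^C=$ its centralizer; then check that for both subgroups $\bC,\bC'\in\P\Z_\bd^2$ associated to $C$, the elements $\D_{k\bfr}=\pm\D_{k\bfp}$ lie in this same torus, since $\hat\T^C$ contains $-\id$. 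This sidesteps any subtlety about $\SL_2\Zbd$ and makes the ``independence of the associated $\bC$'' essentially automatic.
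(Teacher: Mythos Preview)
Your proposal is correct and follows essentially the same route as the paper: use the Clifford group action \eqref{UU} to conjugate $\D_\bfp$ to a standard generator (you use $w=\D_{(1,0)}$, the paper uses $h=\D_{(0,1)}$ up to a root-of-unity factor), deduce that $\D_\bfp$ has $d$ distinct eigenvalues, and take the resulting maximal torus (you phrase it as the centralizer, the paper as the torus generated by the eigenprojectors --- equivalent for a regular element). Your handling of the associated $\bC'$ via $\D_{k\bfr}=\pm\D_{k\bfp}$ is also the same observation the paper makes, and your remark that completing a primitive $\bfp$ to an element of $\SL_2\Zbd$ works over the finite ring $\Zbd$ is precisely what the paper invokes without further comment.
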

\begin{proof}
Let
$\bfp\in \Zbd^2$ generate $\bC$. Since $\gcd(p_1,p_2)\equiv 1\mod\bd,$ there exists some $\bfq\in\Z_\bd^2$ such that $F:=\begin{pmatrix}\bfq & \bfp \end{pmatrix}\in\SL_2\Z_\bd.$ Since $\bfp = F\begin{pmatrix}0 \\ 1\end{pmatrix}$,
 $\D_\bfp$ is conjugate to $\omega^kh$ for some $k\in\Z_d,$ see \eqref{UU}. It
follows that the eigenvalues of $\D_\bfp$ are the distinct $d$\,th roots of
unity $1,\om,\om^2,\ldots,\om^{d-1},$ with an ordered set $(\e_j)$ of unit
eigenvectors indexed by $\Zd$. Note that if $\bfp'\ne\bfp$ but
$\bar\pi^2(\bfp)=\bar\pi^2(\bfp'),$ then $\D_{\bfp'} = -\D_\bfp$. Thus the
unordered set $\{\e_j\}$ of eigenvectors depends only on the subgroup
$C\in\P\Zd^2$.

The corresponding projectors $\{\kb{\e_j}{\e_j}:j\in\Zd\}$ are Hermitian, and
we may identify them with elements of $\fu(d)$. They generate a maximal torus
$\hat \T^C$ in $\U(d)$. Note that this torus is conjugate to the standard
diagonal torus, which contains $\omega^k h,$ so that $\D_\bfp\in \hat\T^C$.
\end{proof}

Note that $\hat\T^C$ descends to a torus $\T^C$
in the projective unitary group $\U(d)/\U(1)$. Recalling \eqref{affine}, its
Lie algebra $\ft^C$ can be identified with the set of elements in the real span
of $\{\kb{\e_j}{\e_j}:j\in\Zd\}$ with trace $1$.

We now can associate to $\bC$ a moment map $\mu^\bC\colon\CPD\to\ft^C$ in
analogy to \eqref{faf} by setting
\[ \mu^\bC_j(\rz):=
\tr\left(\kb{\e_j}{\e_j}\>\zzz\right) = |\bk{\e_j}\zz|^2,\] assuming as
usual that $\zz$ is a unit vector. The image of $\mu$ is the standard simplex
\be{De}\ts \De = \De_{d-1} =
\big\{(\x_0,\ldots,\x_{d-1}):\suml_{i=0}^{d-1}\x_i=1,\
\x_i\ge0\big\}\ee of $\R^d$. (In future, we shall omit the subscript $d-1$ when
the context makes the dimension of the simplex clear.) The dependence on $\bC$
rather than $C$ comes from the ordering of the basis by the eigenvalues of
$\bfp$.

For any $i\in\Zd,$ we can now write
\[ \D_{i\bfp} = \sum_{j\in\Zd}\om^{ij}\kb{\e_j}{\e_j}.\]
In particular, the components of $\Ph_\zz^\bC$ relative to \eqref{sT}
are given by
\[ (\Ph_\zz^\bC)_i = \tr({\D_{i\bfp}}\zz\bmk\zz) = \sum_{j\in\Zd}\om^{ij}\mu^\bC_j.\]
This is expressed succinctly by the equation
\be{PhVmu} \Ph_\zz^\bC = V\mu^\bC(\rz)\ee
for $\zz\in S^{2d-1},$ where
\[ V = \left(\ba{cccc}
1 & 1 & \cdots & 1\\
1 & \om & \cdots & \om^{d-1}\\
1 & \om^2 & \cdots & \om^{2(d-1)}\\
\cdot &\cdot &\cdots & \cdot\\
1 & \om^{d-1} & \cdots & \om^{(d-1)^2}\\
\ea\right)\]
is the Vandermonde matrix that represents the discrete Fourier transform. The
$(i,j)$th entry of $V$ is $\om^{ij}$ (we start indexing at $0$), and $(1/\sqrt
d)V\in \U(d)$. We can summarize the discussion by the following result, a
version of which appears in \cite{FHS}:

\begin{proposition}\label{propCycMomMaps}
  For any $\bC\in\proj,$ the restricted overlap map $\Ph_\zz^\bC$ is a Fourier
  transform of the moment map $\mu^\bC$.
\end{proposition}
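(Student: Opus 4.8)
The plan is to recognize the statement as a direct repackaging of the identity \eqref{PhVmu} derived just above, so that the proof is essentially the assembly of the spectral data furnished by Lemma \ref{lemTC}. First I would fix a generator $\bfp$ of $\bC$ and recall from the proof of Lemma \ref{lemTC} that $\D_\bfp$ is diagonalizable, its eigenvalues being the full set of $d$\,th roots of unity, carried by an ordered family $(\e_j)_{j\in\Zd}$ of unit eigenvectors with $\D_\bfp\e_j=\om^j\e_j$; this family depends only on $C$, up to the sign ambiguity $\D_{\bfp'}=-\D_\bfp$ when $\bar\pi^2(\bfp')=\bar\pi^2(\bfp)$. Since $\D|\bC$ is a homomorphism, $\D_{i\bfp}=(\D_\bfp)^i$ for all $i\in\Zd$, and the spectral theorem gives $\D_{i\bfp}=\sum_{j\in\Zd}\om^{ij}\kb{\e_j}{\e_j}$; here I would also record that the $d$-periodicity of $\D|\bC$ noted after \eqref{eqnDShift} is exactly what makes $\D_{i\bfp}$, and hence $\Ph_\zz^\bC$, well defined on $C$ even in the even case.

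Next I would compute the components of the restricted overlap map directly from Definition \ref{Phiz}: for a unit vector $\zz$, the $i$\,th component is $(\Ph_\zz^\bC)_i=\tr(\D_{i\bfp}\zzz)$, and substituting the spectral expansion together with $\tr(\kb{\e_j}{\e_j}\zzz)=|\bk{\e_j}\zz|^2=\mu^\bC_j(\rz)$ yields $(\Ph_\zz^\bC)_i=\sum_{j\in\Zd}\om^{ij}\mu^\bC_j(\rz)$. In matrix form this is precisely $\Ph_\zz^\bC=V\mu^\bC(\rz)$ with $V$ the Vandermonde matrix of $(i,j)$ entry $\om^{ij}$, and since $(1/\sqrt d)V\in\U(d)$ this is the discrete Fourier transform --- which is the content of the proposition. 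I would then check for consistency that the output lands in $\sT$ of \eqref{sT}: the $i=0$ entry is $\sum_{j\in\Zd}\mu^\bC_j=1$ because $\zz$ is a unit vector and the $\{\e_j\}$ form an orthonormal basis, and $(\Ph_\zz^\bC)_i=\ol{(\Ph_\zz^\bC)_{d-i}}$ since each $\mu^\bC_j$ is real, the middle entry being real when $d$ is even, matching the two cases of \eqref{sT}.

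I do not expect a genuine obstacle here: once Lemma \ref{lemTC} is available the proposition is a one-line spectral computation. The only points that need care are the dependence on $\bC$ rather than $C$ --- which enters solely through the ordering of $(\e_j)$ by the eigenvalues of $\D_\bfp$, so that replacing $\bC$ by the other associated $\bC'$ negates $\D_\bfp$, permutes the eigenbasis, and produces exactly the sign changes on odd entries recorded in the lemma preceding Lemma \ref{lemTC} --- and keeping the indexing consistent across $\Zbd^2$, its image in $\Zd^2$ under $\bar\pi^2$, and the eigenvector index set $\Zd$, with the attendant signs when $d$ is even. All of this bookkeeping is already set up in the preceding lemmas, so I would simply cite it rather than repeat it.
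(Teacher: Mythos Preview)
Your proposal is correct and follows exactly the paper's approach: the proposition is stated as a summary of the preceding discussion, and your argument retraces that discussion --- fixing a generator $\bfp$, using the spectral decomposition $\D_{i\bfp}=\sum_{j}\om^{ij}\kb{\e_j}{\e_j}$ from Lemma~\ref{lemTC}, and reading off $(\Ph_\zz^\bC)_i=\sum_j\om^{ij}\mu^\bC_j$ to obtain \eqref{PhVmu}. The additional consistency checks you propose (that the output lands in $\sT$, and the $\bC$-versus-$C$ bookkeeping) are not needed for the proof itself but are harmless elaborations of material already set up in the surrounding text.
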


\begin{example}\label{concrete}\rm
The argument above is rendered more concrete by taking $\bC=\{0\}\times\Zbd,$
which is associated to $H,$ so that $\T^C$ is the standard maximal torus in
$\SU(d)$. Its moment map is determined by the diagonal entries of \eqref{Them},
so is given by setting $x_i=|\z_i|^2$ in \eqref{De}, assuming $\zz\in
S^{2d-1}$. Define \be{alpha} \al_i= \bmk\zz{h^i}\zz =
\x_0+\om^i\x_1+\cdots+\om^{i(d-1)}\x_{d-1},\ee so that
\[\left(\kern-3pt\ba{c}
  \al_0\\\al_1\u.\u.\u.\\\al_{d-1}\ea\kern-3pt\right) =
V\left(\kern-3pt\ba{c}
\x_0\\\x_1\u.\u.\u.\\\x_{d-1}\ea\kern-3pt\right),\]
and $(\al_0,\ldots,\al_{d-1})\in\sT$. 
\end{example}

For the following statements, let $C$ be a cyclic subgroup of $\Zd^2$
associated to $\bC\in\proj,$ and let
\[ \mu^\bC\colon\CPD\lra \De\]
be the associated moment mapping. Recall that correctly separated is defined in Definition \ref{cs}.

\begin{definition}\label{Cfid}
A point $\rz\in\CPD$ is \emph{$C$-admissible} if all points in its $C$-orbit
are correctly separated.
\end{definition}

\noindent It follows immediately that a unit vector $\zz$ is fiducial if and
only if $\rz$ is $C$-admissible for \emph{every} $C\in\P\Zd^2$.\smallbreak

Note that if $\rz$ is $C$-admissible, so is any point in its $\T^C$ orbit.  The
set of $C$-admissible points is therefore determined by its image under
$\mu^\bC$. A different subgroup $C'$ determines a conjugate maximal torus
$P\,\T^C\!P^{-1}$ where $P\in\rN(d)$ (see \eqref{UU}), and the new moment map
is obtained by pre-composing $\mu^\bC$ with $P$. However,

\begin{lemma}\label{adm}
The image by $\mu^\bC$ of the set of $C$-admissible points does not depend upon
$C$.
\end{lemma}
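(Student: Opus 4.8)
The plan is to reinterpret $C$-admissibility as a condition purely on $\Ph_\zz^\bC$, and then to use Proposition \ref{propCycMomMaps} to recast it as a condition on the single point $\mu^\bC(\rz)\in\De$ that no longer refers to $C$. Concretely, fix a generator $\bfp$ of $\bC$, so that the $C$-orbit of $\rz$ is $\{[\D_{i\bfp}\zz]:i\in\Zd\}$. Since $\bC$ is cyclic we have $\langle\bfp,\bfp\rangle=0$, so $\D|\bC$ is a homomorphism and
\[ \bk{\D_{i\bfp}\zz}{\D_{j\bfp}\zz}=\bmk\zz{\D_{(j-i)\bfp}}\zz=\Ph_\zz^\bC(c), \]
where $c\in C$ denotes the class of $(j-i)\bfp$. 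As $\D$ is unitary, the separation of two orbit points is therefore $|\Ph_\zz^\bC(c)|^2$, so, when the $C$-orbit of $\rz$ consists of $d$ distinct points (the situation described in the Introduction), $\rz$ is $C$-admissible exactly when $|\Ph_\zz^\bC(c)|=1/\sqd$ for every nonzero $c\in C$. By \eqref{PhVmu} this asserts that every entry of $V\mu^\bC(\rz)$ other than the first has modulus $1/\sqd$.

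Now let $\A\subset\De$ be the set of points $(\x_0,\dots,\x_{d-1})$ of the standard simplex with $|(V\x)_i|=1/\sqd$ for $1\le i\le d-1$; this depends on $d$ only. By the previous paragraph, $\rz$ is $C$-admissible if and only if $\mu^\bC(\rz)\in\A$, and since $\mu^\bC$ maps $\CPD$ onto $\De\supseteq\A$ we get
\[ \mu^\bC\big(\{C\text{-admissible points}\}\big)=\mu^\bC\big((\mu^\bC)^{-1}(\A)\big)=\A, \]
which refers to neither $C$ nor $\bC$. It remains to observe that the dependence of $\mu^\bC$ on the associated $\bC$ and on the choice of generator is confined to the labelling of the eigenbasis $\{\e_j\}$ by the eigenvalues of $\D_\bfp$: altering these replaces $\x_l$ by $\x_{kl}$ for a unit $k\in\Zd$ and, for even $d$, possibly composes with a cyclic shift, and from $(V\x)_i=\sum_j\om^{ij}\x_j$ one checks that $|(V\x)_i|$ is then merely permuted among the indices $i\ne0$ up to a phase, so that $\A$ is unchanged.

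The one step that genuinely requires care is the degenerate case of an $\rz$ whose $C$-orbit has fewer than $d$ points --- for instance an eigenvector $\e_j$, which has a one-point orbit and so is vacuously $C$-admissible, yet $\mu^\bC(\e_j)$ is a vertex of $\De$ and lies outside $\A$. The cleanest resolution is to build into the definition of $C$-admissibility the requirement that the $C$-orbit have exactly $d$ points, as in the Introduction and as holds automatically for genuine fiducials; the displayed identity then finishes the proof. Alternatively one verifies separately that for such degenerate $\rz$ the support of $\mu^\bC(\rz)$ is confined to a single coset of the subgroup of $\Zd$ of index equal to the orbit size, and that the images thus arising again form a $C$-independent subset of $\De$.
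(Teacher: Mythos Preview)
Your proof is correct and follows the same route as the paper: both use \eqref{PhVmu} to translate $C$-admissibility into the $C$-independent condition that each non-identity coordinate of $V\mu^\bC(\rz)$ have modulus $1/\sqd$, thereby identifying the image with a fixed set $\A\subset\De$. You are more explicit than the paper in invoking surjectivity of $\mu^\bC$, in checking invariance under the choice of generator and of associated $\bC$, and in flagging the degenerate small-orbit case (which the paper simply elides, consistent with the Introduction's intended reading of $\A$); your first resolution of that case is the right one, while the ``alternative'' sketch is unnecessary and would need more work to make precise.
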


\begin{proof}
This is a consequence of \eqref{PhVmu}, in which $V$ transforms
$\De\subset\R^d$ into a subset of $\sT$. The vertices of $\De$ correspond to
the columns of $V$. A $C$-admissible point $\rz$ is characterized by the
condition that each non-identity component of \eqref{PhVmu} has norm
$1/\sqrt{d+1}$. It follows that
\[ V\A\ =\ V\De\cap\left\{\ba{ll}
\frac1{\sqrt{d+1}} T & \ifbox d\text{ is odd}\yyy
\frac1{\sqrt{d+1}}\big(T\times\{\pm 1\}\big) & \ifbox d\text{ is even},
\ea\right.\] where $T$ denotes a Clifford torus (with coordinates of unit
modulus) in $\C^n$ or in $\C^{n-1}\subset\sT$ respectively.
\end{proof}

We shall denote the universal subset of $\De$ arising in this lemma by $\A$. It
consists of the image of $C$-admissible points in $\CPD$ by the moment mapping
corresponding to the maximal torus generated by $C$. The proof of Lemma
\ref{adm} establishes a bijection between $\A$ and products of circles. In
Example \ref{concrete}, $\rz$ is $H$-admissible if and only if
$|\al_k|^2=1/(d\+1)$ for all $1\le k\le n,$ so in particular $\al_n=\pm1/\sqd$
if $d=2n$.

\begin{example}\rm
For $d=3,$ the simplex $\De$ is a filled equilateral triangle with vertices
$(1,0,0),(0,1,0),(0,0,1)$ in $\R^3,$ and \be{H3}\ts \A =
\Big\{\frac23\big(\cos^2\phi,\>\cos^2(\phi+\frac{2\pi}3),\>
\cos^2(\phi+\frac{4\pi}3)\big):\phi\in\left(-\frac\pi2,
\frac\pi2\right]\Big\}\ee is
  its incircle. This circle is generated by the point $[\zz_3]$ in \eqref{z3}
  as $\phi$ varies. The inverse image of each midpoint
  $(0,1,1),(1,0,1),(1,1,0)$ contains three of the nine points in
  \eqref{33}. Returning to \eqref{PhVmu}, the Fourier transform converts $\De$
  into the convex hull of the third roots of unity, and $\A$ into a circle of
  radius $1/2$.
\end{example}\smallbreak  

In order to describe more accurately the shape of $\A,$ we first define a
series of quadratic forms
\[ f_j = \suml_{i=0}^{d-1}\x_i\x_{i+j},\qquad j=0,\ldots,d-1\]
derived from \eqref{alpha}. To make sense of the right-hand side, the range of
indices is extended cyclically, so that $\x_i$ is defined to be equal to
$\x_{i-d}$ if $d\le i\le 2d-1$. In particular,
\[ f_0=\suml_{i=0}^{d-1}\!x_i^2, \]
and $f_{d-j}=f_j$ for $1\le j\le d-1$. If $d$ is even then $f_n=2f_n'$ where
\[ f_n' = \suml_{i=0}^{n-1}\x_i\x_{i+n}.\]
The forms $f_0,\ldots,f_n$ constitute a basis of the space $S^2(\R^d)^*$
of bilinear forms invariant by the action of $\Zd$ cyclically permuting
the $\x_i$.

We can now state

\def\SQRT#1{\sqrt{\kern-3pt\smash{#1}\vphantom{1^1_1}}}

\begin{theorem}\label{main}
Let $d\ge3$. The set $\A$ is the intersection of $\De$ with $n$
quadrics in $\R^d,$ and lies in a round sphere $S^{d-2}$ of radius
$\SQRT{\frac{d-1}{d(d+1)}}$ centred in $\De$. Moreover,
\begin{itemize}[itemsep=5pt,topsep=5pt]
\item if $d=2n+1,$ then $\A=\De\cap T$ where $T$ is a torus of revolution of
  dimension $n$ and radius $\SQRT{\frac2{d(d+1)}}$ in $S^{d-2}$;
\item if $d=2n$ then $\A=\De\cap(T'\sqcup T'')$ where $T',T''$ are 
  tori of revolution of dimension $n-1$ and radii $\SQRT{\frac2{d(d+1)}}$
  in parallel hyperspheres of $S^{d-2}$.
\end{itemize}
\end{theorem}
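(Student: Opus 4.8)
The plan is to work entirely in the Fourier-transformed picture provided by \eqref{PhVmu}, where the condition defining $\A$ becomes transparent. Recall from the proof of Lemma \ref{adm} that $\rz$ is $C$-admissible if and only if each non-identity component $(\Ph_\zz^\bC)_i = \sum_j \om^{ij}\mu^\bC_j(\rz)$ has modulus $1/\sqrt{d+1}$, for $i=1,\ldots,n$ (the remaining components for $i>n$ being determined by the reality constraint $\al_i = \ol{\al_{d-i}}$ in \eqref{sT}). So I would set $\x = \mu^\bC(\rz) \in \De$ and translate ``$|(\Ph_\zz^\bC)_i|^2 = 1/(d+1)$'' into a quadratic equation in the $\x_k$. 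Expanding $|\sum_j \om^{ij}\x_j|^2 = \sum_{j,k}\om^{i(j-k)}\x_j\x_k = \sum_{m=0}^{d-1}\om^{im} f_m(\x)$ (regrouping by $m=j-k$ and using the definition of the cyclic forms $f_m$, together with $f_{d-m}=f_m$), the condition $|(\Ph_\zz^\bC)_i|^2 = 1/(d+1)$ becomes
\[ f_0(\x) + 2\sum_{m=1}^{n'}(\cos\tfrac{2\pi i m}{d})\,f_m(\x) = \tfrac{1}{d+1}, \]
with the obvious adjustment of the top term when $d$ is even (where $f_n = 2f_n'$). These are $n$ genuine quadrics (one for each $i=1,\ldots,n$), cutting out $\A$ inside $\De$; this proves the first sentence. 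One should also note $\sum_{m=0}^{d-1} f_m(\x) = (\sum_i \x_i)^2 = 1$ on $\De$, i.e. the case $i=0$ is automatic.

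Next I would diagonalize this linear system in the forms $f_m$. The coefficient matrix is $(\cos\tfrac{2\pi im}{d})$, essentially the real part of the Vandermonde/DFT matrix $V$, which is invertible on the space spanned by $f_0,\dots,f_n$ (these being a basis of the $\Zd$-invariant quadratic forms, as noted before the theorem). Inverting, the $n$ admissibility equations together with the identity $\sum_m f_m = 1$ are equivalent to prescribing \emph{each} $f_m(\x)$ to an explicit constant: one finds $f_0(\x) = \tfrac{2}{d(d+1)} + \tfrac1d$-type value and $f_m(\x)$ equal to a single common small constant for $1\le m\le n$ (all the $i$-dependence having been inverted away — this is the Fourier inversion on $\Z_d$). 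Concretely I expect $f_0(\x)=\frac{d+2}{d(d+1)}$ and $f_m(\x)=\frac{1}{d(d+1)}$ for $m\ge 1$ (in the odd case), since these must sum to $1$; the even case has $f_n'$ pinned to half that value, which is exactly the splitting $T'\sqcup T''$ coming from the sign of $\al_n=\pm 1/\sqd$ recorded after Lemma \ref{adm}.

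With $f_0(\x)$ now a constant, the sphere statement is immediate: $f_0(\x) = \|\x\|^2$, so $\A$ lies on a sphere in $\R^d$ of radius $\sqrt{f_0}$; intersecting with the hyperplane $\sum\x_i=1$ and recentering at the barycenter $\tfrac1d(1,\dots,1)$ (whose squared norm is $1/d$) gives a sphere $S^{d-2}\subset\De$ of radius $\sqrt{f_0 - 1/d} = \SQRT{\frac{d-1}{d(d+1)}}$, as claimed. For the torus-of-revolution description, I would pass through $V$ to the image $V\A$, which by Lemma \ref{adm} is literally a product of circles of radius $1/\sqrt{d+1}$ in the coordinates $\al_1,\dots,\al_n$ of $\sT$ — that is a Clifford torus $T$ (or $T'\sqcup T''$ when $\al_n$ is real and takes two values), which is $n$-dimensional (resp.\ $(n-1)$-dimensional) and sits inside the sphere $V(S^{d-2})$; transporting back by $V^{-1}$ (a scalar multiple of a unitary, hence an isometry up to scale) gives a torus of revolution in the round $S^{d-2}$. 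The revolution radius $\SQRT{\frac2{d(d+1)}}$ is then read off from the radius $1/\sqrt{d+1}$ of each $\al_i$-circle after accounting for the $1/\sqrt d$ normalization of $V$: the $n$ circles of radius $1/\sqrt{d+1}$ contribute, and dividing appropriately by $\sqrt d$ and by the number of circles gives the stated value. The even case differs only in that one of the complex coordinates degenerates to the real line, dropping the dimension by one and producing the two parallel components.

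The main obstacle I anticipate is purely bookkeeping: getting the constants right and, in particular, correctly tracking the two normalizations that are floating around — the $\tfrac1{\sqrt d}V \in \U(d)$ normalization of the Fourier matrix, and the $\tfrac1{\sqrt{d+1}}$ from correct separation — so that the radius $\SQRT{\frac2{d(d+1)}}$ of the torus of revolution, and its placement in the correct parallel $(d-3)$-sphere of $S^{d-2}$ in the even case, come out exactly. The conceptual content (admissibility $\Leftrightarrow$ each $f_m$ constant $\Leftrightarrow$ sphere $\cap$ Clifford torus) is forced by Fourier inversion on $\Zd$ and Lemma \ref{adm}; nothing deep is needed beyond linear algebra over $\R$ and the identification of $V$ as an isometry up to scale.
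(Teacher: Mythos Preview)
Your approach is essentially the paper's: Fourier on $\Zd$ converts the admissibility conditions $|\al_i|^2=1/(d+1)$ into the linear system $V(f_0,\dots,f_{d-1})^\top = \frac1{d+1}(d+1,1,\dots,1)^\top$, whose unique solution gives the sphere and quadric description; then one parametrises $\al_k=\frac1{\sqd}e^{i\th_k}$ and pulls back through $V^{-1}$ to exhibit the torus of revolution (the paper does this last step by writing down an explicit orthogonal matrix $P$, but your abstract argument that $(1/\sqrt d)V$ is unitary is equally valid).

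Two concrete bookkeeping errors to fix, both of the kind you anticipated. First, your guessed values $f_0=\frac{d+2}{d(d+1)}$, $f_m=\frac1{d(d+1)}$ are wrong: inverting $V$ gives $f_0=\frac2{d+1}$ and $f_m=\frac1{d+1}$ for $m\ge1$ (check: these sum to $1$, and then $f_0-\frac1d=\frac{d-1}{d(d+1)}$ gives the correct sphere radius, whereas your values do not). Second, your explanation of the torus radius is off: the factor $\sqrt2$ in $\sqrt{2/(d(d+1))}$ does \emph{not} come from ``dividing by the number of circles''; it comes from the conjugate pairing $\al_{d-k}=\ol{\al_k}$ in $\sT\subset\C^d$, so that varying $\th_k$ moves two coordinates simultaneously and the circle in $\C^d$ has radius $\sqrt{2/(d+1)}$ before the $1/\sqrt d$ scaling from $V^{-1}$ is applied.
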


\noindent By a torus of revolution of radius $r,$ we mean a Euclidean product
of circles each of radius $r$. In Example \ref{d=4} below, Figure~1 displays
the sphere $S^{d-2}$ for $d=4,$ and this exits the tetrahedron (whose front
face has been removed).\smallbreak

\begin{proof}
We will use $C=H$ to compute $\A$.
Since
\[ |\al_j|^2 = \sum_{i=0}^{d-1} \om^{ij}f_i,\]
the $H$-admissible assumption also implies that
\[ V\left(\kern-3pt\ba{c}
f_0\\f_1\u.\u.\u.\\f_{d-1}\ea\kern-3pt\right) = \frac1{d+1}
\left(\kern-3pt\ba{c}d\+1\\1\u.\u.\u.\\1\ea\kern-3pt\right).\]
The sum of the entries of the $(i+1)$st row of $V$ equals
\[ \sum_{j=0}^{d-1}\om^{ij} = \frac{1-(\om^i)^d}{1-\om} = 0\]
for $1\le i\le d-1$. Therefore the unique solution
to the matrix equation must be given by
\be{must} \half f_0 = f_1 = f_2 = \cdots = f_{d-1} = \frac1{d+1}.\ee
This shows that $\A$ lies in the intersection of the sphere $S^{d-1}$ defined
by $f_0=2/(d+1)$ and the remaining quadrics $f_i=1/(d+1)$ for $1\le i\le n$. It
also lies on the intersection of $S^{d-1}$ with the plane containing $\De,$ and
this small hypersphere has radius $r$ given by
\[  r^2 = \suml_{i=0}^d\left(\x_i-\frac1d\right)^2 =
\frac2{d+1}-\frac2d\suml_{i=0}^d\x_i+\frac1d = \frac{d-1}{d(d+1)},\]
in the notation of \eqref{De}, as stated.

The values of $f_i$ found above are consistent with the equation
\[ 1 = \Big(\sum_{i=0}^{d-1}\x_i\Big)^2 = \sum_{i=0}^{d-1}f_i.\]
When $d=2n$ is even, there is an analogous equation that gives new
information, namely
\[ \al_n\!^2 = \Big(\sum_{i=0}^{d-1}(-1)^i\x_i\Big)^2 =
\sum_{i=0}^{d-1}(-1)^if_i.\]
It follows that
\[ \sum_{i=0}^{d-1}(-1)^i\x_i = \pm\frac1{\sqd}\]
and $\A$ lies in the union of two hyperplanes.

Let $V',V''$ be the real and imaginary parts of $V,$ so that $V'$ is a
matrix of cosines and $V''$ a matrix of sines. Recalling that
$\al_{d-k} = \ol{\al_k},$ set
\[ \sqd\,\al_k = \cos\th_k+i\sin\th_k,\qquad 1\le k\le n.\]
If $d$ is odd then the angles are unconstrained, but if $d$ is even then
$\th_n=0,\pi\mod{2\pi}$ to ensure that $\al_n=\pm1$. Since $d\,V^{-1} =
\ol V = V'-iV''$,
\[ d\sqd\left(\kern-3pt\ba{c}
\x_0\\\x_1\u.\u.\u.\\\x_{d-1}\ea\kern-3pt\right) = V'\!\left(\kern-3pt\ba{c}
\sqd\\\cos\th_1\u.\u.\u.\\\cos\th_1\ea\kern-3pt\right) +
V''\!\left(\kern-3pt\ba{c}
0\\\sin\th_1\u.\u.\u.\\-\sin\th_1\ea\kern-3pt\right).\] Note that the last
column vector has a zero sub-middle entry if $d$ is even since $\Im\al_n=0$. It
follows that \be{xcs} d\sqd\,\x_k = \sqd + \suml_{i=1}^{d-1}(c_{ki}\cos\th_k +
s_{ki}\sin\th_k),\ee where $c_{ki}=\cos(\frac{2\pi ki}d)$ and
$s_{ki}=\sin(\frac{2\pi ki}d)$. There are two cases to consider, according to
the parity of $d$ and the properties of $\cos\th_k,\sin\th_k$ that will reflect
\eqref{sT}.

\begin{itemize}
\item{\bf Case \boldmath$d=2n+1$.} Let $P$ be the $d\times d$ matrix indexed
  with $(i,j)\in\Zd^2$ and
\[ P_{ij}=\left\{ \ba{ll}
\sqrt2 &    \ifbox j=0\\
2c_{ij}    & \ifbox 0<j\le n\\
2s_{i(j-n)} & \ifbox n<j.
\ea\right.\]
Thus, every entry in the first column of $P$ is $\sqrt2,$ and the remaining
entries in the first row of $P$ are $2$ ($n$ times) followed by $0$ ($n$
times). Use of the Dirichlet kernel
\[ \sum_{k=-n}^n e^{k\th} =
\frac{\sin((n+\frac12)\th)}{\sin(\frac12\th)}\]
and elementary trigonometric identities imply that the rows of $P$ are
orthogonal, and that the norm squared of each one equals $2d$. Therefore,
$(1/\!\sqrt{2d})P\in\mathrm{O}(d)$ and \eqref{xcs} implies that
\[  d\sqd\!\left(\kern-3pt\ba{c}
\x_0\\\x_1\\.\\.\\.\\.\\\x_{d-1}\ea\kern-3pt\right) = 
P\!\left(\kern-3pt\ba{c}
\sqrt{n+1}\\\cos\th_1\u.\u.\\\cos\th_n\\\sin\th_1\u.\u.\\\sin\th_n
\ea\kern-3pt\right).\]

\item{\bf Case \boldmath$d=2n$.}
We again index $P$ by $\Zd^2,$ but this time we set
\[ P_{ij}=\left\{\ba{ll}
\sqrt2 & \ifbox j=0\\
2c_{ij}    & \ifbox 0<j<n\\
-(-1)^i\sqrt2 & \ifbox n=j\\
2s_{i(j-n)} & \ifbox n<j.
\ea\right.\]
Once again, $(1/\!\sqrt{2d})P\in\mathrm{O}(d)$. Equation \eqref{xcs} translates
into
\[ d\sqd\!\left(\kern-3pt\ba{c}
\x_0\\\x_1\\.\\.\\.\\.\\\x_{d-1}\ea\kern-3pt\right) = 
P\!\left(\kern-3pt\ba{c}
\sqrt{\vphantom{d}\smash{n+\half}}\\
\cos\th_1\u.\\\cos\th_{n-1}\\
\pm\cos\th_n/\!\sqrt2\\
\sin\th_1\u.\\\sin\th_{n-1}
\ea\kern-3pt\right).\]
\end{itemize}

\noindent In both cases, $\xx$ parametrizes a Clifford type torus (or tori) of
radius $\sqrt{2d}/(d\sqrt{d+1})$.
\end{proof}

\begin{example}\label{d=4}\rm
The case $d=4$ is well understood, and the relatively simple nature of its
overlap map was thoroughly explained in \cite{Bengt}. Nonetheless, the
underlying geometry of circles and golden ratios, described in \cite{Lora},
illustrates the moment map approach. Let $\zz=(\z_0,\z_1,\z_2,\z_3)$ be a unit
vector in $\C^4$ whose Heisenberg orbit $\Z_4^2\cdot\rz$ is a SIC-POVM in
$\CP^4$. Let $\x_i=|\z_i|^2$ and abbreviate $\Ph_\zz$ to $\Ph$. Then
\[\ba{ccccl}
1 &=& \Ph(0,0) &=& \x_0+\x_1+\x_2+\x_3\y
\frac1\5e^{i\th} &=& \Ph(0,1) &=& \x_0+i\x_1-\x_2-i\x_3\y
\pm\frac1\5 &=& \Ph(0,2) &=& \x_0-\x_1+\x_2-\x_3,
\ea\]
for some $\th\in \U(1)$ and choice of sign. It follows that
\be{sitau}
2\5\,\xx = \left\{\ba{l}
\big(\varphi+\cos\th,\>\psi+\sin\th,\>\varphi-\cos\th,\>\psi-\sin\th\big),
\hbox{ or}\y
\big(\psi+\cos\th,\>\varphi+\sin\th,\>\psi-\cos\th,\>\varphi-\sin\th\big),
\ea\right.\ee
where $\varphi=\half(\5+1)$ and $\psi=\half(\5-1)$.

This shows that $\xx$ must belong to the disjoint union of two circular arcs
each of radii $1/(2\5)$ suspended in the hyperplane $\sum\x_i=1$ of $\R^4,$ as
in Figure~1. However, the circles themselves escape the confines of the moment
polytope \eqref{De} (here a solid tetrahedron) that is the image of $\CP^3$. A
related figure appears in \cite[section 6]{ABBGGL}, which combines the moment
map approach with the use of special `spinor' bases in the situation in which
(as here) $d$ is a perfect square.

Change notation so that
$\zz=(ae^{i\alpha},be^{i\beta},ce^{i\ga},de^{i\de})$
with $\x_0\=a^2,\>\x_1\=b^2,\>\x_2\=c^2,\>\x_3\=d^2$ for clarity (so $d$ is
temporarily not a dimension). Then
\[ 0 = |\Ph(2,0)|^2-|\Ph(2,2)|^2 =
16abcd\cos(\alpha-\gamma)\cos(\beta-\de).\]
One can check that none of $a,b,c,d$ can vanish, meaning that (in contrast to
the case $d=3$) points of the SIC-POVM cannot project to the boundary of the
polytope. If we fix the second circle in \eqref{sitau}, we are furthermore
forced to assume that $\alpha-\gamma=\pm\pi/2$. Without loss of generality, we
can then set $\delta=0,$ which implies \be{ga}\ts
\frac15=\Ph(0,2)^2=4b^2d^2\cos^2\!\beta,\ee and
$(\varphi^2-\sin^2\th)\cos^2\beta = 2$.  This relationship can be interpreted
as a link between moment maps arising from different maximal tori.

The equations
\[ |\Ph(1,0)|^2-|\Ph(1,2)|^2 = 0 = |\Ph(1,1)|^2-|\Ph(1,3)|^2\]
allow us to eliminate $\beta$ and deduce that
\[(ad\-bc)(ad\+bc)(ab\-cd)(ab\+cd) = 4a^2b^2c^2d^2\cos^2\beta.\]
One can eliminate $\beta$ using \eqref{ga} to find that \be{csth}\ts
\cos\th=\pm\half\sqrt{3-\sqrt5}=\pm\frac\psi{\sqrt2},\qquad
\sin\th=\pm\half\sqrt{1+\sqrt5}=\pm\frac{\sqrt\varphi}{\sqrt2}.\ee This gives
four possible angles around each circle \eqref{sitau}, represented by the eight
`beads' in Figure~1. If we fix one of these solutions on the second circle, we
have the following choices: 4 for $\alpha,$ for each of these 2 for $\gamma,$
and independently 4 for $\beta$. Choices of signs for $a,b,c,d$ are taken care
of by the different angles and overall phase. This gives a total of 32 fiducial
points lying over each bead, in accordance with the known results \cite{App}.

\begin{center}
\vspace{10pt}  
\scalebox{.42}{\includegraphics{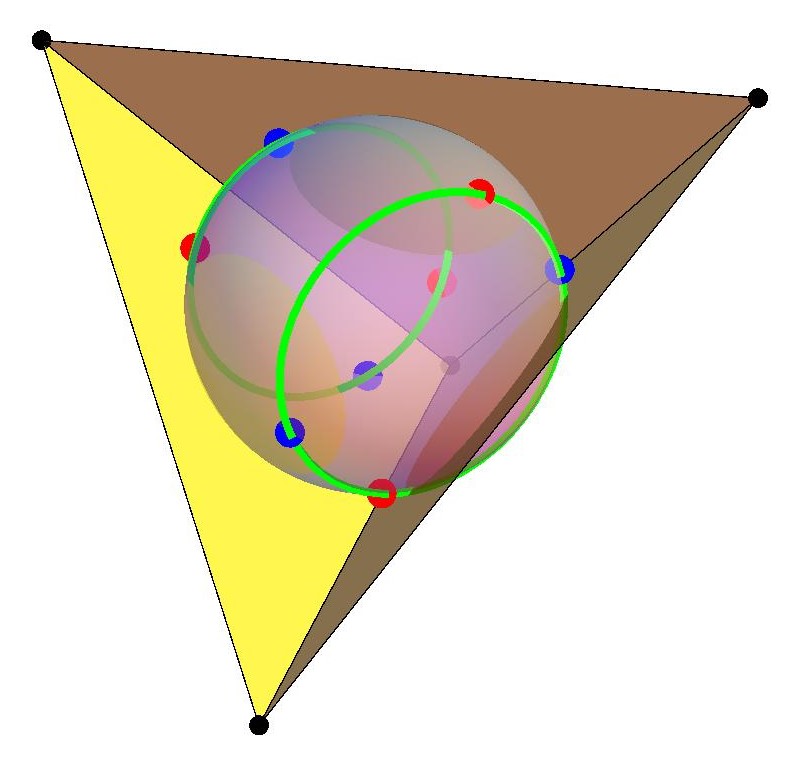}}\y
Figure~1: Fiducial images for $d=4$
\vspace{20pt}
\end{center}

The solution described by Bengtsson \cite[eq.\,(7)]{Bengt} has overlap map
\[ \Ph|\Zd^2 = {\ts\frac1\5}\!\left(\!\ba{cccc} 
\5 & u & -1 & \ol u\\ 
u  & \ol u & -\ol u & \ol u\\
-1  & -u & -1 & \ol u\\
\ol u & u & u & u
\ea\!\right),\]
where
\[\ts u=e^{i\th}=\sqrt5\,\Ph(0,1)=\frac1\2(\psi+i\,\sphi)\]
in accordance with \eqref{csth} and \cite[eq.\,(8)]{Bengt}. Its associated
fiducial vector $\zz$ (unique up to phase) can be recovered from the first
column of \eqref{recover}. Set $\rho=1+\sqrt2$ and
$x=\tan\th=\sphi/\psi=\sqrt{2+\sqrt5}$. Then
\[ \rz = \Big[2\rho,\ 1+\rho x + i(\rho+x),\ 2i,\
1-\rho x+i(\rho-x)\Big],\] and $\rz$ projects to a point on the second
circle in \eqref{sitau}. There are however simpler ways to express such a
fiducial vector, see \cite{Bend,IB}.

Let $\K=\QD=\Q(\5)$ as in the Introduction, noting that this is a subfield of
$\E_1=\Q(e^{i\th})$. The minimum polynomial of $e^{i\th}$ over $\Q$ has degree
$8,$ whereas its splitting field $\E=\E_1(i)$ has degree 16 over $\Q$. Indeed,
the Galois group $\Gal(\E/\Q)$ is isomorphic to $\Z_2\times D_8,$ where $D_8$
is the dihedral group, and $\K$ is the fixed field of the normal subgroup
$\Z_2\times V_4$. Therefore, $\E$ is an \emph{abelian} extension of $\K,$ and
$\E$ contains a fiducial vector $\zz.$ However, all the information of the
SIC-POVM is provided by $e^{i\th},$ which is a unit in the smaller field
$\E_1$. In Figure~1, the $8$ beads form an orbit of $D_8$ acting as rotations
of the tetrahedron.
\end{example}\medbreak

Lemma \ref{adm} and Theorem \ref{main} help one to grasp the essence of the
SIC-POVM problem. Whether $d$ is even or odd, the set of $H$-admissible points
is characterized by the equations \be{ff} f_1=\cdots=f_n=\frac1{d+1},\ee
derived from \eqref{must}, assuming the normalization $f_0=1$. They
characterize a real subvariety of $\CP^{d-1}$ of codimension $n$. For each
$C\in\P\Zd^2,$ we can consider the translate of this variety determined by an
element of the Clifford group mapping $H$ to $C$. The resulting equations were
in effect already written down by several authors \cite{ADF,FHS,Kha}. The
SIC-POVM existence question is whether these subvarieties have a non-empty
intersection as $C$ varies over $\P\Zd^2,$ whose size is determined by Lemmas
\ref{mn} and \ref{lemSizePZ} below. For example, if $d$ is prime then there are
$d+1$ subgroups and subvarieties to consider.

Given that $\CPD$ has real dimension at most $4n,$ one might conjecture that
four subgroups can always be found to reduce the set of solutions to be finite
provided $d>3$. Such a statement appears to be a weaker version of the $3d$
conjecture of \cite{FHS}, see also \cite{ADF}. An infinitesimal version of this
setup is the following. For each $\rz\in\CPD$ and $C\in\P\Zd^2,$ one can in
theory compute the tangent space to the level set of the $\T^C$-invariant real
polynomial $f_i$ at $\rz$. Knowledge of the configuration of these spaces as
$i$ and $C$ vary would have a bearing on the finiteness question for Heisenberg
SIC-POVMs, but is an independent problem that may be accessible for small
values of $d$.

\setcounter{equation}{0}
\section{Overlap symmetries and Galois conjugation}\label{secGalois}

Suppose that $\zz\in\C^d$ is a fiducial vector for a Heisenberg SIC-POVM. This
means that $\bk\zz\zz=1$ and any two points of the orbit $(W\times
H)\cdot[\zz]$ are correctly separated, recall Definition \ref{cs} and
\eqref{wh}. Throughout the ensuing discussion, we fix the unit vector $\zz$ and
introduce fields and groups that depend implicitly on $\zz$.

Define the symmetry group $S$ of $\Ph_\zz$ by
\[ S:=\{G\in\GL_2\Zbd\colon \Ph_\zz\circ G = \Ph_\zz\}.\]
Thus, $S$ consists of automorphisms of $\Zbd^2$ that have no effect on the
overlap phases. Zauner conjectured that $S$ always contains an element $F$ of
order 3 with trace $=-1,$ an example being
\[ F_z:=\begin{pmatrix} 0 & -1 \\ 1 & -1\end{pmatrix}.
\]

\begin{remark}\rm
When $d$ is even, the relations \ref{eqnDShift} show that $(d+1)I\in S,$ and
for this reason, the matrix
\[ \hat F_z:=\begin{pmatrix} 0 & d-1 \\ d+1 & d-1 \end{pmatrix}\] is often used
instead of $F_z$. When $d$ is odd, obviously $F_z=\hat F_z,$ but when $d$ is
even, $\hat F_z$ is order $6$ in $\GL_2\Zbd,$ with $\hat F_z^2=F_z^2$ and $\hat
F_z^3 = (d+1)I$.
\end{remark}

\begin{definition}[\cite{AYZ}]\label{csc}
The fiducial $\zz$ is called \emph{centred} if $S$ contains an element $F$ of
order 3 with trace $-1$ and is displacement-free, in the sense that the corresponding Clifford unitaries are of the form $U_{F,0}$ for some $F\in\GL_2\Z_\bd$. A centred fiducial $\zz$ is said to be {
  \emph{type}-$z$} if $F$ is conjugate to $F_z,$ and { \emph{type}-$a$}
otherwise.
\end{definition}

To understand the different types, first note that by
\cite{hillar2007automorphisms},
$\GL_2\Z_{\bd} \cong \prod_{p|d}\GL_2\Z_{p^{r_p}},$
where $\bd = \prod_{p|d}p^{r_p}.$ The image $F_{p^{r_p}}\in\GL_2\Z_{p^r_p}$ of
$F$ under each of the projections must also have trace $-1$ and order a factor
of $3$. When the order is $3,$ then it is conjugate to $F_z$. The only
alternative is order $1,$ so that $F_{p^{r_p}}=I$. The condition of having
trace $-1$ then forces $p^{r_p}=3$.

To summarize this, $\zz$ is always type-$z$ unless $\gcd(9,d)=3$ and $F\equiv
I\mod3$.

The centred condition allows us to understand the Galois action. Recall that
$\E_1$ is the field extension of $\K$ generated by the image of $\Ph_\zz$. Let
$\E_0$ be the maximal subfield of $\E_1$ such that the Galois group
$\G=\Gal(\E_1/\E_0)$ fixes the image of $\Ph_\zz$ as an unordered set.

\begin{theorem}[\cite{AYZ}]\label{AYZ}
Let $\zz$ be a centred fiducial. For each $g\in\G,$ there exists some
\\$G_g\in\GL_2\Zbd$ and $r_g\in\Zbd^2$ such that
$$g\circ\Ph_\zz(\bfp) = \left\{\ba{ll}
\Ph_\zz(G_g\bfp) & \ifbox 3\centernot|d \y
\si^{\langle r_g,\bfp\rangle}\Ph_\zz(G_g\bfp) & \ifbox 3|d
\ea\right.,$$
  where $\si$ is a third root of unity.
\end{theorem}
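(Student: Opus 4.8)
\textbf{Proof proposal for Theorem \ref{AYZ}.}

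The plan is to leverage the unitary basis property of the displacement operators together with the recovery formula \eqref{recover}. The starting point is that, because $\zz$ is centred, the overlap map $\Ph_\zz$ determines $\rz$ (up to the $\U(1)$ phase and the choice of $z_0\neq 0$ normalization) via \eqref{recover}, and conversely all the entries $\Ph_\zz(\bfp)$ lie in $\E_1$ by definition. First I would show that for any $g\in\G=\Gal(\E_1/\E_0)$, applying $g$ entrywise to the matrix $\Ph_\zz$ again produces the overlap map of \emph{some} fiducial $\zz^g$: indeed $g$ permutes the set $\{\Ph_\zz(\bfp)\}$ as an unordered set by the very definition of $\E_0$, and since the defining SIC conditions (all non-identity entries have modulus $1/\sqd$, and \eqref{recover} reconstructs a valid rank-one projection summing correctly) are expressible over $\Q$ and hence preserved by $g$, the Galois-transformed matrix is $\Ph_{\zz^g}$ for a genuine fiducial vector $\zz^g\in\E_1^d$. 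Here one must be a little careful because $g$ need not commute with complex conjugation; but $\E_1/\Q$ is the relevant field (conjugation itself is an element of a larger Galois group), and the constraint $\Ph_\zz(-\bfp)=\overline{\Ph_\zz(\bfp)}$ can be tracked through the argument --- this is one of the places where the hypothesis $3\centernot|d$ versus $3|d$ first enters, since the admissible phase twists differ.

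Next I would invoke the structure of the Clifford group: since $\zz$ and $\zz^g$ are both fiducials, and by the centred hypothesis their stabilizers under the extended Clifford group both contain an order-3 element of trace $-1$, a standard rigidity argument (the kind used in \cite{App, appleby2018constructing}) shows that $\zz^g$ is Clifford-equivalent to $\zz$. By Definition \ref{csc} the relevant Clifford unitaries are displacement-free of the form $U_{F,0}$, so there is some $F=G_g\in\GL_2\Zbd$ (possibly together with a residual displacement, which is where $r_g$ comes from) with $U_{G_g,0}\cdot\zz^g = \zz$ up to phase. Conjugating the recovery identity \eqref{recover} by $U_{G_g,0}$ and using the covariance relation \eqref{UU}, namely $U(F,0)\,\D_\bfp\,U(F,0)^{-1}=\D_{F\bfp}$, translates the statement $g\circ\Ph_\zz = \Ph_{\zz^g}$ into $g\circ\Ph_\zz(\bfp) = \Ph_\zz(G_g\bfp)$ in the case $3\centernot|d$. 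When $3|d$ the Clifford representation $U$ of $\SL_2\Zbd\ltimes\Zd^2$ is no longer faithful/surjective onto $\rN(d)/\U(1)$ in the same clean way, and the leftover ambiguity is exactly a character $\bfp\mapsto\si^{\langle r_g,\bfp\rangle}$ with $\si$ a cube root of unity --- this accounts for the extra factor in the displayed formula.

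The main obstacle I anticipate is the second step: proving that the Galois-conjugate fiducial $\zz^g$ is actually Clifford-equivalent to the original $\zz$, rather than merely being some other fiducial in the same dimension. This is essentially a uniqueness/rigidity statement about centred fiducials with a prescribed order-3 symmetry, and in full generality it is delicate --- it is closely tied to the (conjecturally finite) classification of fiducials and to Zauner's conjecture. I expect that the cited reference \cite{AYZ} handles this by working within a fixed Clifford orbit from the outset (i.e.\ one restricts attention to fiducials obtained from a given one by Galois conjugation and shows the orbit is Clifford-stable), which sidesteps the need for a global classification. The bookkeeping of the cube-root-of-unity twist when $3|d$, and verifying that $r_g$ can be chosen consistently (e.g.\ additively in $g$ up to coboundaries), is then the remaining technical point, but it reduces to a finite computation in $\GL_2\Z_{3^{r_3}}$ using the splitting $\GL_2\Zbd\cong\prod_{p|d}\GL_2\Z_{p^{r_p}}$ already recorded in the text.
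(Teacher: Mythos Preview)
The paper does not give a proof of this theorem. It is stated with the citation \cite{AYZ} and used as a black box; the text immediately moves on to the consequences (the simplification under the strongly centred hypothesis and Conjecture~\ref{conjM}). So there is no ``paper's own proof'' to compare your proposal against.

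That said, your outline is broadly the shape of the argument in the cited source, with one substantive gap you have already flagged and one you have not. The step you correctly isolate as the obstacle --- that the Galois conjugate $\zz^g$ is extended-Clifford equivalent to $\zz$ --- is not obtained by any rigidity or classification statement about order-$3$ symmetries. Rather, it comes from the way $\E_0$ is defined: $\G$ is precisely the subgroup of $\Gal(\E_1/\Q)$ that preserves the \emph{set} $\{\Ph_\zz(\bfp)\}$, and the work is to show that any such set-preserving field automorphism must act on the index set $\Zbd^2$ through $\GL_2\Zbd$ (up to the cube-root twist). Your sketch conflates this with a statement about Clifford orbits of fiducials in general, which is both stronger and unproven.

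The point you have not flagged is that $g$ acts on $\E_1$, not on $\E$, so you cannot simply ``apply $g$ entrywise'' to the projector $\zzz$ (whose entries live in $\E$) or to the reconstruction formula \eqref{recover} (which involves $\tau$). One must first extend $g$ to $\Gal(\E/\E_0)$, and the ambiguity in that extension, together with the failure of $g$ to commute with complex conjugation, is exactly what produces the $r_g$ term when $3\mid d$. Your remark that ``this is one of the places where the hypothesis $3\nmid d$ versus $3\mid d$ first enters'' is right in spirit but the mechanism is this extension issue, not the constraint $\Ph_\zz(-\bfp)=\overline{\Ph_\zz(\bfp)}$.
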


We will soon see that the above formula can be simplified if we assume a
stronger condition, which we will now work to motivate. Let $\E$ be the field
extension of $\K=\QD$ generated by the $\bd$th root of unity $\tau$ and the
components of the projector $\zzz$.

\begin{lemma}
	$\E = \E_1(\tau)$.
\end{lemma}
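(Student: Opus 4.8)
The plan is to prove the two inclusions $\E_1(\tau)\subseteq\E$ and $\E\subseteq\E_1(\tau)$ separately, exploiting the recovery formula \eqref{recover} which expresses $\zzz$ in terms of the overlap phases and the operators $\D_\bfp^*$.

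For the inclusion $\E_1(\tau)\subseteq\E$, I first note that $\tau\in\E$ by definition, so it suffices to show $\E_1\subseteq\E$. Recall that $\E_1$ is generated over $\K$ by the image of $\Ph_\zz$, i.e.\ by the numbers $\Ph_\zz(\bfp)=\bmk\zz{\D_\bfp}\zz=\tr(\D_\bfp\zzz)$ for $\bfp\in\Zbd^2$. Since each $\D_\bfp$ is of the form $\tau^{p_1p_2}w^{p_1}h^{p_2}$, its entries lie in $\Q(\tau)\subseteq\E$ (the entries of $w$ are $0,1$ and the entries of $h$ are powers of $\om=\tau^{-2}$). Therefore $\tr(\D_\bfp\zzz)$ is a $\Q(\tau)$-linear combination of the components of $\zzz$, all of which lie in $\E$ by definition of $\E$. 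Hence every $\Ph_\zz(\bfp)\in\E$, so $\E_1\subseteq\E$ and thus $\E_1(\tau)\subseteq\E$.

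For the reverse inclusion $\E\subseteq\E_1(\tau)$, I use \eqref{recover}: restricting the index to $\Zd^2$ (which carries no loss since shifting by $d$ only changes signs, by \eqref{eqnDShift}), we have $\zzz = \frac1d\sum_{\bfp\in\Zd^2}\Ph_\zz(\bfp)\,\D_\bfp^*$. Each entry of $\D_\bfp^* = \tau^{-p_1p_2}h^{-p_2}w^{-p_1}$ lies in $\Q(\tau)\subseteq\E_1(\tau)$, and each coefficient $\Ph_\zz(\bfp)$ lies in $\E_1\subseteq\E_1(\tau)$. Therefore every component of $\zzz$ lies in $\E_1(\tau)$. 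Since $\E$ is generated over $\K$ by $\tau$ and the components of $\zzz$, and $\K=\QD\subseteq\E_1$ (as $\E_1$ is an extension of $\K$), we conclude $\E\subseteq\E_1(\tau)$.

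Combining the two inclusions gives $\E=\E_1(\tau)$. The only point requiring mild care is the bookkeeping when $d$ is even: there one must confirm that the restriction of the sum in \eqref{recover} to $\Zd^2$ genuinely reconstructs $\zzz$ and that the sign ambiguities from \eqref{eqnDShift} do not introduce quantities outside $\Q(\tau)$ — but since those ambiguities are merely factors of $\pm1\in\Q$, this causes no difficulty. I do not expect any serious obstacle; the statement is essentially an unwinding of the recovery formula together with the observation that all the matrix operators $\D_\bfp$ have entries in $\Q(\tau)$.
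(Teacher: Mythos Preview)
Your proof is correct and follows essentially the same approach as the paper: both directions use that the entries of $\D_\bfp$ lie in $\Q(\tau)$, with the forward inclusion coming from $\Ph_\zz(\bfp)=\tr(\D_\bfp\zzz)$ and the reverse from the recovery formula \eqref{recover}. Your version is simply more explicit about the details (in particular the even-$d$ bookkeeping), but the argument is the same.
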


\begin{proof}
Note that each $\D_\bfp$ is a matrix with entries in $\Q(\tau)$. The
formula $\Ph_\zz(\bfp)=\tr(\D_{\bfp}\zzz)$ shows that $\Ph_\zz(\bfp)$
takes values in $\E,$ so that $\E_1\le \E$.
	
For the other inclusion, \eqref{recover} implies that $\zzz\in\E_1(\tau),$ so
that $\E\le \E_1(\tau)$.
\end{proof}

Working more carefully, we note that $\zzz$ being Hermitian means that
\[ \Ph_\zz(\bfp)=\tr(\D_{\bfp}\zzz)\in \Q(\zzz,\Re\tau).\]
In all of the known solutions, we actually have $\Re\tau\in\E_1$. Moreover, if
the square-free part $\hd$ of $\bd$ is congruent to $1$ modulo $4,$ then $\E_1$
contains $\Im\zeta_\hd,$ where $\zeta_\hd$ is a $\hd$th root of unity.

\begin{lemma}
  If $\E_1$ contains $\Im\zeta_\hd$ whenever $\hd\equiv1\mod4,$
  then $\E\ge\E_1(i\sqrt{\bd})$.
\end{lemma}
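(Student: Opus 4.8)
The plan is to reduce the claim to the classical fact that the quadratic Gauss sum of modulus $\bd$ already lies in the cyclotomic field $\Q(\zeta_\bd)$, which by the preceding lemma ($\E=\E_1(\tau)$) is contained in $\E$. Since moreover $\E_1\subseteq\E$, it is enough to prove that $i\sqrt\bd\in\E$.

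First I would check that $\tau=-e^{\pi i/d}=e^{2\pi i(d+1)/\bd}$ is a \emph{primitive} $\bd$\,th root of unity: the exponent $d+1$ is coprime to $\bd$ (trivially when $d$ is odd, and because $d+1$ is then odd and coprime to $d$ when $d$ is even), so $\tau$ has order $\bd$ and $\Q(\zeta_\bd)=\Q(\tau)\subseteq\E$. Consequently every quadratic field whose conductor divides $\bd$ lies in $\E$; concretely I would invoke the quadratic Gauss sum $g=\sum_{k=0}^{\bd-1}e^{2\pi ik^2/\bd}\in\Z[\zeta_\bd]\subseteq\E$ together with Gauss's evaluation $g=(1+i)\sqrt\bd,\ \sqrt\bd,\ 0,\ i\sqrt\bd$ according as $\bd\equiv0,1,2,3\pmod 4$. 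Here $\bd\equiv2\pmod 4$ is impossible: $\bd=d$ is odd when $d$ is odd, and $\bd=2d$ is divisible by $4$ when $d$ is even. So only three cases remain.

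When $d$ is even, $\bd\equiv0\pmod 4$, so $i=\zeta_4$ and $\sqrt\bd=g/(1+i)$ both lie in $\Q(\zeta_\bd)\subseteq\E$, giving $i\sqrt\bd\in\E$ with no use of the hypothesis. When $d\equiv3\pmod 4$, $i\sqrt\bd=g\in\Q(\zeta_\bd)\subseteq\E$ directly. The hypothesis is needed only when $d\equiv1\pmod 4$: then $g=\sqrt\bd\in\E$, but $\bd=d$ is odd so $i\notin\Q(\zeta_\bd)$, and one must still place $i$ in $\E$. In this case $\hd\equiv1\pmod 4$ automatically, since $\bd=\hd m^2$ with $m$ odd forces $\hd\equiv\bd\pmod 4$; so by hypothesis $\Im\zeta_\hd\in\E_1$. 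Because $\hd\mid\bd$, both $\zeta_\hd$ and $\Re\zeta_\hd=\tfrac12(\zeta_\hd+\zeta_\hd^{-1})$ lie in $\Q(\zeta_\bd)\subseteq\E$, hence $i\,\Im\zeta_\hd=\zeta_\hd-\Re\zeta_\hd\in\E$; dividing by $\Im\zeta_\hd\in\E_1\subseteq\E$, which is nonzero since $\hd\ge5$, gives $i\in\E$ and therefore $i\sqrt\bd=ig\in\E$.

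I expect the only genuinely delicate point to be the degenerate case $\hd=1$ (that is, $\bd$ a perfect square): it is harmless when $\bd\equiv0\pmod 4$, being covered by the first case, but when $\bd$ is an odd square one would need $i\in\E_1$, which the stated hypothesis (then only asserting $\Im\zeta_1=0\in\E_1$) does not provide, so I would either impose $\hd>1$ or note that this situation does not arise for the fiducials under consideration. The rest is routine bookkeeping: tracking $\bd\bmod 4$ against the parity of $d$ and keeping the sign in Gauss's formula straight.
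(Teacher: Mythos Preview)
Your proof is correct and follows the same three-case skeleton as the paper (splitting on $\bd\bmod 4$, equivalently on $\hd\bmod 4$ for odd $d$ together with the even case), including the identical manoeuvre in the $\hd\equiv1$ case of extracting $i$ from $i\,\Im\zeta_\hd\in\Q(\zeta_\bd)$ and $\Im\zeta_\hd\in\E_1$. The one genuine difference is the tool used to place $\sqrt{\pm\bd}$ inside $\Q(\zeta_\bd)$: the paper invokes Kronecker--Weber applied to the square-free part $\hd$ (so that $\sqrt{\pm\hd}\in\Q(\zeta_\hd)$, and then one multiplies by the integer $m$ with $\bd=\hd m^2$), whereas you use the quadratic Gauss sum of modulus $\bd$ directly. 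Your route is marginally more self-contained and avoids the small bookkeeping with $\hd$ versus $\bd$; the paper's route makes the dependence on the square-free part explicit. Either way the content is the same.

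Your flagging of the degenerate case $\hd=1$ (odd $\bd$ a perfect square) is apt: the paper's argument has the same gap there, since its division by $\Im\zeta_\hd$ is equally invalid when $\hd=1$. In practice this is handled by the empirical remark preceding the lemma, but you are right that the hypothesis as literally stated is vacuous in that case.
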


\begin{proof}
Since $\E = \E(\tau),$ it suffices to show $i\sqrt\bd\in\E$. By the
Kronecker-Weber theorem, for any square-free integer $n,$ $\Q(\sqrt n)\le
\Q(\zeta_{|\De|}),$ where
$$\De= \left\{\ba{ll}
n & \ifbox n\equiv 1\mod4\y
4n &\ifbox n\equiv 2,3\mod4
\ea\right.$$
is the discriminant of $\Q(\sqrt n)$. We proceed by taking cases:
\begin{itemize}[itemsep=5pt,topsep=5pt]
\item{\bf Case \boldmath$\hd\equiv 3\mod4$.} Taking $n=-\hd$ gives
  $i\sqrt\bd = \sqrt{-\hd}\in\Q(\zeta_\hd)\le\Q(\zeta_\bd)\le\E.$
\item{\bf Case \boldmath$\hd\equiv 1\mod4$.} Taking $n=\hd$ gives
  $\sqrt\bd\in\E$. But we also have
  $$\E\ge \Q(\tau)\ge \Q(\zeta_\hd)\ge\Q(i\Im\zeta_\hd),
  \qbox{and}\E\ge\E_1\ge\Q(\Im\zeta_\hd),$$
so that $i\in\E$. Thus $i\sqrt\bd\in\E$.
\item{\bf Case \boldmath$2|d$.} Then $4|\bd$. Taking $n$ to be the negative
  of the square-free part of $\bd/4,$ we find that $\Q(i\sqrt\bd) =
  \Q(\sqrt n)\le \Q(\zeta_{4n})\le\Q(\zeta_\bd)\le \E$.
\end{itemize}
The proof is complete.\end{proof}

This motivates

\begin{definition}[\cite{appleby2018constructing}]\label{scentred}
  A centred fiducial is called \emph{strongly centred} if the image of
  $\Ph_\zz$ generates a field extension $\E_1$ of $\K$ for which $\E =
  \E_1(i\sqrt\bd).$
\end{definition}

\noindent One consequence of being strongly centred is that the $r_g$ from
Theorem \ref{AYZ} vanishes. By \cite{appleby2018constructing}, this gives a
cleaner formula with no condition on $d$:
\be{defG}g\circ\Ph_\zz = \Ph_\zz\circ G_g.\ee
Every known fiducial is equivalent under the action of the Clifford group to a
strongly centred one. \smallbreak

In Theorem \ref{AYZ}, $G_g$ is well defined up to multiplication by an element
of $S$. It follows that there exists a subgroup $M$ of $\GL_2\Zbd$ in the
centralizer $C(S)$ of $S$ such that $\G\cong M/S,$ with the isomorphism given
by $g\mapsto G_gS$. All known solutions satisfy the following

\begin{conjecture} [{\rm\cite{appleby2018constructing}}]\label{conjM}
$M$ is a maximal abelian subgroup of $C(S)$.
\end{conjecture}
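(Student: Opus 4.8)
The plan is to localize the statement and to see that all the non-commutativity is concentrated at the prime $3$. By \cite{hillar2007automorphisms} one has $\GL_2\Zbd\cong\prod_{p\mid d}\GL_2\Z_{p^{r_p}}$ with $\bd=\prod_{p\mid d}p^{r_p}$, and since $\Ph_\zz$ — and hence $S$, $M$ and the centralizer $C(S)$ — respects this product, it suffices to prove, at each $p\mid d$, that the local component $M_p$ is a maximal abelian subgroup of $C(S_p):=C_{\GL_2\Z_{p^{r_p}}}(S_p)$; a product of maximal abelian subgroups is maximal abelian in the product of the ambient groups.

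At every prime $p\ne3$ I would argue as follows. The Zauner element $F\in S$ of trace $-1$ and order $3$ reduces mod $p$ to a non-scalar matrix (a scalar of trace $-1$ with $F^3=I$ would force $p\mid9$), so the commutative subalgebra $\Z_{p^{r_p}}[S_p]\subseteq\gl_2\Z_{p^{r_p}}$ is free of rank $2$; a short Nakayama argument identifies it with its own centralizer, so $C(S_p)=\big(\Z_{p^{r_p}}[S_p]\big)^\times$ is already abelian and the only issue at $p$ is whether $M_p$ exhausts it. Since $F$ has characteristic polynomial $x^2+x+1$, this algebra is $\Z_{p^{r_p}}[\omega_3]$, and the congruence $(d-3)(d+1)\equiv-3\pmod d$ shows that $3$ is split, ramified or inert in $\K=\QD$ exactly as in $\Q(\sqrt{-3})$ at each such $p$, whence $\Z_{p^{r_p}}[S_p]=\OK\otimes\Z_{p^{r_p}}$; for an algebraic fiducial the relation $M\cong\big(\OK(\bd)\big)^\times$ now yields $M_p=C(S_p)$ exactly, and for a non-algebraic fiducial the ``simple modification'' recorded after Proposition \ref{propTypeCriteria} is exactly what is needed here (a factor that would be type $a$ is instead type $z$, so $\Z_{p^{r_p}}[S_p]$ remains the unramified quadratic algebra and $M_p$ is still the full, abelian, centralizer).

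The genuinely non-commutative case is $p=3$ in the type-$a$ situation, which by the discussion after Definition \ref{csc} forces $3\Vert d$ and $F\equiv I\mod3$; it is only then that $C(S_3)$ can fail to be abelian, namely when the image of $S$ is scalar in $\GL_2\F_3$, in which case $C(S_3)=\GL_2\F_3$. Here I would list the maximal abelian subgroups of $\GL_2\F_3$ up to conjugacy — the split torus $\F_3^\times\times\F_3^\times$ of order $4$, the order-$6$ unit group of the dual-number algebra $\F_3[t]/(t^2)$, and the non-split torus $\F_9^\times$ of order $8$ — and verify that Proposition \ref{propTypeCriteria} is precisely the dictionary assigning types $a_4$, $a_6$, $a_8$ to these (the subscripts recording the orders) according as $3$ splits, ramifies or is inert in $\K$, as read off from $D$ modulo $3$ together with the congruence $d\equiv3\mod{27}$ that isolates the ramified subcase; for an algebraic fiducial $M_3=(\OK/3)^\times$ is visibly one of these three, and a direct computation in the $48$-element group $\GL_2\F_3$ confirms each is maximal abelian. (When $3\nmid d$, or when $3\mid d$ but $F$ is non-scalar mod $3$, the prime $3$ is already handled by the previous paragraph.) Abelianness of $M$ as a whole then comes for free: away from $3$ each $M_p$ lies in a commutative algebra, and at $3$ it is one of the listed abelian groups.

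The main obstacle is that two of the inputs are at present only conjectural: that $M$ is as large as $\big(\OK(\bd)\big)^\times$ for algebraic fiducials — equivalently that $\G\cong M/S$ attains the full index $|C(S_p)/S_p|$ at each $p$ — and the precise identification of the maximal abelian subgroup $M_3$, both of which one reads off from the conjectural description of $\E_1$ as (an ``algebraic'' extension of) the ray class field $\K\big((\bd)\infty_1\big)$, using Theorems \ref{rcf} and \ref{thmExtOfRCF} to control $S$. Thus the argument as sketched proves Conjecture \ref{conjM} only conditionally on that description; an unconditional proof would seem to require an intrinsic reason, independent of class field theory, why the Galois action on the overlap phases can never be enlarged past a maximal abelian subgroup of $C(S)$.
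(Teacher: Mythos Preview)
The statement you are trying to prove is a \emph{conjecture} in the paper, not a theorem: the authors do not prove it at all, but explicitly write ``For the remainder of the paper, we shall assume this conjecture\ldots'' and proceed to develop consequences. So there is no proof in the paper to compare your attempt against.

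More importantly, your argument is circular in a way that goes beyond the conditionality you flag in your final paragraph. The very notion of an \emph{algebraic fiducial} is defined in the paper only after Conjecture~\ref{conjM} has been assumed, and its definition is precisely that $\hot\colon M\to(\OK/(\bd))^\times$ is an isomorphism. When you write ``for an algebraic fiducial the relation $M\cong(\OK/(\bd))^\times$ now yields $M_p=C(S_p)$ exactly,'' you are invoking a hypothesis whose statement presupposes knowing what $M$ is --- and the paper's identification of $M$ rests on the conjecture. Likewise, Proposition~\ref{propTypeCriteria} and Theorems~\ref{rcf} and~\ref{thmExtOfRCF}, which you invoke, are all proved in the paper \emph{under the standing assumption} of Conjecture~\ref{conjM}. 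So what your argument actually establishes is a consistency check: \emph{if} $M$ has the conjectured structure, then that structure is indeed a maximal abelian subgroup of $C(S)$. That is a reasonable sanity check, but it is not a proof of the conjecture, conditional or otherwise.

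Your closing paragraph correctly identifies that an unconditional proof would need an intrinsic reason why the Galois action cannot be enlarged past a maximal abelian subgroup; but you should also recognize that even your ``conditional'' version is not conditional on the ray class field description alone --- it is conditional on Conjecture~\ref{conjM} itself, via the definitions and results you import.
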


\noindent{\em For the remainder of the paper, we shall assume this conjecture and that
  $\zz$ is strongly centred.}\medbreak
  
By Theorem \ref{AYZ} and the definition of $S,$ the right action of $M$ on
$\Ph_\zz$ relates the overlap phases by (possibly trivial) Galois
conjugation. For a fixed $d,$ different fiducials may have different symmetry
groups. However, $M$ is determined only by $d$ and the type of the fiducial.

Assuming Conjecture \ref{conjM}, we can compute the structure of the group $M$. The first step has been taken in the type-$z$ case:

\begin{lemma} [{\rm\cite{AYZ}}] \label{lemCentZ}
  For type-$z$ fiducials,
$M = C(S)=\Zbd[I,F]^\times.$
\end{lemma}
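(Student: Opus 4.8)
The plan is to compute $C(S)$ explicitly and then invoke Conjecture \ref{conjM} to pin down $M$ inside it. For a type-$z$ fiducial, the group $S$ contains (by Zauner's conjecture, now part of the centred hypothesis) an element $F$ conjugate to $F_z = \begin{pmatrix} 0 & -1 \\ 1 & -1\end{pmatrix}$, which has order $3$ and trace $-1$. So the first task is to identify the centralizer of $F_z$ in $\GL_2\Zbd$. Since $F_z$ satisfies $F_z^2 + F_z + I = 0$, the subring $\Zbd[F_z] \subseteq \mathrm{Mat}_2\Zbd$ is a quotient of $\Z[x]/(x^2+x+1)$, i.e. a copy of $\Z[\omega_3]/(\bd)$ (the Eisenstein integers mod $\bd$); one checks it is all of $\mathrm{Mat}_2\Zbd$-commuting-with-$F_z$ by a direct $2\times2$ computation, using that $I, F_z$ are a $\Zbd$-basis for $\{G : GF_z = F_zG\}$. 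Hence $C_{\GL_2\Zbd}(\langle F_z\rangle) = \big(\Zbd[F_z]\big)^\times = \Zbd[I,F]^\times$. By conjugation-invariance of the construction (the type-$z$ hypothesis says $F$ \emph{is} $\GL_2\Zbd$-conjugate to $F_z$), the same holds with $F$ in place of $F_z$.

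The second step is to argue $C(S) = C(\langle F\rangle)$, i.e. that for type-$z$ fiducials the full symmetry group $S$ has the same centralizer as its order-$3$ element. One direction is trivial: $\langle F\rangle \le S$ gives $C(S) \le C(\langle F\rangle)$. For the reverse, I would show that every $G$ commuting with $F$ already commutes with all of $S$. The ring $\Zbd[I,F] \cong \Zbd[\omega_3]$ is commutative, so it suffices to know $S \subseteq \Zbd[I,F]$ — but that is essentially the statement being proved, so instead I would argue structurally: $S$ is contained in $C(F)$ already? No — rather, use that $S$ must be abelian (Conjecture \ref{conjM} forces $M$, hence $S \le M$, to be abelian) and contains $F$, together with the fact that the centralizer of $F$ in $\GL_2\Zbd$ is itself the commutative group $\Zbd[I,F]^\times$; since $S$ is an abelian group containing $F$ with $S \subseteq C(F) = \Zbd[I,F]^\times$, and the latter is commutative, we get $C(S) \supseteq C(C(F)) = C(\Zbd[I,F]^\times) = \Zbd[I,F]^\times = C(F)$, so $C(S) = C(F) = \Zbd[I,F]^\times$.

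Finally, I apply Conjecture \ref{conjM}: $M$ is a maximal abelian subgroup of $C(S) = \Zbd[I,F]^\times$. But $\Zbd[I,F]^\times$ is \emph{itself} abelian, so its unique maximal abelian subgroup is the whole group. Therefore $M = C(S) = \Zbd[I,F]^\times$, which is the claim. The main obstacle is the middle step — justifying $C(S) = C(\langle F\rangle)$ cleanly — since a priori $S$ could a priori be larger than $\langle F\rangle$ in a way that shrinks the centralizer; the resolution above leans on abelianness of $S$ (inherited from the assumed Conjecture \ref{conjM} via $S \le M$) and on the fact that $C_{\GL_2\Zbd}(F_z)$ happens already to be commutative, which is the computational heart and should be verified by the explicit $2 \times 2$ matrix calculation with $F_z$.
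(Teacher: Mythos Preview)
Your argument is correct. The paper itself does not supply a proof of this lemma; it is quoted as a result from \cite{AYZ}, so there is no in-paper proof to compare against. Your write-up effectively reconstructs a self-contained argument under the paper's standing assumption of Conjecture~\ref{conjM}.

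A couple of remarks on presentation. The direct $2\times2$ computation you allude to is indeed the crux: writing $G=\begin{pmatrix}a&b\\c&d\end{pmatrix}$ and imposing $GF_z=F_zG$ forces $c=-b$, $d=a+b$, so $G=aI-bF_z\in\Zbd[I,F_z]$; this holds over any base ring, so no appeal to eigenvalues or characteristic is needed. Your ``middle step'' is cleaner than you fear: once you know $S\le M$ is abelian (from Conjecture~\ref{conjM}) and $F\in S$, you get $S\subseteq C(F)$; then $F\in S$ gives $C(S)\subseteq C(F)$, while $S\subseteq C(F)$ gives $C(S)\supseteq C(C(F))$, and $C(C(F))=C(F)$ because $C(F)=\Zbd[I,F]^\times$ is abelian and contains $F$. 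There is no circularity. The final step, that a maximal abelian subgroup of an abelian group is the whole group, then pins down $M$.
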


\noindent Here $\Zbd[I,F]^\times\le \GL_2\Zbd$ are the set of invertible
elements in the algebra $\Zbd[I,F]$.\smallbreak

We will relate this lemma to number theory as follows.  Let $u_f$ be the
fundamental unit in $\OK$ that is positive with respect to the real embedding
of $\K$ for which $\sqrt D$ is positive, which we label $\infty_1$. Let $u_D$
be the smallest power of $u_f$ with norm $1$ (so that $u_D$ is either $u_f$ or
$u_f^2,$ since the norm of $u_f$ is in $\{\pm1\}$).  By \cite{AFMY}, there
exists some $r\in\N$ such that the rational part of $u_D^r$ equals $(d-1)/2,$
and $u_D\mod\bd$ has order $3\iota r$ where $\iota=\bd/d$.
 
\begin{lemma}\label{criteria}
The homomorphism $\iot:\Zbd[I,F]\to\OK/(\bd)$ that maps $I$ to $[1]$ and $F$
to $[u_D^{\iota r}]$ is an isomorphism if and only if either $d$ is coprime to $3$
or $3|D$ and $d\centernot\equiv 3\mod{27}$.
 \end{lemma}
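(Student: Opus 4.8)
The plan is to reduce the question to a computation at each prime dividing $\bd$. Since $F$ has order $3$ in $\GL_2\Zbd$ with trace $-1$, Cayley--Hamilton gives $F^2+F+I=0$, so $\Zbd[I,F]\cong\Zbd[t]/(t^2+t+1)$; hence $\iot$ is a well-defined ring homomorphism precisely when $\zeta:=u_D^{\iota r}$ satisfies $\zeta^2+\zeta+1\equiv0\pmod{\bd}$, which I would verify first. We have $\rN(\zeta)=\rN(u_D)^{\iota r}=1$, and $\tr(\zeta)\equiv-1\pmod{\bd}$: when $d$ is odd ($\iota=1$) the description of $u_D^r$ recalled above gives $\tr(u_D^r)=d-1\equiv-1\pmod d$, and when $d$ is even ($\iota=2$) we get $\tr(u_D^{2r})=(\tr u_D^r)^2-2=(d-1)^2-2\equiv-1\pmod{2d}$ because $2d\mid d^2$; in either case $\zeta^2-\tr(\zeta)\zeta+\rN(\zeta)=0$ reduces to the claim. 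Both $\Zbd[I,F]$ and $\OK/(\bd)$ are free $\Zbd$-modules of rank $2$, so each has $\bd^2$ elements and $\iot$ is an isomorphism iff it is injective. In $\Zbd$-bases this amounts to the determinant of the matrix of $\iot$ being a unit of $\Zbd=\Z/\bd$, i.e.\ nonzero modulo each prime $p\mid\bd$, i.e.\ the reductions $\iot_p\colon\F_p[t]/(t^2+t+1)\to\OK/(p)$ are bijective for all $p\mid\bd$ (here $\OK/(\bd)\otimes_{\Zbd}\F_p=\OK/(p)$). As $\iot_p$ is a map of two-dimensional $\F_p$-algebras sending $t\mapsto\bar\zeta$, it is bijective iff $\bar\zeta$ lies outside the prime field $\F_p\subset\OK/(p)$.

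I would next dispose of the primes $p\ne3$. Since $\bar\zeta^2+\bar\zeta+1=0$, if $\bar\zeta\in\F_p$ then $t^2+t+1$ has a root in $\F_p$, which is impossible for $p\equiv2\pmod3$; so $\iot_p$ is automatically bijective there, in particular for $p=2$. If $p\equiv1\pmod3$ then $p$ is odd and $p\mid d$, and from $Dm^2=(d-3)(d+1)\equiv-3\pmod p$ (with $m$ determined by $u_D^r=\frac{(d-1)+m\qD}{2}$) and $p\nmid3$ we get $p\nmid m$, $p\nmid D$ and $D\equiv-3m^{-2}$, a nonzero square modulo $p$ since $\left(\frac{-3}{p}\right)=1$; hence $p$ splits, $\OK/(p)\cong\F_p\times\F_p$ by $x\mapsto(x\bmod\fp,\bar x\bmod\fp)$, and because $u_D$ has norm one $\bar\zeta$ corresponds to $(\zeta_1,\zeta_1^{-1})$ with $\zeta_1$ of order $3$, so $\zeta_1\ne\zeta_1^{-1}$, $\bar\zeta$ is off the diagonal, and $\iot_p$ is bijective. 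Thus the only possible obstruction lies at $p=3$, which occurs exactly when $3\mid d$.

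The crux is therefore $p=3$. Over $\F_3$ we have $t^2+t+1=(t-1)^2$, so $(\bar\zeta-1)^2=0$ in $\OK/(3)$ and $\iot_3$ is bijective iff $\bar\zeta-1$ is a nonzero nilpotent of $\OK/(3)$. If $3\nmid D$ then $3$ is unramified in $\K$, $\OK/(3)$ is reduced, hence $\bar\zeta=1$ and $\iot_3$ fails; this forces $3\mid D$ once $3\mid d$. If $3\mid D$ then $3$ ramifies, $3\OK=\fp^2$ with $v_\fp(\qD)=1$ and $\OK/(3)\cong\F_3[\ep]/(\ep^2)$, so $\iot_3$ is bijective iff $\zeta\not\equiv1\pmod{3\OK}$. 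Reducing modulo $3$, using $3\mid d$ and $2^{-1}\equiv2$, gives $u_D^r\equiv1+2m\qD\pmod3$; hence (directly when $\iota=1$, and after squaring and using $(d-3)(d+1)\in3\OK$ when $\iota=2$) one finds $\zeta\equiv1\pmod{3\OK}$ iff $m\qD\in\fp^2$, i.e.\ iff $3\mid m$. Finally $m^2D=(d-3)(d+1)$ with $v_3(D)=1$ and $v_3(d+1)=0$ yields $2v_3(m)+1=v_3(d-3)$, so $3\mid m$ iff $v_3(d-3)\ge3$, i.e.\ iff $d\equiv3\pmod{27}$; therefore $\iot_3$ is bijective iff $d\not\equiv3\pmod{27}$.

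Assembling the cases, $\iot$ is an isomorphism iff every $\iot_p$ is bijective, iff there is no obstruction at $p=3$, iff $3\nmid d$ or else $3\mid d$ together with $3\mid D$ and $d\not\equiv3\pmod{27}$ --- the stated criterion. The main obstacle I anticipate is the $p=3$ analysis: one must track $u_D^{\iota r}$ modulo $3$ accurately enough to distinguish $\fp$ from $\fp^2$, and it is precisely this second-order information, via $2v_3(m)+1=v_3(d-3)$, that turns into the congruence $d\equiv3\pmod{27}$; the rest is formal once the shape of $u_D^r$ and the ramification of $3$ in $\K$ are in hand.
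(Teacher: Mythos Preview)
Your argument is correct. The overall shape agrees with the paper's: both reduce the question to whether the $\sqrt D$--part of $u_D^{\iota r}$ (your $m$, the paper's $y$) is a unit modulo $\bd$, show that the only possible obstruction is at the prime $3$, and then link $3\mid m$ to $d\equiv3\pmod{27}$ via the Pell relation $Dm^2=(d-3)(d+1)$.

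Where you differ is in organization. The paper argues globally: it observes that surjectivity is equivalent to $y\in\Zbd^\times$, and from the norm equation $4=(d-1)^2-D\sigma^2y^2$ reads off $4\equiv1\pmod{\gcd(d,y)}$, hence $\gcd(d,y)\mid3$, then handles the residual $3$-divisibility mod $27$ in one step. You instead localize, treating each prime $p\mid\bd$ separately via $\iot_p:\F_p[t]/(t^2+t+1)\to\OK/(p)$. This buys you a cleaner case analysis --- in particular your treatment of $p\equiv1\pmod3$ (showing $p$ splits and $\bar\zeta$ is off the diagonal) is essentially an independent proof of what the paper later records as Lemma~\ref{lemPrimeTypes} --- and your $p=3$ step via $v_\fp$ and the parity equation $2v_3(m)+1=v_3(d-3)$ makes the appearance of $27$ more transparent than the paper's mod-$27$ computation. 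One small point worth making explicit: your use of Cayley--Hamilton to get $F^2+F+I=0$ presupposes $\det F=1$, which holds because in this context $F$ is conjugate to $F_z$; the paper's one-line justification (``both order $3$'') is in fact no more complete on this point.
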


\begin{proof}
Note that $j$ is well defined since $F$ and $u_D^{\iota r}$ are both order $3$.
Recall that \\$\OK=\Z[1,\om],$ where $\om\in\{\sqrt D,\frac{1+\sqrt D}2\}.$ The
surjectivity of $\iot$ is equivalent to $[\om]$ being in the image of
$\iot$. This is equivalent to the $\om$-coefficient $y\in\Z$ of $u_D^{r}$ being
invertible modulo $\bd$. This equivalence is clear when $d$ is odd, since
$u_D^r=\iot(F)$. When $d$ is even, then $\iot(F)=\left(u_D^r\right)^2$ has
$\om$-coefficient $y(d-1),$ since the rational part of $u_D^r$ is
$(d-1)/2$. The equivalence then follows from $d-1$ being invertible modulo
$\bd$.

Since $y\ne0,$ $\iot$ can only have a kernel when $d$ and $y$ share a common
factor, which is the condition for $\iot$ to not be surjective. Thus $\iot$ is
bijective if and only if it is surjective.
	
Since $u^r$ has norm $1$ and rational part $\frac{d-1}2,$ the norm of $2u^r$ is
$4=(d-1)^2-D\si^2y^2,$ where $\si^{-1}$ is the $\sqrt D$ coefficient of
$\om$. Modulo $\gcd(d,y),$ this gives $4\equiv 1\mod{\gcd(d,y)},$ so that
$\gcd(d,y)|3$. In particular, $\iot$ is not surjective if and only if
$\gcd(d,y)=3$.
	
If $3|\gcd(D,y),$ then the norm of $2u_D^r$ modulo $27$ gives $3\equiv
d^2-2d\mod{27},$ so that $d\equiv -1\text{ or }3\mod{27}$. Thus $3=\gcd(d,y)$
if and only if $d|3$ and either $3\centernot|D$ or $d\equiv 3\mod{27}$.
\end{proof}

\begin{remark}\rm
  Since $\gcd(d,y)|3,$ we see that $\ker\iot$ is generated by $(\bd/3)F\in
  (\bd/3)M\cong M_3$. Here we write $M_n$ to mean the projection of $M$ onto
  $\GL_2\Z_n$ when $n$ is a factor of $\bd$ co-prime to $\bd/n$. If $\zz$ is
  type-$a,$ then the kernel is trivial, since the same is true of the
  projection of $F$ to $M_3$.
\end{remark}

The map $\iot$ restricts to a homomorphism
$\iot^\times\colon\Zbd[I,F]^\times\to\left(\OK/(\bd)\right)^\times$ with the
same isomorphism criteria as in Lemma \ref{criteria}. Whenever these criteria
  are satisfied there are only type-$z$ fiducials (in the known
  solutions). These satisfy $\left(\OK/{(\bd)}\right)^\times\cong
  \Zbd[I,F]^\times\cong M,$ where the last isomorphism is Lemma \ref{lemCentZ}.

  Consider the case when $\zz$ is a type-$a$ fiducial so that $\iot$ is not an
  isomorphism. By the type-$a$ condition and Chinese remainder theorem, we have
  $\Zbd[I,F]=\Z_3\times\Z_{\bd/3}[I,F]$. This gives
  
\[\ba{rcl} M 
&\cong& M_3\times M_{\bd/3}\y
&\cong& (M_3/\Z_3^\times\times\Z_3^\times)\times\Z_{\bd/3}[I,F]^\times\y	
&\cong& M_3/\Z_3^\times\times\Zbd[I,F]^\times\y
&\cong& M_3/\Z_3^\times\times
\left(\OK/(\bd)\right)^\times\kern-5pt/\mathrm{coker}\iot^\times,
\ea\]
where $\Z_3^\times$ acts by scalar multiplication, and we used
$M_{\bd/3}\cong\Z_{\bd/3}[I,F]^\times$ from Lemma \ref{lemCentZ}, since $F$
modulo $\bd/3$ is congruent to $F_z$. This gives the exact sequence
$$1\to M_3/\Z_3^\times \to M\to\left(\OK/{(\bd)}\right)^\times\to\mathrm{coker}\iot^\times\to 1.$$

In the known solutions, the type-$a$ fiducials satisfy $M\cong
\left(\OK/(\bd)\right)^\times,$ which by the above exact sequence is equivalent
to $M_3/\Z_3^\times\cong\mathrm{coker}\iot^\times$ or equivalently $M_3\cong
\left(\OK/(3)\right)^\times$. Assuming the existence of this last isomorphism, which we will denote by $\iot'$, allows us to extend $\iot^\times$ to an isomorphism
$$\hot = (\iot',\iot^\times): M = M_3\times \Z_{\frac\bd3}[I,F]^\times\cong\left(\OK/(3)\right)^\times\times\left(\OK/\big(\tfrac\bd3\big)\right)^\times\cong\left(\OK/(\bd)\right)^\times.$$
In the type-$z$ case, we will let $\hot = \iot^\times :M\to\left(\OK/(\bd)\right)^\times.$
This discussion motivates

\begin{definition}
A strongly centred fiducial is \emph{algebraic} if $\hot:M\to
\left(\OK/{(\bd)}\right)^\times$ is an isomorphism.
\end{definition}

Note that $\left(\OK/(\bd)\right)^\times$ was computed in \cite{ranum1910group}, but we need to do
work before applying this result to our setting.
 
\begin{lemma}\label{lemPrimeTypes}
Let $p$ be a prime factor of $d$. If $p\equiv 1\mod3,$ then $p$ splits in
$\K$. If $p\equiv 2\mod3,$ then $p$ is prime in $\K$. If $p=3,$ then $p$
splits/ramifies/is prime in $\K$ if $D\equiv 1/0/{-1}\mod3$.
\end{lemma}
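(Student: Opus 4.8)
The plan is to reduce the whole statement to the classical decomposition law in the quadratic field $\K=\QD$: for an odd prime $q$, writing $\left(\tfrac{\cdot}{q}\right)$ for the Legendre symbol, $q$ splits, ramifies, or is inert in $\K$ according as $\left(\tfrac Dq\right)$ is $1$, $0$, or $-1$ (with $\left(\tfrac Dq\right)=0$ exactly when $q\mid D$); and $2$ splits, ramifies, or is inert according as $D\equiv 1$, $3$, or $5\pmod 8$ (this last alternative only arising when $d$ is even, in which case $d-3$ and $d+1$ are both odd, so $D$ is odd). Everything then comes down to identifying $D$ modulo a small number, using that $D$ is the square-free part of $(d-3)(d+1)$, say $(d-3)(d+1)=Dm^2$ with $m\in\Z$.

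First I would dispose of an odd prime $p\mid d$ with $p\ne 3$. Since $p\nmid 3$ we have $p\nmid d-3$, and $p\mid d$ forces $p\nmid d+1$, so $p\nmid(d-3)(d+1)$; hence $p\nmid D$ and $p\nmid m$, and in particular $p$ is unramified in $\K$. Reducing $(d-3)(d+1)=Dm^2$ modulo $p$ gives $Dm^2\equiv(-3)(1)=-3$, so $\left(\tfrac Dp\right)=\left(\tfrac{-3}{p}\right)$. By quadratic reciprocity $\left(\tfrac{-3}{p}\right)=\left(\tfrac p3\right)$, which equals $1$ when $p\equiv1\pmod3$ and $-1$ when $p\equiv2\pmod3$ (equivalently, $p$ splits in $\Q(\sqrt{-3})=\Q(\zeta_3)$ iff $p\equiv1\pmod 3$). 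This gives the first two assertions for odd $p$.

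Next I would treat $p=2$, which is subsumed under the case $p\equiv2\pmod3$. Here $d$ is even, so $m$ is odd and $m^2\equiv1\pmod8$. Checking the four residue classes $d\equiv0,2,4,6\pmod8$ shows $(d-3)(d+1)\equiv5\pmod8$ in every case; hence $D\equiv5\pmod8$ and $2$ is inert in $\K$, as required. Finally, the case $p=3$ needs no input beyond the classical criterion itself: $3$ ramifies iff $3\mid D$, i.e.\ iff $D\equiv0\pmod3$ (using that $D$ is square-free, so $9\nmid D$); and if $3\nmid D$ then $3$ splits iff $\left(\tfrac D3\right)=1$, i.e.\ iff $D\equiv1\pmod3$, and is inert iff $D\equiv-1\pmod3$. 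This is precisely the stated trichotomy.

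There is no genuine obstacle here; the only points requiring a little care are verifying $p\nmid D$ for odd $p\ne3$ (so that the Legendre-symbol computation applies and $p$ is genuinely unramified) and the short case-by-case reduction modulo $8$ for $p=2$.
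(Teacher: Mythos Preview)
Your argument is correct and follows essentially the same route as the paper: both reduce the splitting behaviour of an odd prime $p\mid d$ with $p\ne3$ to the Legendre symbol $\left(\tfrac{-3}{p}\right)$ via $(d-3)(d+1)\equiv-3\pmod p$, and then invoke quadratic reciprocity to obtain $\left(\tfrac{-3}{p}\right)=\left(\tfrac{p}{3}\right)$; the case $p=3$ is in both treatments just the bare criterion. The one genuine difference is that you handle $p=2$ explicitly by a mod~$8$ check showing $D\equiv5\pmod 8$, whereas the paper's proof phrases the non-$3$ case uniformly as ``$D$ not a quadratic residue mod $p$'' without isolating $p=2$---your version is the more careful one here, since the Legendre-symbol formulation does not literally apply at $2$.
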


\begin{proof}
Since $\K=\QD$ is a quadratic extension, $p$ is prime (respectively
ramifies or is a product of two primes) in $\K$ if and only if the
polynomial
$$f(x):=\left\{\begin{matrix} x^2-D & \ifbox D\centernot\equiv
1\mod4 \\ (2x-1)^2-D\ & \ifbox D\equiv 1\mod4
	\end{matrix}\right.$$
is irreducible (respectively a square, or a product of different linear
factors) modulo $p$ \cite{conrad2014factoring}. When $p=3,$ the claim is
immediate. Otherwise, $p$ being prime is equivalent to $D$ not being a
quadratic residue modulo $p$. Since $D$ is the square-free part of
$(d+1)(d-3),$ this is equivalent to $(d+1)(d-3)\equiv -3 \mod p$ not being a
quadratic residue modulo $p$. By quadratic reciprocity, this is equivalent to
$p$ not being a quadratic residue modulo $3,$ so that $p\equiv 2\mod3$.
	
On the other hand, $p$ ramifies if and only if $p|D,$ which can only happen
if $p=3$ since $D|(d+1)(d-3)\equiv -3 \mod p$. Thus if $p\equiv 1\mod3,$ then
$p$ splits.
\end{proof}

\begin{lemma}
	If $9|d,$ then $D\centernot\equiv 3 \mod9$.
\end{lemma}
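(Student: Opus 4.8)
The plan is to pin down the exact power of $3$ dividing $(d-3)(d+1)$ and then read off $D$ modulo $9$. Assume $9\mid d$ and write $d=9k$. Then $d-3=3(3k-1)$ and $d+1=9k+1$, so $(d-3)(d+1)=3\,(3k-1)(9k+1)$; since $3k-1\equiv-1$ and $9k+1\equiv1\mod3$, neither $3k-1$ nor $9k+1$ is divisible by $3$. Hence $3$ divides $(d-3)(d+1)$ to exactly the first power.

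Next I would pass from $(d-3)(d+1)$ to its square-free part $D$. Writing $(d-3)(d+1)=Dq^2$ for some $q\in\Z$ and taking $3$-adic valuations gives $v_3(D)+2\,v_3(q)=1$; since $D$ is square-free this forces $v_3(D)=1$ and $v_3(q)=0$. So $D=3D'$ with $3\nmid D'$, and dividing the relation $(d-3)(d+1)=Dq^2$ by $3$ yields $D'q^2=(3k-1)(9k+1)$. Reducing mod $3$ and using $q^2\equiv1\mod3$ gives $D'\equiv(-1)(1)\equiv2\mod3$, so that $D=3D'\equiv6\mod9$; in particular $D\not\equiv3\mod9$, as claimed.

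The argument is elementary, so there is no genuine obstacle here; the only point needing a little care is the bookkeeping of the square-free part, namely that $3$ occurs to an \emph{odd} power in $(d-3)(d+1)$ (so it persists in $D$) while the complementary factor $q^2$ is prime to $3$ (hence $\equiv1\mod3$ and does not disturb the residue). It is worth remarking that the hypothesis $9\mid d$, rather than merely $3\mid d$, is used essentially: for $d=6$ one has $(d-3)(d+1)=21$ and $D=21\equiv3\mod9$.
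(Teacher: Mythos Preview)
Your proof is correct and in fact establishes the sharper conclusion $D\equiv 6\pmod 9$ whenever $9\mid d$. The bookkeeping with the square-free part is handled properly: the key point is that $v_3\big((d-3)(d+1)\big)=1$, which forces $v_3(D)=1$ and $v_3(q)=0$, after which the residue computation goes through cleanly.

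The paper takes a genuinely different route. It argues by contradiction using the Pell equation: from the earlier fact (cited from \cite{AFMY}) that some power $u_D^r$ of the fundamental unit has rational part $(d-1)/2$, one writes $u_D^r=\frac{T+U\sqrt D}{2}$ with $T=d-1$ satisfying $T^2-DU^2=4$, and reduces modulo $9$ to obtain $U^2\equiv -1\pmod 3$, an impossibility. Your argument is more elementary and self-contained, needing nothing beyond the definition of $D$ as the square-free part of $(d-3)(d+1)$; the paper's argument, while relying on a nontrivial external input about units in $\OK$, meshes naturally with the surrounding discussion of the homomorphism $\iot$ and the structure of $\left(\OK/(\bd)\right)^\times$. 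Your observation that the hypothesis $9\mid d$ (rather than merely $3\mid d$) is essential, with $d=6$ as a witness, is a nice addition.
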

\begin{proof}
  Assume that $9|d$ and $D\equiv 3 \mod9$.  $u_D^r$ is a solution
  $\frac{T+U\sqrt D}2$ of the Pell equation $T^2-DU^2=4$. We know that the
  rational part of $u_D^r$ is $\frac{d-1}2,$ so that $T=d-1$. The Pell equation
  modulo $9$ is $1-3U^2\equiv 4\mod9$. Rearranging and dividing by $3$ gives
  $U^2\equiv -1\mod3,$ a contradiction.
\end{proof}

This allows us to present the result of \cite{ranum1910group} applied to our
situation:

\begin{theorem} [{\rm\cite{ranum1910group}}] \label{thmRanum}
Let $p$ be a prime and $k\in\N$ be larger than $1$ if $p=2$. Then
$$\left(\OK/(p^k)\right)^\times\cong\left\{
\ba{ll}
\Z_{p-1}^2\times\Z_{p^{k-1}}^2 & \ifbox p>2 \text{ splits in }\K\y
\Z_{p^2-1}\times\Z_{p^{k-1}}^2 & \ifbox p>2 \text{ is prime in }\K\y
\Z_6\times \Z_{2^{k-1}}\times\Z_{2^{k-2}} & \ifbox p=2\text{ and }k>1\y
\Z_6\times\Z_{3^{k-1}}^2	& \ifbox p=3\text{ ramifies in }\K.			
\ea\right.$$
\end{theorem}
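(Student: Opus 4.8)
The plan is to reduce the computation of $\left(\OK/(p^k)\right)^\times$ to the structure theory of unit groups of residue rings of Dedekind domains, using the factorization of $p$ established in Lemma \ref{lemPrimeTypes}. First I would invoke the Chinese remainder theorem in $\OK$: depending on the splitting behaviour of $p$ we have $(p^k) = \mathfrak p_1^k\mathfrak p_2^k$ (split), $(p^k) = \mathfrak p^k$ with $\OK/\mathfrak p$ a field of order $p^2$ (inert), or $(p^k)=\mathfrak p^{2k}$ (ramified, which by the earlier lemmas only happens when $p=3$ and $3|D$). Accordingly $\left(\OK/(p^k)\right)^\times$ splits as a product of $(\OK/\mathfrak p_i^{k})^\times$, or is $(\OK/\mathfrak p^k)^\times$, or $(\OK/\mathfrak p^{2k})^\times$.

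Next I would compute each local factor $(\OK/\mathfrak p^{m})^\times$ via the standard filtration by the subgroups $1 + \mathfrak p^j$. The quotient $(\OK/\mathfrak p)^\times$ is cyclic of order $q-1$, where $q = p$ in the split/ramified case and $q = p^2$ in the inert case; the successive quotients $(1+\mathfrak p^j)/(1+\mathfrak p^{j+1})$ are each isomorphic to the residue field's additive group. The only subtlety is the group structure of the pro-$p$ part $1+\mathfrak p$, which for $p$ odd is isomorphic (via the $p$-adic logarithm, or by the classical argument that a generator of $1+\mathfrak p$ modulo $1+\mathfrak p^2$ has the right order) to $\Z_{p^{m-1}}$ in the unramified cases, giving $(\OK/\mathfrak p^k)^\times \cong \Z_{p-1}\times\Z_{p^{k-1}}$ (split, local factor) or $\Z_{p^2-1}\times\Z_{p^{k-1}}^2$ (inert, after identifying the additive residue field $\F_{p^2}$ with $\Z_p^2$). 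Taking the product over the two primes in the split case yields $\Z_{p-1}^2\times\Z_{p^{k-1}}^2$, as claimed. For $p=3$ ramified we have $\mathfrak p^{2k}$, the tame part contributes $\Z_{p-1}=\Z_2$, and the wild part $1+\mathfrak p$ has order $3^{2k-1}$ and decomposes as $\Z_{3^{k}}\times\Z_{3^{k-1}}$ (one extra cyclic factor because a uniformizer satisfies $\mathfrak p^2=(3)$ up to a unit, so $3$-power torsion appears at level $\mathfrak p^2$); combined with the tame $\Z_2$ this gives $\Z_6\times\Z_{3^{k-1}}^2$ after regrouping $\Z_2\times\Z_3 \cong \Z_6$. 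The case $p=2$ is special and I would simply cite the classical description of $(\Z_2[\omega]/(2^k))^\times$ under the hypothesis $k>1$, noting that $2$ is never ramified here since $2\nmid D$ would fail only if $D$ is even, but one checks $2$ cannot ramify because of the shape of $D$.

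The main obstacle I expect is getting the $3$-power torsion structure exactly right in the ramified case: naively one might write $\Z_6 \times \Z_{3^{k-1}}^2$ but the correct count requires careful attention to where roots of unity of $3$-power order sit inside $1+\mathfrak p$, since $\mathfrak p^2 = (3)u$ for a unit $u$ means the filtration jumps are governed by $v_\mathfrak p(3) = 2$ rather than $1$, and a clean way to see this is to observe $\OK/(3^k)$ is a local ring with maximal ideal $\mathfrak p/(3^k)$ of nilpotency index $2k$ and to track the orders of $1+\pi$ and $1+3$ directly. A secondary point to be careful about is the hypothesis "$k>1$ if $p=2$": for $k=1$ the formula would degenerate and moreover the residue characteristic $2$ behaves differently, so the statement deliberately excludes it, and I would just remark on this rather than prove anything. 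Everything else is routine, and the only genuinely new input beyond \cite{ranum1910group} is Lemma \ref{lemPrimeTypes} together with the preceding lemma ruling out $D\equiv 3\mod 9$, which guarantees the ramified case is internally consistent with the Pell-equation constraints.
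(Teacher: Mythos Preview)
The paper does not prove this theorem at all: it is quoted from Ranum's 1910 paper, and the only work the authors do is to invoke the preceding lemma (``if $9\mid d$ then $D\not\equiv 3\pmod 9$'') in order to collapse two of Ranum's sub-cases into the single line $\Z_6\times\Z_{3^{k-1}}^2$. So there is no proof in the paper to compare against; your sketch is an attempt to supply one from scratch.

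Your split and inert cases are fine, but the ramified case contains a genuine error. You assert that the wild part $1+\mathfrak p$ in $\big(\OK/\mathfrak p^{2k}\big)^\times$ is isomorphic to $\Z_{3^k}\times\Z_{3^{k-1}}$, and then claim that combining with the tame $\Z_2$ and ``regrouping $\Z_2\times\Z_3\cong\Z_6$'' yields $\Z_6\times\Z_{3^{k-1}}^2$. This regrouping is invalid: $\Z_2\times\Z_{3^k}\times\Z_{3^{k-1}}$ has an element of order $3^k$, whereas $\Z_6\times\Z_{3^{k-1}}^2$ has exponent $2\cdot 3^{k-1}$, so for $k>1$ these groups are \emph{not} isomorphic. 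In fact the wild-part structure genuinely depends on $D\bmod 9$: for $D\equiv 3\pmod 9$ one does get exponent $3^k$ (e.g.\ in $\Q(\sqrt3)$ the element $1+\sqrt3$ has order $9$ modulo $9$), while for $D\equiv 6\pmod 9$ the exponent drops to $3^{k-1}$ (e.g.\ in $\Q(\sqrt6)$ one checks $(1+\sqrt6)^3\equiv 1\pmod 9$). Ranum treats these two cases separately, and it is precisely the preceding lemma that lets the authors discard the $D\equiv 3\pmod 9$ case for $k>1$ and write down a single formula. Your sketch mentions that lemma only in passing, but it is doing essential work here, not merely ensuring ``internal consistency with the Pell-equation constraints'': without it the stated formula is simply false. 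To repair the argument you would need to compute the order of $1+\pi$ (for $\pi$ a uniformizer) carefully enough to see the dependence on $D\bmod 9$, and then use the lemma to exclude the bad case.
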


\noindent For this statement of the result we needed the previous lemma since
\cite{ranum1910group} treats separately the cases when $3|d$ and $D\equiv
3\text{ or }6\mod9$. The lemma allows us to deduce that $k=1$ when $D\equiv
3\mod9,$ and notice that both cases give the same result when $k=1$.
\smallbreak

In \cite{appleby2018constructing}, type-$a$ fiducials were labelled by types
$a_4,$ $a_6$ or $a_8$ corresponding to $M_3$ being isomorphic to $\Z_2^2,$
$\Z_6$ or $\Z_8$ respectively.

We can now prove Proposition \ref{propTypeCriteria}:
\begin{proof}[Proof of Proposition \ref{propTypeCriteria}]
	By Lemmas \ref{lemCentZ} and \ref{criteria}, algebraic fiducials which are type-$z$ satisfy $d$ is coprime to $3$ or $3|D$ and $d\centernot\equiv 3\mod{27}$. Thus type-$a$ fiducials satisfy $3|d$ and either $3\centernot | D$ or $d\equiv 3\mod{27}$. When $d\equiv 3\mod{27},$ then $D\equiv 0\mod 3$. By combining Lemma \ref{lemPrimeTypes} with the previous theorem, we find that $D\equiv 0 / 1 / 2$ respectively gives $M_3\cong \Z_6 / \Z_2^2 / \Z_8$ respectively, which gives the claim.
\end{proof}

In the known examples, every strongly centered fiducial which is not algebraic is type-$z$ despite
type-$a$ solutions existing for the given $d$. Since algebraic type-$z$
solutions with $3|d$ have $3$ ramified, we can think of non-algebraic solutions
as having $3$ pretending to be ramified.

Now that we have more understanding of the structure of $M,$ we will study its
action on $\Zbd^2$. First note that the center $Z$ of $\GL_2\Zbd$ is contained
in $M$. Its action preserves each $C\in\proj,$ so we get an induced action of
$M/Z$ on $\proj$. Before we study this action, first note that we have a
Chinese remainder theorem for $\proj$:

\begin{lemma}\label{mn}
If $2\ge n,m\in\N$ are coprime then $\P\Z_{mn}^2\cong\P\Z_m^2\times\P\Z_n^2$.
\end{lemma}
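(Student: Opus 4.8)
The plan is to establish the isomorphism $\P\Z_{mn}^2\cong\P\Z_m^2\times\P\Z_n^2$ by sending a cyclic subgroup of order $mn$ in $\Z_{mn}^2$ to the pair of its images under the two coordinate reductions $\Z_{mn}^2\to\Z_m^2$ and $\Z_{mn}^2\to\Z_n^2$. The starting point is the classical ring isomorphism $\Z_{mn}\cong\Z_m\times\Z_n$ for coprime $m,n$, which induces $\Z_{mn}^2\cong\Z_m^2\times\Z_n^2$ as $\Z_{mn}$-modules (equivalently, as abelian groups with a diagonal $\GL_2$-action). First I would spell out that a subgroup $C\le\Z_{mn}^2$ of order $mn$ is exactly a cyclic submodule generated by a \emph{primitive} vector $\bfp=(p_1,p_2)$, meaning $\gcd(p_1,p_2,mn)=1$; under the decomposition this vector corresponds to a pair $(\bfp',\bfp'')$ with $\bfp'$ primitive in $\Z_m^2$ and $\bfp''$ primitive in $\Z_n^2$. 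Then $C$ decomposes as $C'\oplus C''$ where $C'=\langle\bfp'\rangle$ has order $m$ and $C''=\langle\bfp''\rangle$ has order $n$, giving the map $\P\Z_{mn}^2\to\P\Z_m^2\times\P\Z_n^2$, $C\mapsto(C',C'')$.

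Next I would check this map is well defined (independent of the choice of generator $\bfp$: any other generator is a unit multiple, and the decomposition $\Z_{mn}^\times\cong\Z_m^\times\times\Z_n^\times$ carries unit multiples to unit multiples on each factor) and is a bijection. Injectivity: if $C,\tilde C$ have the same images on both factors, then a generator of $C$ and a generator of $\tilde C$ agree up to units on each factor, hence agree up to a unit of $\Z_{mn}$, so $C=\tilde C$. Surjectivity: given $C'\le\Z_m^2$ of order $m$ and $C''\le\Z_n^2$ of order $n$, pick primitive generators $\bfp'$, $\bfp''$; by CRT there is $\bfp\in\Z_{mn}^2$ reducing to $\bfp'$ and $\bfp''$, and $\bfp$ is automatically primitive mod $mn$, so $C=\langle\bfp\rangle$ has order $mn$ and maps to $(C',C'')$. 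Since the statement only claims a set-theoretic bijection (the hypothesis $2\ge n,m$ appears to be a typo for $2\le n,m$, i.e. both at least $2$, ensuring the sets are the natural objects and not degenerate), this suffices; I would also remark that the bijection is equivariant for the $\GL_2\Z_{mn}\cong\GL_2\Z_m\times\GL_2\Z_n$ actions, which is what is actually used afterwards.

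There is no serious obstacle here: the only mild subtlety is keeping the two structures straight — the \emph{external} CRT decomposition of the ambient module $\Z_{mn}^2=\Z_m^2\times\Z_n^2$ versus the \emph{internal} direct sum decomposition $C=C'\oplus C''$ of a given order-$mn$ subgroup. Making that distinction explicit, and noting that "order $mn$" forces the generator to be primitive, is the heart of the argument; everything else is a routine unwinding of the Chinese remainder theorem on generators. I would phrase the whole proof in one short paragraph referencing $\gcd(p_1,p_2,mn)=1$ as the primitivity criterion, exactly as in the proof of Lemma \ref{lemTC} where the analogous condition $\gcd(p_1,p_2)\equiv1\bmod\bd$ was used.
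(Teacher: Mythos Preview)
Your proposal is correct and follows essentially the same approach as the paper: both identify $\P\Z_{mn}^2$ with primitive vectors modulo units and then apply the Chinese remainder theorem to split both the set of primitive vectors and the unit group. The paper packages this a bit more tersely by writing $\P\Z_{mn}^2=\Z_{mn}^{2\times}/\Z_{mn}^\times$ and observing directly that $\Z_{mn}^{2\times}\cong\Z_m^{2\times}\times\Z_n^{2\times}$ and $\Z_{mn}^\times\cong\Z_m^\times\times\Z_n^\times$, whereas you unfold the same content into well-definedness, injectivity, and surjectivity on generators; your reading of the hypothesis as a typo for $m,n\ge2$ and your remark on $\GL_2$-equivariance are also apt.
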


\begin{proof}
Note that $\P\Z_{mn}^2$ consists of cyclic subgroups whose generators lie in
the set
\[\Z_{mn}^{2\times}:=\left\{\bfp\in\Z_{mn}^2:\langle p_1,p_2\rangle = \Z_{mn}\right\}.\]
Using the Chinese remainder theorem 
$(\pi_m,\pi_n)\colon\Z_{mn}\stackrel\cong\to\Z_m\times\Z_n,$ we find
\[\langle p_1,p_2\rangle = \Z_{mn} \iff \langle \pi_mp_1,\pi_mp_2\rangle
= \Z_{m}\text{ and }\langle \pi_np_1,\pi_np_2\rangle = \Z_{n}.\]
Thus $\Z_{mn}^{2\times} \cong \Z_m^{2\times}\times\Z_n^{2\times}$. Since
$\Z_{mn}^\times\cong\Z_m^\times\times\Z_n^\times,$ where the $\times$ in the
exponent denotes the group of units, we have
\[
	\P\Z_{mn}^2 =\Z_{mn}^{2\times}/\Z_{mn}^\times 
	\cong \Z_m^{2\times}/\Z_m^\times 
		\times\Z_n^{2\times}/\Z_n^\times 
	= \P\Z_m^2\times\P\Z_n^2
\]
as stated.
\end{proof}

This allows us to reduce to the case when $\bd= p^k,$ which one can easily
count:

\begin{lemma}\label{lemSizePZ}
  $\big|\P\Z_{p^k}^2\big| = p^{k-1}(p+1).$
\end{lemma}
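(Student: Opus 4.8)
The plan is to reuse the identification $\P\Z_n^2 = \Z_n^{2\times}/\Z_n^\times$ established in the proof of Lemma \ref{mn}, where $\Z_n^{2\times}=\{\mathbf p\in\Z_n^2:\langle p_1,p_2\rangle=\Z_n\}$ and $\Z_n^\times$ acts by scalar multiplication. So it suffices to count $\Z_{p^k}^{2\times}$, check that this scalar action is free, and divide by $|\Z_{p^k}^\times|=\varphi(p^k)=p^{k-1}(p-1)$.

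First I would count $\Z_{p^k}^{2\times}$. A pair $(p_1,p_2)$ generates $\Z_{p^k}$ precisely when not both $p_1,p_2$ are divisible by $p$; the pairs with both divisible by $p$ number $(p^{k-1})^2=p^{2k-2}$, so $|\Z_{p^k}^{2\times}|=p^{2k}-p^{2k-2}=p^{2k-2}(p^2-1)$. Next, freeness of the scalar action: if $\mathbf p\in\Z_{p^k}^{2\times}$ then at least one coordinate $p_i$ is coprime to $p$, hence a unit of the local ring $\Z_{p^k}$, so $u\mathbf p=\mathbf p$ forces $up_i=p_i$ and therefore $u=1$. Hence every $\Z_{p^k}^\times$-orbit has size $p^{k-1}(p-1)$, and
\[ \big|\P\Z_{p^k}^2\big| = \frac{p^{2k-2}(p^2-1)}{p^{k-1}(p-1)} = \frac{p^{2k-2}(p-1)(p+1)}{p^{k-1}(p-1)} = p^{k-1}(p+1). \]

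There is essentially no obstacle: the only step needing a moment's care is the freeness of the scalar action, which rests on the fact that a generating pair modulo $p^k$ must have a coordinate that is a unit (because $\Z_{p^k}$ is local with maximal ideal $(p)$). One could equally bypass the orbit count by observing directly that each cyclic subgroup of order $p^k$ has exactly $\varphi(p^k)=p^{k-1}(p-1)$ generators, all of which lie in $\Z_{p^k}^{2\times}$, which yields the same division.
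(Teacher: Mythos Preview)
Your proof is correct and follows essentially the same route as the paper: both identify $\P\Z_{p^k}^2$ with $\Z_{p^k}^{2\times}/\Z_{p^k}^\times$ and then count. The only cosmetic difference is that the paper partitions $\Z_{p^k}^{2\times}$ according to which coordinates are units and counts normalized representatives in each piece, whereas you count $\Z_{p^k}^{2\times}$ by complementation and divide after checking freeness; the arithmetic is the same.
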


\begin{proof}
If {\small$\begin{pmatrix} a\\b \end{pmatrix}$} generates some
$C\in\P\Z_{p^k}^2,$ then either $a,$ $b,$ or both lie in
$\Z_{p^k}^\times$. This gives
\[ \Z^{2\times}_{p^k} =
\big(\Z_{p^k}^\times\times\ol{\Z_{p^k}^\times}\big)
\sqcup\big(\ol{\Z_{p^k}^\times}\times\Z_{p^k}^\times\big)
\sqcup\big(\Z_{p^k}^\times\times\Z_{p^k}^\times\big),\]
where $\ol{\Z_{p^k}^\times} = \Z_{p^k}\backslash\Z_{p^k}^\times$. Thus
\[\left|\P\Z_{p^k}^2\right| 
= 2\big|\ol{\Z_{p^k}^\times}\big| + \big|\Z_{p^k}^\times\big|
= 2\big|\Z_{p^k}\big| - \big|\Z_{p^k}^\times\big| = 2p^k-\phi(p^k) = 2p^k-p^{k-1}(p-1),\]
as required.
\end{proof}

Using the Chinese remainder theorem for $M,$ it suffices to consider the action
of $M_{p^k}/Z_{p^k}$ on $\P\Z_{p^k}^2$.

\begin{lemma}
For algebraic fiducials,
\[ M_{p^k}/Z_{p^k}\cong\left\{\ba{ll}
		\Z_{p-1}\times\Z_{p^{k-1}}	& \ifbox p\text{ splits}
	\\	\Z_{p+1}\times\Z_{p^{k-1}}\	& \ifbox p>2\text{ is prime}
	\\	\Z_6\times\Z_{2^{k-2}}	& \ifbox p=2
	\\	\Z_3\times\Z_{3^{k-1}}	& \ifbox p=3\text{ ramifies}.
\ea\right.\]
For non-algebraic fiducials, the result is the same except when $p=3,$ where
$M_{3}/Z_{3}\cong$ $\Z_3$ (as if $3$ ramifies).
\end{lemma}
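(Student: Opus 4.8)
The plan is to work one prime at a time. Fix $p\mid d$ and let $p^k$ be the $p$-part of $\bd$; using the Chinese remainder decompositions of $\GL_2\Zbd$ and of $\OK/(\bd)$ it suffices to identify $M_{p^k}/Z_{p^k}$. Recall that $Z_{p^k}$ is the group of scalar matrices $a\id$ with $a\in\Z_{p^k}^\times$, and that $Z_{p^k}\le M_{p^k}$. For an algebraic fiducial the isomorphism $\hot$ restricts to $M_{p^k}\cong(\OK/(p^k))^\times$ and carries $a\id$ to the class of the integer $a$, so it identifies $Z_{p^k}$ with the image $\iota(\Z_{p^k}^\times)$ of the rational units under $\Z\hra\OK$. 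Hence
\[ M_{p^k}/Z_{p^k}\ \cong\ (\OK/(p^k))^\times\big/\iota(\Z_{p^k}^\times), \]
and the task reduces to a computation inside $\OK/(p^k)$ once the decomposition type of $p$ in $\K$ is known.

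That type comes from Lemma~\ref{lemPrimeTypes} for $p$ odd; for $p=2$ one uses that $d$ even forces $2$ inert, since writing $d=2m$ gives $(d-3)(d+1)=4m(m-1)-3\equiv 5\mod 8$, which is odd, so its squarefree part $D$ also satisfies $D\equiv 5\mod 8$ (and $k\ge2$, as Theorem~\ref{thmRanum} requires). Now if $p$ splits then $\OK/(p^k)\cong\Z_{p^k}\times\Z_{p^k}$ with $\iota(\Z_{p^k}^\times)$ embedded diagonally, so the quotient is $\Z_{p^k}^\times\cong\Z_{p-1}\times\Z_{p^{k-1}}$. If $p$ is inert or ramified, write $R=\OK/(p^k)$ and factor $R^\times=\mu\times U_1$, where $\mu$ lifts the multiplicative group of the residue field and $U_1=1+\fp R$ is the group of principal units; the image of $\Z_{p^k}^\times$ meets $\mu$ in the lift $\mu_0$ of $\F_p^\times$ and meets $U_1$ in $1+p\Z_{p^k}$. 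For $p$ odd the $p$-adic logarithm identifies $U_1$ with $(\OK/p^{k-1}\OK,+)\cong\Z_{p^{k-1}}^2$ and $1+p\Z_{p^k}$ with the rank-one summand coming from $\OK=\Z\op\Z\theta$, so $U_1/(1+p\Z_{p^k})\cong\Z_{p^{k-1}}$; together with $\mu/\mu_0\cong\F_{p^2}^\times/\F_p^\times\cong\Z_{p+1}$ in the inert case, this yields $\Z_{p+1}\times\Z_{p^{k-1}}$, while in the ramified case ($p=3$) one has $\mu_0=\mu$ but an extra ``tangent'' layer $U_1/(1+3R)\cong\Z_3$, giving $\Z_3\times\Z_{3^{k-1}}$.

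The remaining, and hardest, case is $p=2$ (necessarily inert). The scheme is the same, but $\mu_0$ is now trivial, so the whole $\mu\cong\F_4^\times\cong\Z_3$ survives, and it remains to show that the $2$-part of the quotient, which has order $2^{k-1}$, is $\Z_2\times\Z_{2^{k-2}}$; combined with $\Z_3$ this gives $M_{2^k}/Z_{2^k}\cong\Z_6\times\Z_{2^{k-2}}$. This last point is the main obstacle: the $2$-adic logarithm does not converge on all of $1+2\OK$, so one cannot linearise $U_1$ at a stroke, and the isomorphism type of $U_1/(1+2\Z_{2^k})$ has to be pinned down by working in the basis $\{1,\theta\}$ with $\theta^2=\theta+c$, $c$ odd (so $R/2R\cong\F_4$) --- tracking which layers of the filtration $1+2R\supset 1+4R$ contain the rational units $-1$ and $5$, and applying the $2$-adic logarithm only on $1+4R$ --- or else by pairing the order count with Theorem~\ref{thmRanum} and an argument that the quotient cannot be cyclic. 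The ramified case $p=3$ is a gentler instance of the same difficulty, where the structure of $U_1$ supplied by Theorem~\ref{thmRanum} quickly pins down the quotient.

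For a non-algebraic fiducial, $\hot$ remains an isomorphism on $M_{p^k}$ for every $p\ne3$, so all of the above is unchanged at those primes. At $p=3$ the fiducial is type-$z$ although the type-$a$ criteria of Proposition~\ref{propTypeCriteria} are met, which forces $\gcd(9,d)=3$, hence $k=1$; by Lemma~\ref{lemCentZ} we have $M_3=\Z_3[\id,F]^\times$ with $F$ conjugate to $F_z$. Reducing mod $3$, $F_z$ becomes $\id+N$ with $N\ne0$ and $N^2=0$ (so that $\Z_3[\id,F_z]\cong\Z_3[\ep]/(\ep^2)$), whence
\[ M_3\cong\big(\Z_3[\ep]/(\ep^2)\big)^\times\cong\Z_3^\times\times(1+\ep\Z_3)\cong\Z_2\times\Z_3, \]
with $Z_3=\Z_3^\times$ the first factor; therefore $M_3/Z_3\cong\Z_3$, exactly as in the $3$-ramified case with $k=1$ --- ``$3$ pretending to be ramified''.
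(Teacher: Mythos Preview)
Your approach---transporting the problem to $(\OK/(p^k))^\times$ via $\hot$ and then computing the quotient by the rational units using Teichm\"uller lifts and the $p$-adic logarithm---is sound and in several places more explicit than the paper's. For odd $p$ your arguments go through cleanly, and your treatment of the non-algebraic case at $p=3$ (reducing to $k=1$ and computing $\Z_3[I,F_z]^\times\cong(\Z_3[\epsilon]/(\epsilon^2))^\times$) is both correct and more detailed than what the paper's own proof provides.

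The genuine gap is at $p=2$. You correctly reduce to deciding whether the $2$-part of the quotient, of order $2^{k-1}$, is $\Z_2\times\Z_{2^{k-2}}$ or $\Z_{2^{k-1}}$, and you correctly note that this is precisely where the embedding of $Z_{2^k}\cong\Z_2\times\Z_{2^{k-2}}$ into the $2$-Sylow $\Z_2\times\Z_{2^{k-1}}\times\Z_{2^{k-2}}$ of $M_{2^k}$ matters. But you then only \emph{describe} possible strategies (track $-1$ and $5$ through the filtration $1+2R\supset1+4R$, or argue non-cyclicity) without carrying either out. As written, the $p=2$ case is not proved.

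The paper closes this gap with a one-line matrix identity rather than a local-units computation. Since $F$ has trace $-1$ and determinant $1$, Cayley--Hamilton gives $F^2=-F-I$, whence
\[
(I+2F)^2=I+4F+4F^2=-3I.
\]
Now $-3\equiv5\pmod8$, so $-3$ generates $\Z_{2^k}^\times/\{\pm1\}\cong\Z_{2^{k-2}}$; thus $I+2F$ has exact order $2^{k-1}$ and must generate the $\Z_{2^{k-1}}$ factor of $M_{2^k}$. Its square already lies in $Z_{2^k}$, so that factor contributes only $\Z_2$ to the quotient, forcing $M_{2^k}/Z_{2^k}\cong\Z_3\times\Z_2\times\Z_{2^{k-2}}=\Z_6\times\Z_{2^{k-2}}$. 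This single computation replaces the entire filtration analysis you sketch.
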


\begin{proof}
It is well known that 
\[ Z_{p^k}\cong\Z_{p^k}^\times\cong\left\{\ba{ll}
		\Z_{p-1}\times\Z_{p^{k-1}}	& \ifbox p\ne2
	\y	\Z_2\times\Z_{2^{k-2}}\	& \ifbox p=2,\ k>1.
\ea\right.\]
Aside from the case $p=2<k,$ the result follows directly from Theorem
\ref{thmRanum}. For $p=2<k,$ there are two possible quotients of $M_{2^k}$ by
different embeddings of $\Z_2\times\Z_{2^{k-2}}$. By Theorem \ref{thmRanum},
$M_{2^k}\cong\Z_3\times\Z_2\times \Z_{2^{k-1}}\times\Z_{2^{k-2}}$. Note that
$(I+2F)^2 =-3I$ generates $Z_{2^k}/{\pm1}\cong \Z_{2^{k-2}}$. Thus $I+2F$
generates the $\Z_{2^{k-1}}$ factor of $M_{2^k}$. It follows
that $$M_{2^k}/Z_{2^k}\cong\Z_3\times \Z_{2^{k-1}}/\Z_{2^{k-2}}\times
\Z_{2^{k-2}}\cong\Z_6\times\Z_{2^{k-2}}.$$
\end{proof}

\begin{lemma}\label{lemPrimeOrbit}
The (cyclic) action of $M_{p^k}/Z_{p^k}$ on $\P\Z_{p^k}^2$ has one free orbit
and $s$ orbits of size $p^k/\bp,$ where
$$ s = \left\{\ba{ll}
0\	& \ifbox p\equiv 2\mod3\text{ or }p=3\text{ is type-}a_8\y
1	& \ifbox p= 3\text{ is type-}z \text{ or type-}a_6\y
2 	& \ifbox p\equiv 1 \mod3\text{ or }p=3\text{ is type-}a_4.
\ea\right.$$
\end{lemma}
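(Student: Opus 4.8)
The plan is to realise $M_{p^k}$ as the unit group of a quadratic $\Z_{p^k}$-order $R$ acting on $\Z_{p^k}^2$ by multiplication, and then to read off the orbits on $\P\Z_{p^k}^2$ from the ideal lattice of $R$. By Lemma \ref{lemCentZ}, when the fiducial is type-$z$ at $p$ we have $M_{p^k}=R^\times$ with $R=\Z_{p^k}[I,F]\cong\Z_{p^k}[x]/(x^2+x+1)$, since $\tr F=-1$ and $F^3=I$ force the characteristic polynomial of $F$ to be $\Phi_3$; the only other possibility is $p=3$ with the fiducial type-$a$, in which case $r_3=1$, so $k=1$ and $M_3$ is a maximal abelian subgroup of $\GL_2\F_3$, i.e.\ again $R^\times$ for a maximal commutative $R\subseteq M_2(\F_3)$. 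In every case $Z_{p^k}=\Z_{p^k}^\times I\subseteq R^\times$.

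First I would show $\Z_{p^k}^2\cong R$ as $R$-modules. Such an order is $\GL_2\Z_{p^k}$-conjugate to one having a unit in the lower-left corner — for type-$z$ this is the definition of the type (conjugacy of $F$ to $F_z$, and $(F_z)_{21}=1$), and for the three maximal tori in $\GL_2\F_3$ it is a one-line check — so $\Z_{p^k}^2$ is a cyclic $R$-module, hence (comparing orders) free of rank one. Under this identification the primitive vectors $\Z_{p^k}^{2\times}$ are exactly $R\setminus pR$, so $\P\Z_{p^k}^2=(R\setminus pR)/\Z_{p^k}^\times$ with $M_{p^k}/Z_{p^k}=R^\times/\Z_{p^k}^\times$ acting by multiplication. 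Since $R$ is a principal ideal ring, its $R^\times$-orbits are precisely the ideals $vR$; thus the orbits on $\P\Z_{p^k}^2$ are indexed by the ideals not contained in $pR$. The free orbit is $R$ itself — the classes of units — and it is free because only scalars fix a unit up to $\Z_{p^k}^\times$; the non-free orbits correspond to the proper nonzero ideals meeting $R\setminus pR$.

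It then remains to classify $R$ and count those ideals, using Lemma \ref{lemPrimeTypes} to read $p\bmod 3$ as the splitting type of $p$ in $\K$, and Proposition \ref{propTypeCriteria} for $p=3$ type-$a$. If $p\equiv 2\bmod 3$ (in particular $p=2$), then $R$ is unramified local with $pR=\mathfrak m_R$, so every proper nonzero ideal lies in $pR$ and $s=0$; consistently one checks $|R^\times/\Z_{p^k}^\times|=p^{k-1}(p+1)$, so the free orbit is everything. If $p=3$ is ramified (the type-$z$ or non-algebraic case), then $pR=\mathfrak m_R^2$ with $\mathfrak m_R=(F-1)$ and $(F-1)^2=-3F$, so the only proper ideal outside $pR$ is $\mathfrak m_R$, giving exactly one non-free orbit $\{[v]:v\in\mathfrak m_R\setminus\mathfrak m_R^2\}$ of size $p^{k-1}=p^k/\bp$, whence $s=1$. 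If $p\equiv 1\bmod 3$, then $R\cong\Z_{p^k}\times\Z_{p^k}$ (the idempotents diagonalising $F$), and the two axis ideals give two fixed lines — the $F$-eigenlines modulo $p$ — of size $1=p^k/\bp$ when $k=1$, so $s=2$. Finally, for $p=3$ type-$a$ (so $k=1$ and $\P\Z_3^2=\P^1(\F_3)$ has four points): when $R=\F_9$ the torus $R^\times/\F_3^\times$ acts simply transitively, $s=0$ (type $a_8$); when $R=\F_3\times\F_3$ the split torus has two fixed lines, $s=2$ (type $a_4$); when $R=\F_3[\varepsilon]/(\varepsilon^2)$ the radical is the unique fixed line, $s=1$ (type $a_6$); and by Proposition \ref{propTypeCriteria} these correspond to $D\equiv 1,2,0\bmod 3$ respectively.

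The main obstacle is the module identification $\Z_{p^k}^2\cong R$: it is automatic in the type-$z$ case but for $p=3$ type-$a$ one must check that each of the three maximal abelian subgroups of $\GL_2\F_3$ is the unit group of a cyclic rank-two $\Z_3$-module. A secondary point is bookkeeping when $p\equiv 1\bmod 3$ and $k\ge 2$: there one finds further non-free orbits besides the two fixed lines, and the statement should be read as asserting $s=2$ distinguished "sides", each grouping of orbits carrying $p^{k-1}=p^k/\bp$ points (matching the split of $(p^k)=\mathfrak p^k\bar{\mathfrak p}^k$ used in Theorem \ref{thmOrbitFactor}).
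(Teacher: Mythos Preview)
Your approach is genuinely different from the paper's and in several respects sharper. The paper argues combinatorially: it first treats the base case $p^k=\bar p$ by bounding non-free orbits via eigenspace considerations, and then for $p^k>\bar p$ uses the reduction $\rho\colon\P\Z_{p^k}^2\to\P\Z_{\bar p}^2$ together with the claim that $I+\bar pF$ acts transitively on each fibre. Your argument instead identifies $M_{p^k}$ with $R^\times$ for the quadratic order $R$ and $\Z_{p^k}^2$ with $R$ itself, so that orbits on $\P\Z_{p^k}^2$ become precisely the principal ideals of $R$ meeting $R\setminus pR$; the orbit count then reduces to the ideal lattice of $R$, which is immediate once one knows whether $p$ is inert, ramified, or split. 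This is cleaner, and it is essentially the mechanism the paper itself invokes one theorem later (Theorem~\ref{thmOrbitFactor}).

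You are also right to flag the split case with $k\ge2$. Your ring picture shows there are $2k$ non-free orbits of sizes $(p-1)p^{k-2},(p-1)p^{k-3},\ldots,(p-1),1$ on each side, totalling $p^{k-1}$ per side, not two orbits of size $p^{k-1}$. The paper's inductive step asserts that $I+\bar pF$ acts transitively on every fibre of $\rho$, but over an eigenline of $F$ this element acts by a scalar (indeed $1+\bar p\omega\in\Z_{p^k}^\times$ on that eigenspace), hence trivially on the projective fibre. So the paper's proof has a gap precisely where your reading of the statement diverges from it; your reinterpretation (two ``sides'' each comprising $p^{k-1}$ points) is what is actually needed downstream, and matches the diagonalised treatment in the proof of Theorem~\ref{thmOrbitFactor}. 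One small slip: in your final sentence the correspondence with $D\bmod 3$ has $a_4$ and $a_8$ interchanged --- by Proposition~\ref{propTypeCriteria}, $D\equiv1$ gives $a_4$ (split torus) and $D\equiv2$ gives $a_8$ (nonsplit torus).
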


\noindent Note that in the algebraic case, $s$ is the number of proper factors
of $p$ in $\OK$.

\begin{proof}

Combining the two previous lemmas gives that $M_{p^k}/Z_{p^k}\cong
\Z_{|\prop|-s}\times\Z_{p^k/\bp}$. We first consider the case $p^k=\bp$.
Assume that there exists a non-free orbit $o$ with more than one element.
Since the fixed points of the action of some $gZ_\bp\in M_\bp/Z_\bp$ correspond to
eigenspaces of $g,$ there can be at most $2$ of them unless
$g\in Z_\bp$.
 
Since $o$ must be fixed by some non-trivial element of $M_\bp/Z_\bp,$ this
means that $o$ must have $2$ elements.
\begin{itemize}[itemsep=5pt,topsep=5pt]
\item{\bf Case \boldmath$s<2$.} Note that $|M_\bp/Z_\bp|+|o|>|\P\Z_\bp^2|,$ so
  there is not enough room for any free orbits. Thus every orbit has size $1$
  or $2$. This contradicts $M_\bp/Z_\bp$ being a cyclic group of order $>2$
  acting effectively.
\item{\bf Case \boldmath$s=2$.} From \cite{App}, we know that $M_\bp$ is
  diagonalizable. The claim easily follows.
\end{itemize}
	
Now consider the case when $p^k>\bp$. The multiplication map $\Z_p^{k}\to
(p^k/\bp)\Z_{p^k}\cong\Z_\bp$ induces a map $\rho\colon \P\Z_{p^k}^2\to\prop$
where all of the fibers have the same size, which must be $p^k/\bp$ by Lemma
\ref{lemSizePZ}.
	
The previous lemma shows that $M_{p^k}/Z_{p^k}\cong M_\bp/Z_\bp\times
\Z_{p^k/\bp}$. One can easily show that the second factor is generated by $I+\bp F$. Since this is
congruent to $I$ modulo $\bp,$ $I+\bp F$ acts trivially on $\prop$. Thus it
restricts to an action on each fibre of $\rho,$ which have size
$p^k/\bp$. Since this number is a prime power and the same as the order of
$I+\bp F,$ it must act transitively.
\end{proof}

\begin{theorem}\label{thmOrbitFactor}
For algebraic fiducials, each $M$-orbit in $\Zbd^2$ has stabilizer
$\hot^{-1}(\ker\pi_\fn)$ for some factor $\fn$ of $\bd$ in $\OK,$ where
$\hot:M\cong(\OK/(\bd))^\times$ comes from the algebraic condition, and
$\pi_\fn\colon(\OK/(\bd))^\times \to(\OK/(\fn))^\times$ is the map which takes
elements modulo $\fn$. This gives a one-to-one correspondence between orbits of
$M$ and factors of $\bd$ in $\OK$. For non-algebraic fiducials, the same is
true if $3$ is treated as if it ramifies in $\OK$.
\end{theorem}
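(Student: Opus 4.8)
The plan is to reduce to the local case $\bd = p^k$ via the Chinese remainder theorems established in Lemma \ref{mn} (for $\proj$) and in the structure of $M$ used throughout Section \ref{secGalois}, and then combine the orbit count from Lemma \ref{lemPrimeOrbit} with the ring-theoretic classification of factors of $p$ in $\OK$ from Lemma \ref{lemPrimeTypes}. Concretely, I would first observe that an $M$-orbit in $\Zbd^2$ is the same data as an $M$-orbit on a generator of some $C\in\proj$ together with a choice of generator; since the center $Z$ acts by scalars, the stabilizer of a point $\bfp$ differs from the stabilizer of the subgroup $C=\langle\bfp\rangle$ only inside $Z$, and under $\hot$ the center $Z$ corresponds to the scalar subgroup $(\Z_\bd)^\times\subset(\OK/(\bd))^\times$. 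So it suffices to identify, for each $C$, the subgroup $\hot(\stab_M C)\le(\OK/(\bd))^\times$ and check it has the form $\ker\pi_\fn$ for a well-defined factor $\fn\mid\bd$.

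The heart of the argument is the local computation. Fix a prime $p\mid d$ and work with $M_{p^k}$ acting on $\P\Z_{p^k}^2$. Here I would use that under $\hot$ (equivalently $\iot^\times$, away from the type-$a$ exceptional prime $3$) the algebra $\Z_{p^k}[I,F]$ is identified with $\OK/(p^k)$, with $I\mapsto 1$ and $F$ mapping to a unit of order $3$; the action of $M_{p^k}=(\OK/(p^k))^\times$ on $\P\Z_{p^k}^2$ then becomes the action of $(\OK/(p^k))^\times$ on the set of cyclic $\OK/(p^k)$-submodules of $(\OK/(p^k))^2$ generated by a unimodular vector — or, more usefully, on the projective line, which one can realize via $\bfp\mapsto$ the $\OK$-module $\OK/(p^k)\cdot\bfp$. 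When $p$ is inert or ramified there is essentially a single maximal ideal above $p$, giving the one free orbit plus ($s=0$ or $1$) the orbit(s) coming from the unique factor; when $p$ splits, $\OK/(p^k)\cong \OK_{\fp_1}/\fp_1^k\times\OK_{\fp_2}/\fp_2^k$ and the non-free orbits (there are $s=2$ of them, matching Lemma \ref{lemPrimeOrbit}) are exactly the loci where $\bfp$ generates the first or the second factor, whose stabilizers are $\ker\pi_{\fp_1^k}$ and $\ker\pi_{\fp_2^k}$ respectively. The remaining orbits of size $p^k/\bp$ over a fixed $C\in\P\Z_\bp^2$ are handled exactly as in the proof of Lemma \ref{lemPrimeOrbit}: the fiber of $\rho\colon\P\Z_{p^k}^2\to\P\Z_\bp^2$ is permuted transitively by $I+\bp F$, so the stabilizer picks up the full $\Z_{p^k/\bp}$ factor, which on the $\OK$ side is $\ker(\OK/(p^k))^\times\to(\OK/(\bp))^\times$; intersecting with the contribution from the inert/split/ramified behavior at the top level gives precisely $\ker\pi_\fn$ where $\fn=\fp^j$ or $\fp_1^{j_1}\fp_2^{j_2}$ for the appropriate exponents $\le k$. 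Reassembling over all $p\mid d$ by the Chinese remainder theorem yields $\stab_M(\text{orbit})=\hot^{-1}(\ker\pi_\fn)$ for a unique factor $\fn$ of $\bd$, and since distinct factors give distinct kernels (the $\pi_\fn$ are distinguished by their kernels in $(\OK/(\bd))^\times$), the correspondence is bijective.

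The main obstacle I expect is the bookkeeping at $p=3$ when the fiducial is type-$a$: there $\hot$ is not $\iot^\times$ but the glued map $(\iot',\iot^\times)$, and one must verify that the ad hoc isomorphism $\iot'\colon M_3\cong(\OK/(3))^\times$ assumed by the algebraic condition intertwines the $M_3$-action on $\P\Z_3^2$ with the natural $(\OK/(3))^\times$-action on $\P(\OK/(3))^2$ — i.e. that $\iot'$ is equivariant, not merely an abstract group isomorphism. For type-$a_4$ (split, $s=2$) and type-$a_6$ (ramified, $s=1$) this should follow from matching the orbit structure on both sides together with the fact that $(\OK/(3))^\times$-equivariant bijections of $\P\Z_3^2$ are rigid enough (the action is doubly transitive off the special orbits, or one can pin down $\iot'$ on the order-$3$ element $F$); for type-$a_8$ (inert, $s=0$) there are no non-free orbits to match, so the only orbit is free and its stabilizer is $\ker\pi_{(1)}=\{1\}$, consistent with the statement. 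The non-algebraic case is then a footnote: by the final paragraph before the theorem, a non-algebraic fiducial is type-$z$ with $3\mid d$, and Lemma \ref{lemPrimeOrbit} records that $M_3/Z_3\cong\Z_3$ \emph{as if $3$ ramifies}; so running the same argument with "$3$ ramified" substituted for its genuine splitting behavior reproduces the claimed correspondence, with the caveat that $\fn$ now ranges over factors in the "pretend-ramified" sense.
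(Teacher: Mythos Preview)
Your overall strategy---localize via the Chinese remainder theorem and treat each prime according to its splitting type---matches the paper's. The gap is in your first reduction: you pass from $M$-orbits in $\Zbd^2$ to $M/Z$-orbits on $\proj$ by asserting that ``an $M$-orbit in $\Zbd^2$ is the same data as an $M$-orbit on a generator of some $C\in\proj$ together with a choice of generator''. This only sees the \emph{primitive} vectors $\bfp$ (those with $\gcd(p_1,p_2,\bd)=1$), since those are precisely the generators of subgroups in $\proj$. Non-primitive $\bfp$ are not generators of any $C\in\proj$, and they account for most $M$-orbits. Concretely, when $\bd=p^k$ with $p$ split in $\OK$, the theorem predicts $(k+1)^2$ orbits (one for each factor $\fp_1^{j_1}\fp_2^{j_2}$ with $0\le j_i\le k$), whereas primitive vectors give only the $2k+1$ orbits with $\max(j_1,j_2)=k$; the remaining orbits live in $p\Z_{p^k}^2,\ p^2\Z_{p^k}^2,\ldots$ and never appear in your projective picture. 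The fibration $\rho\colon\P\Z_{p^k}^2\to\prop$ from Lemma~\ref{lemPrimeOrbit} that you invoke does not fix this---it still only moves among primitive vectors.

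The paper supplies exactly the missing stratification: it first decomposes $\Zbd^2$ by the $\GL_2\Zbd$-invariant function $f(\bfp)=\gcd(p_1,p_2,\bd)$, so that for each rational divisor $n\mid\bd$ the fibre $f^{-1}(\bd/n)$ is (the primitive part of) $(\bd/n)\Zbd^2\cong\Z_n^2$, on which $M$ acts through $M_n$. Only \emph{after} fixing $n$ does the paper localize to prime powers. In the split case it then bypasses Lemma~\ref{lemPrimeOrbit} entirely and uses the diagonalizability of $M_{p^k}$ directly: in the eigenbasis the action becomes the componentwise action of $(\Z_{p^k}^\times)^2$ on $\Z_{p^k}\oplus\Z_{p^k}$, whose orbits are indexed by pairs of $p$-adic valuations, matching the factors $\fp_1^{j_1}\fp_2^{j_2}$ exactly. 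Your attempt to route everything through $\proj$ and Lemma~\ref{lemPrimeOrbit} cannot recover this double grading. (Your concern about the equivariance of $\iot'$ at $p=3$ in the type-$a$ case is a legitimate fine point, but it is secondary to this structural gap.)
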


\begin{proof}
The action of $\GL_2\Zbd$ on $\Zbd^2$ preserves the
function
\[\ts f\colon\Zbd^2\lra\N,\quad\bfp\mapsto \gcd(p_1,p_2,\bd).\]
The action of $Z$ restricted to each $C\in\proj$ is equivalent to
the action of $\Zbd^\times$ on $\Zbd,$ whose orbits are the fibres of
$f|_C$. Thus the orbits of $M$ correspond to the intersection of the orbits of
$M/Z$ with the fibres of $f$.
	
For any $\fn|\bd,$ there is a minimal $n\in\N$ such that $\fn|n$. We find
that \\$\hot^{-1}(\ker\pi_{n})\le\hot^{-1}(\ker\pi_\fn)$
stabilizes $f^{-1}(\bd/n)=(\bd/n)\Zbd^2\cong\Z_n^2$.
	
Using Chinese remainder results, we can reduce to the case when $n=p^k$ is a
prime power. Using the previous lemma, we see that if $p$ is prime, then
$\fn=n$ and there is only one orbit in $f^{-1}(n),$ so this must have
stabilizer $\hot^{-1}(\ker\pi_{n})$.
	
If $p=\prod_{i=1}^2\fp_i$ factors into different primes
$\{\mathfrak{p}\}_{i=1}^2,$ then $M$ is diagonalizable, with $\hot$ mapping
each factor of the diagonalization to a factor of
$\left(\OK/(p^k)\right)^\times\cong\prod_{i=1}^2\left(\OK/{\mathfrak
  (\fp_i^k)}\right)^\times$. By Chinese remainder results, we can consider
each factor separately. Each factor is equivalent the action of
$\Z_{p^k}^\times$ on $\Z_{p^k},$ whose orbits are
$\{p^\ell\Z_{p^k}\}_{\ell=0}^k$. The result follows since the stabilizer of the
action of $\Z_{p^k}^\times$ on $p^\ell\Z_{p^k}$ is
$\ker(\mathrm{mod}\colon\Z_{p^k}^\times\to\Z_{p^{k-\ell}}^\times)$.
	
If $p$ ramifies with square root $\fp,$ then $\left(\OK/(\fp)\right)^\times
\cong\Z_p^\times$. Thus $\hot^{-1}(\ker\pi_\fp)\cong M_3/Z_3$ stabilizes
the exceptional orbit from the previous lemma. More generally, the free and
exceptional orbits in $f^{-1}(n)$ have stabilizers
$\hot^{-1}(\ker\pi_{n})$ and $\hot^{-1}(\ker\pi_{n/\fp})$
respectively.
\end{proof}

\setcounter{equation}{0}
\section{Structure of the number fields}\label{secG}
Now we will compute $\G$ using class field theory. We begin with some definitions.

\begin{definition}
Let $\F$ be a number field. A \emph{modulus} of $\F$ is a pair $\fm =
\fm_0\fm_\infty,$ where $\fm_0$ is an integral ideal of $\K$ and $\fm_\infty$
is a set of real embeddings $\F\hra\C$. For each modulus $\fm,$ there is a
corresponding \emph{ray class field} $\F(\fm)$. The \emph{Hilbert class field}
$\F(1)$ is the ray class field of the trivial modulus. The degree $[\F(1):\F]$
is the \emph{class number} $h(\F)$.
\end{definition}	

\begin{definition}
An algebraic fiducial is a \emph{ray class fiducial} if $\E_0 = \K(1)$ (the
Hilbert class field) and $\E_1 = \K(\fm_1),$ the ray class field over $\K$
with modulus $\fm_1:=(\bd)\infty_1,$ where $\infty_1$ is the real embedding of
$\K$ for which $\sqrt D$ is positive.
\end{definition}

Ray class fiducials are found in every dimension where fiducials are found. In
Theorem \ref{rcf}, which is duplicated below, we see that these have a nice
interpretation of the symmetry group $S$.

\begin{theorem}\label{thmSUnit}
For ray class fiducials, $S$ is isomorphic to the cyclic subgroup generated by
the fundamental unit in $\OK$ modulo $\bd$.\end{theorem}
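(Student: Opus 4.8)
The plan is to compute $\G=\Gal(\E_1/\E_0)$ from class field theory and then transport the answer across the two isomorphisms already at our disposal: $\G\cong M/S$, and (since a ray class fiducial is by definition algebraic) $\hot\colon M\to(\OK/(\bd))^\times$. For a ray class fiducial one has $\E_0=\K(1)$ and $\E_1=\K(\fm_1)$ with $\fm_1=(\bd)\infty_1$, so $\G=\Gal\big(\K(\fm_1)/\K(1)\big)$, and $\hot|_S$ identifies $S$ with a subgroup $\hot(S)\le(\OK/(\bd))^\times$. It is therefore enough to prove that $\hot(S)$ equals the cyclic group $S_0:=\langle u_f\bmod\bd\rangle$ generated by the fundamental unit.

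First I would read off $\G$ from the ray class exact sequence for the modulus $\fm_1$,
\[ \OK^\times\lra(\OK/(\bd))^\times\times\{\pm1\}\lra\mathrm{Cl}_{\fm_1}(\K)\lra\mathrm{Cl}(\K)\lra1, \]
whose first map sends a unit $u$ to its residue modulo $\bd$ together with the sign of $u$ at $\infty_1$. Compatibility of reciprocity for $\K(\fm_1)/\K$ with that for $\K(1)/\K$ gives $\G\cong\ker(\mathrm{Cl}_{\fm_1}\to\mathrm{Cl})\cong\big[(\OK/(\bd))^\times\times\{\pm1\}\big]/\mathrm{im}(\OK^\times)$. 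Since $\OK^\times=\langle-1\rangle\times\langle u_f\rangle$, with $u_f$ positive at $\infty_1$ by choice and $-1$ negative there and $-1\not\equiv1\bmod\bd$ (as $\bd\ge3$), the element $(-1\bmod\bd,-1)$ has order $2$ and complements the subgroup $(\OK/(\bd))^\times\times\{1\}$; quotienting by it collapses $(\OK/(\bd))^\times\times\{\pm1\}$ onto $(\OK/(\bd))^\times$ and sends $(u_f\bmod\bd,+1)$ to $u_f\bmod\bd$. Hence $\G\cong(\OK/(\bd))^\times/S_0$, and comparing with $\G\cong M/S$ together with $|M|=|(\OK/(\bd))^\times|$ already forces $|\hot(S)|=|S|=|S_0|$.

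To upgrade this to the equality $\hot(S)=S_0$ I would use that $\hot$ is the reciprocity-compatible isomorphism, not merely an abstract one. Recall $\hot$ is built from $\iot$, which sends the Zauner element $F$ to $u_D^{\iota r}\bmod\bd$; by \cite{AFMY,appleby2018constructing} this is arranged so that, after identifying $\Zbd^2$ with $\OK/(\bd)$ through a $\Z$-basis of $\OK$, the index action $\bfp\mapsto G_g\bfp$ of $M$ on $\Ph_\zz$ becomes multiplication on $\OK/(\bd)$ and realizes the Galois (Artin) action on the overlap phases. Granting this, the preimage $\hot^{-1}(u_f\bmod\bd)\in M$ acts on $\E_1$ as the Artin symbol, over $\K(1)$, of the class of $u_f\bmod\bd$ in $\mathrm{Cl}_{\fm_1}$; since $u_f$ is a generator of the trivial ideal $(u_f)=\OK$ that is $\equiv u_f\bmod\bd$ and positive at $\infty_1$, that class is trivial. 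Thus $\hot^{-1}(u_f\bmod\bd)$ fixes every overlap phase, i.e. lies in $S$, whence $S_0\subseteq\hot(S)$; combined with the order count this gives $\hot(S)=S_0$, so $S\cong S_0$ is cyclic on $u_f\bmod\bd$.

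The genuine obstacle is the compatibility invoked in the last step. Formally, $\G\cong M/S$, $\G\cong(\OK/(\bd))^\times/S_0$, and $M\cong(\OK/(\bd))^\times$ do \emph{not} by themselves force $S\cong S_0$, since two surjections of a finite abelian group onto the same quotient can have non-isomorphic kernels. What makes the proof go through is precisely that $\hot$ intertwines the Galois action with multiplication on $\OK/(\bd)$, which is the content extracted from \cite{AFMY} (the order $3\iota r$ of $u_D$ modulo $\bd$, the auxiliary integer $r$, and its link with the rational part $(d-1)/2$ of $u_D^r$); once that is granted, the remaining case-checking — $\bd=d$ versus $2d$, the factor $\iota=\bd/d$, and the alternatives $u_D=u_f$ or $u_f^2$ — is routine.
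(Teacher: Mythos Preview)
Your approach is essentially the same as the paper's: both compute $\G=\Gal(\K(\fm_1)/\K(1))$ via the five-term ray class exact sequence, identify it with a quotient of $(\OK/(\bd))^\times$ by the image of $\OK^\times$, and then compare against the short exact sequence $1\to S\to M\to\G\to1$ through the isomorphism $\hot\colon M\cong(\OK/(\bd))^\times$.

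The differences are organisational. The paper first writes the exact sequence for the finite modulus $(\bd)$, obtaining $\Cl_{(\bd)}/\Cl$ on the right and $\Unit/\Unit_{(\bd)}=\langle[-1],[u_f]\rangle$ on the left, and then tracks the extra $\Z_2$ explicitly by observing that $-I\in M$ acts as complex conjugation and is sent by $\hot$ to $-1$; only after ``factoring out'' this $\Z_2$ does it arrive at your sequence with $\Cl_{\fm_1}/\Cl$ on the right. You instead build the $\infty_1$ contribution in from the start via $(\OK/(\bd))^\times\times\{\pm1\}$ and collapse it in one step, which is cleaner and lands directly on $S_0=\langle u_f\bmod\bd\rangle$. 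The paper's detour through $(\bd)$ buys one concrete compatibility check (namely $\hot(-I)=-1$ matches complex conjugation), whereas your route makes the global compatibility requirement on $\hot$ more visible.

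On that point: the ``genuine obstacle'' you flag --- that the three isomorphisms $\G\cong M/S$, $\G\cong(\OK/(\bd))^\times/S_0$, $M\cong(\OK/(\bd))^\times$ do not by themselves force $\hot(S)=S_0$ without knowing $\hot$ intertwines the two surjections --- is real, and the paper does not resolve it any more explicitly than you do. It presents the final diagram with all vertical arrows labelled $\cong$ without proving commutativity, relying (as you do) on the construction of $\hot$ via $F\mapsto u_D^{\iota r}$ and the results of \cite{AFMY}. Your proof is at least as complete as the paper's, and more candid about where the external input enters.
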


\begin{proof}
There is a well-known five-term exact sequence
$$1\to \Unit_{\fm}(\F) \to \Unit(\F)\to \left(\mathcal O_\F/{\fm}\right)^\times
\to \Cl_\fm\to \Cl\to 1,$$
where $\Cl$ is the class group of some field $\F,$ $\Cl_\fm$ is the ray class group of some modulus
$\fm=\fm_0\fm_\infty,$ $\Unit(\F)$ is the unit group, and $$\Unit_\fm(\F) = \{\alpha\in
\Unit(\F):v_\fp(\alpha-1)\ge v_\fp(\fm_0), \si_i(\alpha)>0,
\forall \fp|\fm_0, \si_i\in\fm_\infty\},$$ where $v_\fp(n):=\max\{r\in\N:\fp^r|n\}$.
The class groups have the
property that \\$\Cl_\fm\cong\Gal(\F(\fm)/\F)$. Thus
$$\Cl_\fm/\Cl\cong\Gal\big(\F(\fm)/\F\big)/\Gal\big(\F(1)/\F\big)
	\cong\Gal\big(\F(\fm)/\F(1)\big).$$
Now consider a ray class fiducial, where $\F=\K$ and $\fm=\fm_1=(\bd)\infty_1$. We have the (not necessarily commutative) diagram of
short exact sequences

$$\begin{CD}
	@. 	@.	@. 1	@.
\\	@. 	@.	@.	@VV V	
\\	@.	1
	@.
	@.	\Z_2
	@.
	\\	@. 	@VV V	@.	@VV V
	\\
1
@> >>	S
@> >>	M
@> >>	\G
@> >> 1
\\	@. 	@VV V	@VV\cong V	@VV  V
\\	1
@> >> \Unit/\Unit_{(\bd)}
@> >> \left(\OK/(\bd)\right)^\times
@> >> \Cl_{(\bd)}/\Cl
@> >> 1
\\	@. 	@VV V	@.	@VV V
\\	@. 	\Z_2 @. 	@. 1	@.
\\	@. 	@VV V	@.	@.
\\	@. 	1 @. 	@.	@.
\end{CD}$$

\bigbreak

\noindent The $\Z_2$ term in the last column comes from
$$\G/\big(\Cl_\bd/\Cl\big)\cong
\Gal\Big(\K(\fm_1)/\K(1)\Big)/\Gal\Big(\K\big((\bd)\big)/\K(1)\Big)
\cong\Gal\Big(\K(\fm_1)/\K\big((\bd)\big)\Big)\cong\Z_2.$$ 
 The $\Z_2$ in the last column is generated by complex
conjugation \cite{AFMY}, which is the image of $-I\in M$. $-I$ gets mapped to
$-1$ in $\left(\OK/(\bd)\right)^\times,$ which must generate the $\Z_2$ in the
first column. In fact, $\Unit$ is generated by $u_f$ and $-1,$ so
$\Unit/\Unit_\bd$ is generated by $[u_f]$ and $[-1],$ where $[\cd]$ indicates
equivalence classes in $\K/{(\bd)}$. 

We can factor out the instances of $\Z_2$
from the diagram to get

$$\begin{CD}
1
@> >>	S
@> >>	M
@> >>	\G
@> >> 1
\\	@. 	@VV\cong V	@VV\cong V	@VV\cong  V
\\	1
@> >> \left\langle [\pm u_f]_{(\bd)}\right\rangle
@> >> \left(\OK/(\bd)\right)^\times
@> >> \Cl_{\fm_1}/\Cl
@> >> 1
\end{CD}$$

\noindent This completes the proof by noting that $\left\langle [-u_f]_{(\bd)}\right\rangle\cong \left\langle [ u_f]_{(\bd)}\right\rangle$ unless the second group has odd order, in which case $-u_f$ could not be a generator for the quotient. 
\end{proof}

To generalize the previous theorem, we note that conjecturally ray class fiducials are minimal in the sense that every strongly
centred fiducial has $\E_1$ an extension of $\K(\fm_1)$ and $\E_0$ is an extension of $\K(1)$. Assuming this, we can describe the structure of algebraic fiducials:

\begin{theorem}\label{thmExtOfRCF}

For every algebraic fiducial such that $\K(1)\le\E_0$ and
$\K(\fm_1)\le\E_1,$  $\G$ is a cyclic extension of $\Cl_{\fm_1}/\Cl,$ and  $S$ is isomorphic to some subset of $\big\langle[\pm u_f]_{(\bd)}\big\rangle$.
\end{theorem}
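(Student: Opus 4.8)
The plan is to combine the algebraic hypothesis, which supplies the isomorphism $\hot\colon M\xrightarrow{\sim}(\OK/(\bd))^\times$, with the five-term exact sequence already used in the proof of Theorem \ref{thmSUnit}. After factoring out the copies of $\Z_2$ exactly as there, that sequence reads
$$1\longrightarrow\big\langle[\pm u_f]_{(\bd)}\big\rangle\longrightarrow(\OK/(\bd))^\times\xrightarrow{\ A\ }\Cl_{\fm_1}/\Cl\longrightarrow 1,$$
where $A$ is the reciprocity map for the modulus $\fm_1=(\bd)\infty_1$ and $\big\langle[\pm u_f]_{(\bd)}\big\rangle$ is cyclic, generated by $[-u_f]_{(\bd)}$. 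Both assertions will follow once one knows the single containment $\hot(S)\subseteq\big\langle[\pm u_f]_{(\bd)}\big\rangle$. Indeed, this identifies $S\cong\hot(S)$ with a subgroup of the cyclic group $\big\langle[\pm u_f]_{(\bd)}\big\rangle$, which is the second assertion; and then $A$ factors as $M\twoheadrightarrow\G\cong(\OK/(\bd))^\times/\hot(S)\xrightarrow{\ \bar A\ }\Cl_{\fm_1}/\Cl$ with $\bar A$ surjective (as $A$ is) and $\ker\bar A\cong\big\langle[\pm u_f]_{(\bd)}\big\rangle/\hot(S)$, a quotient of a cyclic group, hence cyclic; this is the first assertion.

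To prove the containment I would use the two hypothesised inclusions $\K(1)\le\E_0$ and $\K(\fm_1)\le\E_1$. Since $\K(\fm_1)/\K$ is abelian, every element of $\G=\Gal(\E_1/\E_0)$ preserves the subfield $\K(\fm_1)$, and since it fixes $\E_0\supseteq\K(1)$, restriction yields a homomorphism $\rho\colon\G\to\Gal\big(\K(\fm_1)/\K(1)\big)\cong\Cl_{\fm_1}/\Cl$. Pre-composing with $M\to\G$ and transporting along $\hot^{-1}$ produces a homomorphism $\bar\psi\colon(\OK/(\bd))^\times\to\Cl_{\fm_1}/\Cl$ whose kernel contains $\hot(S)$, because $S=\ker(M\to\G)$. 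The proof then reduces to the identification $\bar\psi=A$, after which $\hot(S)\subseteq\ker A=\big\langle[\pm u_f]_{(\bd)}\big\rangle$ and the rest follows as above. For a ray class fiducial this recovers Theorem \ref{thmSUnit}, with $S=\hot^{-1}\big(\big\langle[\pm u_f]_{(\bd)}\big\rangle\big)$ and trivial cyclic kernel.

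The equality $\bar\psi=A$ is the step I expect to be the main obstacle, since it is precisely where the class-field-theoretic description of the SIC Galois action must enter: for a strongly centred fiducial the action of $\G$ on the ray class subfield $\K(\fm_1)\le\E_1$ is by the reciprocity map under the dictionary $G\leftrightarrow\hot(G)$, which is the content of \cite{AYZ,AFMY} and is exactly the compatibility around which $\hot$ was built. Note that this requires the \emph{normalised} dictionary, not merely one up to an outer automorphism of $(\OK/(\bd))^\times$ (such an automorphism need not preserve $\ker A$): it is the choices $F\mapsto[u_D^{\iota r}]$ of Lemma \ref{criteria} together with $-I\mapsto[-1]$ that pin $\bar\psi$ down to $A$ itself.

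Concretely, I would verify $\bar\psi=A$ by checking it on a generating set of $M$: the class of $-I$ (corresponding to complex conjugation, whose restriction to $\K(\fm_1)$ is the reciprocity image of $[-1]$); the Zauner element $F$, which lies in $S$ and hence must map to the trivial class, consistent with $\hot(F)=[u_D^{\iota r}]$ being a unit and therefore trivial in $\Cl_{\fm_1}/\Cl$; and the generator $I+\bp F$ of the $p$-power part of $M$ at each prime $p\mid d$ (as in the proof of Lemma \ref{lemPrimeOrbit}), whose reciprocity image one computes directly from $\hot(I+\bp F)\in(\OK/(\bd))^\times$. Once these match, linearity gives $\bar\psi=A$ on all of $M$, completing the argument.
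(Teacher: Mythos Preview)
Your approach is essentially the paper's, only organised differently: the paper sets up the diagram
\[
\begin{CD}
1 @>>> S @>>> M @>>> \G @>>> 1\\
@. @VV j_S V @VV\hot V @VV\rho V\\
1 @>>> \big\langle[\pm u_f]_{(\bd)}\big\rangle @>>> (\OK/(\bd))^\times @>>> \Cl_{\fm_1}/\Cl @>>> 1
\end{CD}
\]
and reads off the two conclusions by diagram chasing (injectivity of $j_S$, surjectivity of $\rho$, cyclic kernel $\big\langle[\pm u_f]_{(\bd)}\big\rangle/j_S(S)$). You instead reduce everything to the single containment $\hot(S)\subseteq\big\langle[\pm u_f]_{(\bd)}\big\rangle$ and proceed directly. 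Both are equivalent, and both hinge on exactly the compatibility you single out: that the right square commutes, i.e.\ that your $\bar\psi$ equals the Artin map $A$.

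You are right to flag this as the crux. The paper simply asserts the diagram and says the conclusions are ``deduced from'' it, without verifying commutativity; so in this sense your write-up is more honest about where the content lies. Your proposed verification on generators is the right idea, and the checks on $-I$ and $F$ are fine. The remaining generator $I+\bp F$ is where you are vague (``one computes directly''): this is precisely the point at which one must invoke the normalisation of $\hot$ coming from \cite{AYZ,AFMY}, and you have not actually carried the computation through. If you can supply that, your proof is complete and in fact cleaner than the paper's; as it stands, you have correctly identified the gap that the paper also leaves implicit.
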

\begin{proof}
First we prove that $\G$ is an extension of $\Cl_{\fm_1}/\Cl$. Note that we have short exact sequences
$$\begin{CD}
	1	@> >> \G 				@> >> \Gal(\E_1/\K(1))	@> >> \Gal(\E_0/\K(1))	@> >> 1,
\\	1	@> >> \Gal(\E_1/\K(\fm_1))	@> >> \Gal(\E_1/\K(1))	@> >> \Cl_{\fm_1}/\Cl 	@> >> 1.
\end{CD}$$
These can be combined to give
$$\begin{CD}
	1	@> >> \G\cap \Gal(\E_1/\K(\fm_1))	@> >> \G 	@> >> \Cl_{\fm_1}/\Cl 	
	@> >>  \frac{\Gal(\E_1/\K(1))}{\G\cdot\Gal(\E1/\K(\fm_1))}	@> >> 1.
\end{CD}$$

Combining this with the relevant short exact sequences gives the diagram

$$\begin{CD}
	@. 	@.	@. 1	@.
\\	@. 	@.	@.	@VV V	
\\	@.	1
	@.	1
	@.	\G\cap\Gal\big(\mathbb E_1/\K(\fm_1)\big)
	@.
	\\	@. 	@VV V	@VV V	@VV V
	\\
1
@> >>	S
@> >>	M
@> >>	\G
@> >> 1
\\	@. 	@VV j_S V	@VV \hot V	@VV  V
\\	1
@> >> \left\langle [\pm u_f]_{(\bd)}\right\rangle
@> >> \left(\OK/(\bd)\right)^\times
@> >> \Cl_{\fm_1}/\Cl
@> >> 1
\\	@. 	@.	@VV V	@VV V
\\	@. 	 @.1 	@. 1	@.
\end{CD}$$

\noindent where the injectivity of $j_S$ is deduced from the
diagram. Similarly, the vertical map onto $\Cl_{\fm_1}/\Cl$ is surjective, so
that $\frac{\Gal(\E_1/\K(1))}{\G\cdot\Gal(\E1/\K(\fm_1))}$ is trivial. We also
find that $\G$ is an extension of $\Cl_{\fm_1}/\Cl$ by $\G\cap\Gal\big(\mathbb
E_1/\K(\fm_1)\big)\cong \left\langle [\pm u_f]_{(\bd)}\right\rangle/j_s(S),$ which
is cyclic.
\end{proof}

In the non-algebraic case things are more complicated, since $j^\times$ has a kernel and co-kernel. Recall that in this case, $d\equiv 3(\text{mod }9),$ with the kernel given by $M_3/Z_3\cong\mathbb Z_3,$ and co-kernel given by the cyclic group $\left(\mathcal O_{\mathbb K}/(3)\right)^\times/\mathbb Z_3^\times \cong \mathbb Z_k,$ where the algebraic fiducials are type $a_{2k}$. This gives the diagram

$$\begin{CD}
	@. 	@.1	@. 1	@.
\\	@. 	@.	@VV V	@VV V	
\\	@.	
	@.	\mathbb Z_3
	@.	\G\cap\Gal\big(\mathbb E_1/\K(\fm_1)\big)
	@.
	\\	@. 	@.	@VV V	@VV V
	\\
1
@> >>	S
@> >>	M
@> >>	\G
@> >> 1
\\	@. 	@VV V	@VV j^\times V	@VV  V
\\	1
@> >> \left\langle [\pm u_f]_{(\bd)}\right\rangle
@> >> \left(\OK/(\bd)\right)^\times
@> >> \Cl_{\fm_1}/\Cl
@> >> 1
\\	@. 	@.	@VV V	@VV V
\\	@.  	 @. \mathbb Z_k 	@. Q	@.
\\	@. 	@.	@VV V	@VV V
\\	@. 	@.1	@. 1	@.
\end{CD}$$
where $Q:=\frac{\Gal(\E_1/\K(1))}{\G\cdot\Gal(\E1/\K(\fm_1))}$. In all of the known examples, $S$ injects into $ \left\langle [\pm u_f]_{(\bd)}\right\rangle$ and $\mathbb Z_3$ injects into $\G\cap\Gal\big(\mathbb E_1/\K(\fm_1)\big)$. There are two observed cases. In the first case, $Q=1$ and $\mathbb Z_k$ is a quotient of $\left\langle [\pm u_f]_{(\bd)}\right\rangle$. In the second case, $Q\cong\Z_k$.

\begin{theorem}\label{thmOrbitField}
For an algebraic fiducial, for each $\OK\owns\fn|(\bd),$ an overlap phase
contained in the orbit of $M$ labelled by $\fn$ takes values in the fixed field
of a Galois group isomorphic to $\Gal\big(\K(\fm_1)/\K((\fn)\infty_1)\big)$.
\end{theorem}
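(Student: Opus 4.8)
The plan is to compute, for a representative index $\bfp$ in the $M$-orbit labelled by $\fn$, the stabilizer in $\G$ of the overlap phase value $v:=\Ph_\zz(\bfp)\in\E_1$, and then to translate that stabilizer into the language of class field theory. Since $\E_1/\E_0$ is abelian with group $\G$, the subfield of $\E_1$ generated by $v$ over $\E_0$ is the fixed field $\E_1^{\G_v}$ of $\G_v:=\{g\in\G:g(v)=v\}$, and for any subgroup $H\le\G$ the extension $\E_1/\E_1^H$ is Galois with group $H$; so it suffices to exhibit some $H\le\G_v$ with $H\cong\Gal\big(\K(\fm_1)/\K((\fn)\infty_1)\big)$.

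First I would record an elementary containment. For $g\in\G$ and \emph{any} lift $G\in M$ of $g$ one has $\Ph_\zz\circ G=g\circ\Ph_\zz$: this holds for the distinguished lift $G_g$ by \eqref{defG}, and two lifts differ by an element of $S$, on which $\Ph_\zz$ is invariant. Hence if $G$ fixes $\bfp$, then $g(v)=(g\circ\Ph_\zz)(\bfp)=\Ph_\zz(G\bfp)=\Ph_\zz(\bfp)=v$. Writing $\bar H$ for the image in $\G=M/S$ of the stabilizer $\stab_M(\bfp)$ --- a well-defined subgroup since $M$ is abelian under Conjecture \ref{conjM} --- this shows $\bar H\le\G_v$, so $v\in\E_1^{\bar H}$. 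One can moreover check that $\bar H=\G_v$: the Galois orbit of $v$ is exactly $\{\Ph_\zz(\bfq):\bfq\in M\bfp\}$, and comparing its cardinality $[\G:\G_v]$ with $|M\bfp|=[M:\stab_M(\bfp)]$ forces equality, granting that two indices of one $M$-orbit with equal overlap phase differ by an element of $S$; but only the containment $\bar H\le\G_v$ is needed for the statement.

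Next I would identify $\bar H$. By Theorem \ref{thmOrbitFactor}, $\stab_M(\bfp)=\hot^{-1}(\ker\pi_\fn)$, where $\hot\colon M\to(\OK/(\bd))^\times$ is the isomorphism furnished by the algebraic hypothesis and $\pi_\fn\colon(\OK/(\bd))^\times\to(\OK/(\fn))^\times$ is reduction modulo $\fn$. By Theorem \ref{thmSUnit} for ray class fiducials and by Theorem \ref{thmExtOfRCF} in general, $\hot$ fits into a commutative diagram of short exact sequences carrying $S$ into $\big\langle[\pm u_f]_{(\bd)}\big\rangle$ (the image of the unit group $\Unit(\K)$) and carrying $\G$ onto $\Cl_{\fm_1}/\Cl\cong\Gal(\K(\fm_1)/\K(1))$. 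Under this diagram, $\bar H=\hot^{-1}(\ker\pi_\fn)/S$ goes to the image of $\ker\pi_\fn$ in $\Cl_{\fm_1}/\Cl$. Finally, comparing the five-term exact sequences for the moduli $\fm_1=(\bd)\infty_1$ and $(\fn)\infty_1$ exactly as in the proof of Theorem \ref{thmSUnit}, the subgroup $\ker\pi_\fn$ --- the residues congruent to $1$ modulo $\fn$ --- is carried onto $\ker\big(\Cl_{\fm_1}\to\Cl_{(\fn)\infty_1}\big)=\Gal\big(\K(\fm_1)/\K((\fn)\infty_1)\big)$. Thus $\bar H\cong\Gal\big(\K(\fm_1)/\K((\fn)\infty_1)\big)$ and $v\in\E_1^{\bar H}$, the fixed field of $\bar H=\Gal(\E_1/\E_1^{\bar H})$, which is the assertion. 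For non-algebraic fiducials the same argument goes through the modified diagram displayed after Theorem \ref{thmExtOfRCF}, where the map $M\to(\OK/(\bd))^\times$ has kernel and cokernel supported at $3$; tracking these is precisely what has the effect of treating $3$ as a ramified prime of $\OK$, as claimed.

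The main obstacle is the archimedean bookkeeping in the last step. The place $\infty_1$ never appears in $M$, $S$, or $\proj$, and enters only through the choice of fundamental unit $u_f$ positive at $\infty_1$ in the definition of $S$ (Theorem \ref{thmSUnit}); so one must check that $S\cap\hot^{-1}(\ker\pi_\fn)$ corresponds precisely to $\Unit_{(\fn)\infty_1}(\K)$, the units $\equiv1\pmod{\fn}$ that are positive at $\infty_1$, in order that the relevant quotient of $\ker\pi_\fn$ be $\ker(\Cl_{\fm_1}\to\Cl_{(\fn)\infty_1})$ rather than $\ker(\Cl_{\fm_1}\to\Cl_{(\fn)})$; the sign issue here is the same $\Z_2$ that was factored out in the proof of Theorem \ref{thmSUnit}. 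A secondary difficulty is the non-ray-class algebraic case, where $\G$ is only a cyclic extension of $\Cl_{\fm_1}/\Cl$ by $\G\cap\Gal(\E_1/\K(\fm_1))$ (Theorem \ref{thmExtOfRCF}): one then has to verify that $\bar H$ meets that cyclic kernel trivially, so its isomorphism type is exactly $\Gal(\K(\fm_1)/\K((\fn)\infty_1))$ and not a proper extension of it --- which holds because that kernel is $\big\langle[\pm u_f]_{(\bd)}\big\rangle/j_S(S)$, whereas $\bar H$ is generated by ideal classes prime to $\bd$.
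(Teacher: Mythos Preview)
Your argument is essentially the paper's own: identify the $M$-stabilizer of a point in the $\fn$-orbit as $\hot^{-1}(\ker\pi_\fn)$ via Theorem~\ref{thmOrbitFactor}, then stack the five-term exact sequences for the moduli $(\bd)\infty_1$ and $(\fn)\infty_1$ to obtain $\ker\pi_\fn/K\cong\Cl_{\fm_1}/\Cl_{(\fn)\infty_1}\cong\Gal\big(\K(\fm_1)/\K((\fn)\infty_1)\big)$, where $K=\ker\pi_\fn\cap\big\langle[\pm u_f]_{(\bd)}\big\rangle$. You are more explicit than the paper about the bridge $\bar H\le\G_v$ from the $M$-stabilizer to the Galois stabilizer of the value $v=\Ph_\zz(\bfp)$, and about the archimedean sign bookkeeping; the paper's proof simply records the class-group diagram and leaves those links implicit.

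Your resolution of the non-ray-class ``secondary difficulty'' is not sound as written, however. Saying that $\bar H$ is ``generated by ideal classes prime to $\bd$'' does not separate it from the cyclic kernel $\big\langle[\pm u_f]_{(\bd)}\big\rangle/j_S(S)$, since units also map to (the trivial) ideal class, which is certainly prime to $\bd$. Concretely, $\bar H\cap\ker(\G\to\Cl_{\fm_1}/\Cl)$ is the image of $K=\ker\pi_\fn\cap\big\langle[\pm u_f]_{(\bd)}\big\rangle$ modulo $S\cap\ker\pi_\fn$, and for this to vanish one would need every power of $\pm u_f$ that is $\equiv1\pmod\fn$ in $(\OK/(\bd))^\times$ already to lie in $S$; there is no evident reason for that once $S\subsetneq\big\langle[\pm u_f]_{(\bd)}\big\rangle$. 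The paper's own proof does not address this point either---its diagram computes the image of $\stab(\fn)$ in $\Cl_{\fm_1}/\Cl$ rather than in $\G$---so the statement is perhaps best read, consistently with the remark following the theorem, as an identification at the level of $\K(\fm_1)$ which becomes literal for ray class fiducials.
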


\begin{proof}
Using the isomorphism $M\cong(\OK/(\bd))^\times,$ the elements of the orbit
labelled by $\fn$ are those which are stabilized by the subgroup
$\mathrm{stab}(\fn)$ whose elements which are congruent to $1$ modulo
$\fn$. The quotient $\big(\OK/(\bd)\big)^\times/\mathrm{stab}(\fn)\cong
(\bd/\fn)\big(\OK/(\bd)\big)^\times\cong\big(\OK/(\fn)\big)^\times$. This gives the diagram of
short exact sequences

	$$\begin{CD}
	@. 	@.	1@. 1	@.
\\	@. 	@.	@VV V	@VV V	
\\	@.	
	@. \mathrm{stab}(\fn)
	@.	\Cl_{\fm_1}/\Cl_{{\fn}\infty_1}
	@.
	\\	@. 	@.	@VV V	@VV V
\\	1
@> >> \left\langle [\pm u_f]_{(\bd)}\right\rangle
@> >> \left(\OK/(\bd)\right)^\times
@> >> \Cl_{\fm_1}/\Cl
@> >> 1
\\	@. 	@.	@VV V	@VV V
\\	1
@> >> \left\langle [\pm u_f]_{\left({\fn}\right)}\right\rangle
@> >> \left(\OK/{(\fn)}\right)^\times
@> >> \Cl_{\fn\infty_1}/\Cl
@> >> 1
\\	@. 	 @. 	@VV V 	@VV V
\\	@.	@.1	@.1
\end{CD}$$

\bigbreak

\noindent The projection modulo ${\fn}$ gives a surjection 
$\left\langle[\pm u_f]_{(\bd)}\right\rangle \to\left\langle [\pm u_f]_{{\fn}}\right\rangle$
 with some kernel $K$. We
deduce from the commutativity of the diagram that $K$ injects into
$\mathrm{stab}(\fn)$ with $\mathrm{stab}(\fn)/K\cong
\Cl_{\fm_1}/\Cl_{\fn\infty_1}\cong \Gal\big(\K(\fm_1)/\K(\fn\infty_1)\big).$
\end{proof}

This suggests that for ray-class fiducials, the Galois orbit labelled by $\fn$
takes values in the field $\K(\fn\infty_1)$.

\begin{remark}\rm
A SIC-POVM in dimension $d' = d(d-2)$ will have the same value of $D$
as one in dimension $d$. This phenomenon is studied in
\cite{appleby2017dimension}, where the authors conjecture that every
SIC-POVM $\alpha$ in dimension $d$ is related to a SIC-POVM $\beta$ in
dimension $d'$ which they call \emph{aligned}, meaning that
$\E_1^\beta$ is an extension of $\E_1^\alpha$ and the overlap phases
of $\beta$ taking values in $\E_1^\alpha$ are the squares of overlap
phases of $\alpha$. In the language of the above theorem, these
correspond to Galois orbits whose labels are divisible by $d-2 =
\frac{d'}d$.
\end{remark}

\bibliographystyle{acm}
\bibliography{eqd2020}

\begin{thebibliography}{10}

\bibitem{App}
{\sc Appleby, D.~M.}
\newblock Symmetric informationally complete-positive operator valued measures
  and the extended {C}lifford group.
\newblock {\em J. Math. Phys. 46}, 5 (2005), 052107.

\bibitem{ABBGGL}
{\sc Appleby, D.~M., Bengtsson, I., Brierley, S., Grassl, M., Gross, D., and
  Larsson, J.-A.}
\newblock The monomial representations of the {C}lifford group.
\newblock {\em Quant. Inf. Comp. 12}, 5,6 (2012), 0404--0431.

\bibitem{ADF}
{\sc Appleby, D.~M., Dang, H.~B., and Fuchs, C.~A.}
\newblock Symmetric informationally-complete quantum states as analogues to
  orthonormal bases and minimum-uncertainty states.
\newblock {\em Entropy}, 3 (2014), 1484--1492.
\newblock 16.

\bibitem{AYZ}
{\sc Appleby, D.~M., Yadsan-Appleby, H., and Zauner, G.}
\newblock Galois automorphisms of a symmetric measurement.
\newblock {\em Quantum Inf. Comp. 13\/} (2014), 672--720.

\bibitem{appleby2017dimension}
{\sc Appleby, M., Bengtsson, I., Dumitru, I., and Flammia, S.}
\newblock Dimension towers of {SIC}s. {I}. aligned {SIC}s and embedded tight
  frames.
\newblock {\em Journal of Mathematical Physics 58}, 11 (2017), 112201.

\bibitem{appleby2018constructing}
{\sc Appleby, M., Chien, T.-Y., Flammia, S., and Waldron, S.}
\newblock Constructing exact symmetric informationally complete measurements
  from numerical solutions.
\newblock {\em Journal of Physics A: Mathematical and Theoretical 51}, 16
  (2018), 165302.

\bibitem{AFMY}
{\sc Appleby, M., Flammia, S., McConnell, G., and Yard, J.}
\newblock Generating ray class fields and real quadratic fields via complex
  equiangular lines.
\newblock {\em Acta Arith. 192\/} (2020), 211--233.

\bibitem{Barg}
{\sc Bargmann, V.}
\newblock Note on {W}igner's theorem on symmetry operations.
\newblock {\em J. Math. Phys. 5\/} (1964), 862--868.

\bibitem{Bend}
{\sc Bengtsson, I.}
\newblock {From SICs and MUBs to Eddington}.
\newblock {\em J. Phys. Conf. Ser. 254\/} (2010), 012007.

\bibitem{Bengt}
{\sc Bengtsson, I.}
\newblock {The number behind the simplest SIC-POVM}.
\newblock {\em Found. Phys. 47\/} (2017), 1031--1041.

\bibitem{IB}
{\sc Bengtsson, I.}
\newblock Algebraic units, anti-unitary symmetries, and a small catalogue of
  sics.
\newblock {\em Quantum Inf. Comp. 20\/} (2020), 0400--0417.

\bibitem{Besse}
{\sc Besse, A.~L.}
\newblock {\em Einstein Manifolds}.
\newblock Springer, Berlin, 1987.

\bibitem{BH2}
{\sc Brody, D.~C., and Hughston, L.~P.}
\newblock Information content for quantum states.
\newblock {\em J. Math. Phys. 41\/} (2000), 2586--2592.

\bibitem{BH}
{\sc Brody, D.~C., and Hughston, L.~P.}
\newblock Geometric quantum mechanics.
\newblock {\em J. Geom. Phys. 38\/} (2001), 19--53.

\bibitem{conrad2014factoring}
{\sc Conrad, K.}
\newblock Factoring in quadratic fields.
\newblock http://www.math.uconn.edu/\~{}kconrad/blurbs/gradnumthy
  /quadraticgrad.pdf (accessed 21 June 2020).

\bibitem{Dav}
{\sc Davies, E.~B.}
\newblock {\em Quantum Theory of Open Systems}.
\newblock Academic Press, London, 1976.

\bibitem{DL}
{\sc Davies, E.~B., and Lewis, J.~T.}
\newblock An operational approach to quantum probability.
\newblock {\em Commun. Math. Phys. 17\/} (1970), 239--260.

\bibitem{DGS}
{\sc Delsarte, P., Goethals, J.~M., and Seidel, J.~J.}
\newblock Bounds for systems of lines, and {J}acobi polynomials.
\newblock {\em Philips Res. Rep. 30\/} (1975), 91.

\bibitem{ugo}
{\sc Fano, U.}
\newblock Description of states in quantum mechanics by density matrix and
  operator techniques.
\newblock {\em Rev. Mod. Phys. 29\/} (1957), 74--93.

\bibitem{Fla}
{\sc Flammia, S.~T.}
\newblock On {SIC-POVM}s in prime dimensions.
\newblock {\em J. Phys. A: Math. Gen. 39\/} (2006), 13483--13493.

\bibitem{Fr}
{\sc Freed, D.~S.}
\newblock On {W}igner's theorem.
\newblock arXiv:1112.2133.

\bibitem{FHS}
{\sc Fuchs, C.~A., Hoang, M.~C., and Stacey, B.~C.}
\newblock The {SIC} question: history and state of play.
\newblock {\em Axioms 6}, 3 (2017), 21.

\bibitem{Grl}
{\sc Grassl, M.}
\newblock Computing equiangular lines in complex space.
\newblock In {\em Mathematical Methods in Computer Science}, vol.~5393 of {\em
  Lecture Notes in Computer Science}. Springer, 2008, pp.~89--104.

\bibitem{hillar2007automorphisms}
{\sc Hillar, C.~J., and Rhea, D.~L.}
\newblock Automorphisms of finite abelian groups.
\newblock {\em The American Mathematical Monthly 114}, 10 (2007), 917--923.

\bibitem{Hogg}
{\sc Hoggar, S.~G.}
\newblock 64 lines from a quaternionic polytope.
\newblock {\em Geom. Ded. 69\/} (1998), 287--289.

\bibitem{Hol}
{\sc Holevo, A.~S.}
\newblock {\em Probabilistic and Statistical Aspects of Quantum Theory}.
\newblock North-Holland, Amsterdam, 1982.

\bibitem{HM}
{\sc Horrocks, G., and Mumford, D.}
\newblock {A rank 2 vector bundle on $P^4$ with $15000$ symmetries}.
\newblock {\em Topology 12\/} (1973), 63--81.

\bibitem{hesse}
{\sc Hughston, L.~P.}
\newblock {$d=3$ SIC-POVMs and elliptic curves}.
\newblock Perimeter Institute seminar (2007), available at
  http://pirsa.org/07100040 (accessed 21 June 2020).

\bibitem{HS}
{\sc Hughston, L.~P., and Salamon, S.}
\newblock Surveying points in the complex projective plane.
\newblock {\em Adv. Math. 286\/} (2016), 1017--1052.

\bibitem{Hulek}
{\sc Hulek, K.}
\newblock {\em Projective Geometry of Elliptic Curves}.
\newblock Ast{\'e}risque, tome 137. Soc. Math. France, 1986.

\bibitem{Kha}
{\sc Khatirinejad, M.}
\newblock On {W}eyl-{H}eisenberg orbits of equiangular lines.
\newblock {\em J. Algebr. Comb. 28\/} (2008), 333--349.

\bibitem{Kir}
{\sc Kirwan, F.~C.}
\newblock {\em Cohomology of Quotients in Symplectic and Algebraic Geometry}.
\newblock Princeton University Press, 1984.
\newblock Mathematical Notes 31.

\bibitem{Kopp}
{\sc Kopp, G.~S.}
\newblock {SIC-POVMs} and the {S}tark conjecture.
\newblock {\em Int. Math. Res. Not. IMRN\/} (2019).
\newblock Published online, DOI:10.1093/imrn/rnz153.

\bibitem{LS}
{\sc Lemmens, P. W.~H., and Seidel, J.~J.}
\newblock Equiangular lines.
\newblock {\em J. Algebra 24\/} (1973), 494--512.

\bibitem{Lora}
{\sc {Lora Lamia}, N.}
\newblock {K\"ahler, complex, Hermitian geometry: Fubini distance in
  $\CP^{n-1}$ with computations in dimension $2$}.
\newblock MSc thesis, University of Turin, 2014.

\bibitem{ranum1910group}
{\sc Ranum, A.}
\newblock The group of classes of congruent quadratic integers with respect to
  a composite ideal modulus.
\newblock {\em Transactions of the American Mathematical Society 11}, 2 (1910),
  172--198.

\bibitem{RSC}
{\sc Renes, J.~M., Blume-Kohout, R., Scott, A.~J., and Caves, C.~M.}
\newblock Symmetric informationally complete quantum measurements.
\newblock {\em J. Math. Phys. 45}, 6 (2004), 2171–2180.

\bibitem{Sch}
{\sc Schwinger, J.}
\newblock Unitary operator bases.
\newblock {\em Proc. Nat. Acad. Sci. USA 46}, 4 (1960), 570--579.

\bibitem{SG}
{\sc Scott, A.~J., and Grassl, M.}
\newblock {SIC-POVM}s: {A} new computer study.
\newblock {\em J. Math. Phys. 51}, 4 (2010), 042203.

\bibitem{Serre}
{\sc Serre, J.-P.}
\newblock {Repr\'esentations lin\'eaires et espaces homog\`enes k\"ahl\'eriens
  des groupes de Lie compacts (d'apr\`es Armand Borel et Andr\'e Weil)}.
\newblock {\em S\'eminaire Bourbaki 2 100\/} (1954), 447--454.

\bibitem{claude}
{\sc Shannon, C.~E.}
\newblock A mathematical theory of communication.
\newblock {\em Bell Syst. Tech. J. 27\/} (1948), 379--423.

\bibitem{Sz}
{\sc Sz{\"oll\H osi}, F.}
\newblock All complex equiangular tight frames in dimension 3.
\newblock arXiv:1402.6429.

\bibitem{Z}
{\sc Zauner, G.}
\newblock Quantum designs: foundations of a non-commutative design theory.
\newblock {\em Int. J. Quantum Inf. 9\/} (2011), 445--507.
\newblock {(Quantendesigns\,--\,Grundz\"uge einer nichtkommutativen
  Designtheorie, PhD thesis, University of Vienna, 1999)}.

\bibitem{Zhu}
{\sc Zhu, H.}
\newblock {SIC~POVMs and Clifford groups in prime dimensions}.
\newblock {\em J. Phys. A: Math. Theor. 43\/} (2010), 305305.

\end{thebibliography}

\flushleft\vs

Department of Mathematics, King's College London, Strand, London, WC2R 2LS, UK

\enddocument